\def\BState{\State\hskip-\ALG@thistlm}
\ifCLASSOPTIONcompsoc \usepackage[caption=false,font=normalsize,labelfont=sf,textfont=sf]{subfig}
\numberwithin{equation}{section}
\newcommand{\MAT}{\left[ \begin{array}}  
\newcommand{\mat}{\end{array} \right]}
\newtheorem{Lemma}{Lemma}[section]
\newtheorem{Theorem}{Theorem}[section]
\newtheorem{Proposition}{Proposition}
\def \minimize {\operatorname*{minimize}}
\def \tr {\operatorname*{Tr\ }}
\def \st {\operatorname*{subject\ to\ }}
\def \diag {\operatorname*{diag}}
\def \sign {\operatorname*{sign}}
\def \PPhi {\mathbf{\Phi}}
\def \pphi {\mathbf{\phi}}
\def \0 {\mathbf{0}}
\def \a {\bm{a}}
\def \F {\mathbf{F}}
\def \A {\mathbf{A}}
\def \b {\bm{b}}
\def \B {\mathbf{B}}
\def \c {\bm{c}}
\def \C {\mathbf{C}}
\def \D {\mathbf{D}}
\def \e {\bm{e}}
\def \E {\mathbf{E}}
\def \G {\mathbf{G}}
\def \h {\bm{h}}
\def \H {\mathbf{H}}
\def \I {\mathbf{I}}
\def \J {\mathbf{J}}
\def \L {\mathcal{L}}
\def \n {\bm{n}}
\def \P {\mathbf{P}}
\def \p {\bm{p}}
\def \R {\mathbf{R}}
\def \T {\mathbf{T}}
\def \w {\bm{w}}
\def \W {\mathbf{W}}
\def \x {\bm{x}}
\def \X {\mathbf{X}}
\def \y {\bm{y}}
\def \Y {\mathbf{Y}}
\def \z {\bm{z}}
\def \Z {\mathbf{Z}}
\begin{document}
%
\title{Simultaneous Sparse Recovery and Blind Demodulation}
%
%
%

\author{Youye~Xie,\IEEEmembership{~Student Member,~IEEE,}
        Michael~B.~Wakin,\IEEEmembership{~Senior Member,~IEEE,}
        and Gongguo~Tang\IEEEmembership{~Member,~IEEE}
        \\
        Department
of Electrical Engineering, Colorado School of Mines, USA
\thanks{The authors are with the Department
of Electrical Engineering, Colorado School of Mines, Golden, CO 80401, USA, e-mails: youyexie@mines.edu, mwakin@mines.edu@mines.edu, gtang@mines.edu.}}

%
%

\markboth{Journal of \LaTeX\ Class Files,~Vol.~14, No.~8, August~2015}%
{Shell \MakeLowercase{\textit{et al.}}: Bare Demo of IEEEtran.cls for IEEE Journals}
%



\maketitle

\begin{abstract}
The task of finding a sparse signal decomposition in an overcomplete dictionary is made more complicated when the signal undergoes an unknown modulation (or convolution in the complementary Fourier domain).
Such simultaneous sparse recovery and blind demodulation problems appear in many applications including medical imaging, super resolution, self-calibration, etc.
In this paper, we consider a more general sparse recovery and blind demodulation problem in which each atom comprising the signal undergoes a {\em distinct} modulation process.
Under the assumption that the modulating waveforms live in a known common subspace, we employ the lifting technique and recast this problem as the recovery of a column-wise sparse matrix from structured linear measurements.
In this framework, we accomplish sparse recovery and blind demodulation simultaneously by minimizing the induced atomic norm, which in this problem corresponds to the block $\ell_{1}$ norm minimization.
For perfect recovery in the noiseless case, we derive near optimal sample complexity bounds for Gaussian and random Fourier overcomplete dictionaries.
We also provide bounds on recovering the column-wise sparse matrix in the noisy case.
Numerical simulations illustrate and support our theoretical results.
\end{abstract}

\begin{IEEEkeywords}
Sparse recovery, blind demodulation, atomic norm minimization, sparse matrix recovery
\end{IEEEkeywords}

%
\IEEEpeerreviewmaketitle

\section{Introduction}

\subsection{Overview}

\footnote{Parts of the results in this paper were presented at the 44th International Conference on Acoustics, Speech, and Signal Processing (ICASSP), 2019 \cite{xie2019sparse}.}In classical sparse recovery and compressive sensing problems, a system observes $\y=\D\A\c\in\C^N$ where $\D$, $\A$, and $\c$ are the sensing matrix, dictionary matrix, and sparse signal coefficient vector, respectively. The goal is to recover the sparse vector $\c$ from the measurements $\y$. Usually $\D$ and $\A$ are known, but the whole system is under-determined. This model arises naturally in a wide range of applications such as medical imaging \cite{lustig2008compressed}, seismic imaging \cite{herrmann2008non}, video coding \cite{pudlewski2010performance}, and network traffic monitoring \cite{zhang2009spatio}.

In the special case where $\D$ is diagonal and contains a carrier signal or the Fourier coefficients of a known source signal along in its diagonal entries, $\y$ can be viewed as a modulated version of the signal $\A\c$ \cite{goldsmith2005wireless} or the Fourier transform of the convolution between two source signals \cite{ahmed2014blind}. Recovering $\c$ can thus be viewed as a demodulation (or deconvolution) problem. Unfortunately, in problems like super resolution \cite{yang2016super} and self-calibration \cite{ling2015self}, the modulation matrix $\D$ is unknown a priori, as it incorporates the unknown point spread functions or calibration parameters. Recovering $\D$ and $\c$ jointly is a simultaneous sparse recovery and blind demodulation problem.

In this paper, we consider a more general sparse recovery and blind demodulation problem in which each atom comprising the signal undergoes a {\em distinct} modulation process. Under the assumption that the modulating waveforms live in a known common subspace, we employ the lifting technique and recast this problem as the recovery of a column-wise sparse matrix from structured linear measurements. In this framework, we recover the sparse coefficient vector $\c$ and all of the modulating waveforms simultaneously by minimizing the induced atomic norm~\cite{chandrasekaran2012convex, xie2017radar}, which in this problem corresponds to the block $\ell_{1}$ norm minimization and we also refer to it as the $\ell_{2,1}$ norm minimization.


\subsection{Setup and Notation}

To better illustrate our main contributions and compare to related work, we first define our signal model and the corresponding atomic norm minimization problem.

Throughout this paper, we use bold uppercase, $\X$, bold lowercase, $\x$, and non-bold letters, $x$, to represent matrices, vectors, and scalars. We use $\bar{\cdot}$, $\cdot^H$ and $\cdot^T$ to denote respectively complex conjugate, matrix Hermitian, and matrix transpose. The symbol $C$ denotes a constant. $\X_T$ ($\x_T$, resp.) is a matrix (vector, resp.) that zeros out the columns (entries, resp.) not in $T$. We call $T$ the support of the matrix $\X$ (and vector $\x$), and we use $\tilde{\X}$ to denote the sub-matrix after removing the zero rows or columns in $\X$. $\sign(\x)=\x/||\x||_2$ when $||\x||_2\neq 0$ and $\mathbf{0}$ otherwise. $\sign(\X)=[\sign(\x_1),\cdots, \sign(\x_M)]$. We use $||\cdot||$ to indicate the spectral norm, which returns the maximum singular value of a matrix. The $\ell_{2,1}$ norm of a matrix $\X =[\x_1 \quad\cdots\quad \x_M]$, denoted by $||\X||_{2,1}$, is defined to be $\sum_{j=1}^M||\x_j||_2$. The inner product between vectors and matrices are defined as $\langle \x,\y\rangle = \y^H\x$ and $\langle \X,\Y\rangle = \tr\left(\Y^H\X\right)$ respectively.

\subsection{Problem Formulation}

In this paper, we study a generalized sparse recovery and blind demodulation problem in which the coefficient vector is unknown and each atom (column) of the dictionary undergoes an unknown modulation process. Specifically, we assume the system receives a composite signal
\begin{align}
\y=\sum^{M}_{j=1}c_j\D_j\a_j\in\C^N
\label{origin}
\end{align}
where $c_j\in\C$ is an unknown scalar, $\D_j\in\C^{N \times N}$ is an unknown diagonal modulation matrix, and $\a_j\in\C^{N}$ is the $j$-th atom from a known dictionary $\mathbf{A}= \begin{bmatrix} \a_1 & \a_2 & \cdots & \a_M \end{bmatrix} \in\C^{N\times M}$ with $N<M$. Our goal is to recover both $c_j$ and $\mathbf{D}_j$ for all $j$ from the measurement $\y$.

To make this problem well-posed, among the $M$ over-complete atoms, we assume only $J < M$ of them contribute to the observed signal; that is, at most $J$ coefficients $c_j$ are nonzero. We furthermore assume that each modulation matrix obeys a subspace constraint:
\begin{align}
\mathbf{D}_j = \diag(\mathbf{B}\h_j),
\label{eq:sc}
\end{align}
where $\mathbf{B}\in\C^{N\times K}$ ($N>K$) is a known basis for the $K$-dimensional subspace of possible modulating waveforms, and $\h_j\in\C^K$ is an unknown coefficient vector. Similar subspace assumptions have been made in deconvolution and demixing papers~\cite{ling2017blind, chi2016guaranteed}. With this assumption, recovering $c_j$ and $\mathbf{D}_j$ equals to recovering $c_j$ and $\h_j$. Since $c_j\D_j\a_j = c_j \diag(\mathbf{B}\h_j) \a_j = (k c_j) \diag(\mathbf{B}\left(\frac{1}{k}\h_j\right)) \a_j$ for any $k\neq0$, without loss of generality, we assume $\h_j$ has unit norm and $c_j\geq 0$ with its complex phase and sign absorbed by $\h_j$.

Define $\mathbf{B}^H = [\b'_1\quad \b'_2\quad \cdots\quad \b'_N]\in\C^{K\times N}$ and note that the $n$-th entry of the observed signal can be expressed as
\begin{equation}
\begin{aligned}
\y(n)&=\sum_{j=1}^Mc_j\bar{\a}_j^H\e_n\b_n'^H\h_j=\tr \left(\e_n\b_n'^H\sum_{j=1}^Mc_j\h_j\bar{\a}_j^H\right)\\
    &=\langle\sum_{j=1}^Mc_j\h_j\bar{\a}_j^H,\b'_n\e_n^H\rangle=\langle\G,\b'_n\e_n^H\rangle,
\label{L1}
\end{aligned}
\end{equation}
where $\G=\sum_{j=1}^Mc_j\h_j\bar{\a}_j^H$, and $\e_n$ is the $n$-th column of the $N\times N$ identity matrix. From (\ref{L1}), we see that the measurement vector $\y$ depends linearly on the matrix $\G$ which encodes all of the unknown parameters of interest. We denote this linear sensing process as $\y=\L'(\G)$ and recast the recovery problem as that of recovering $\G$ (and its components) from the linear measurements.

The unknown matrix $\G$ can be viewed as a linear combination of $J$ rank-$1$ matrices from the atomic set $\mathcal{A} := \{\h\bar{\a}^H:\bar{\a} \in\{\bar{\a}_1,...,\bar{\a}_M\},||\h||_2=1\}$ and thus we propose to recover $\G$ using the corresponding atomic norm minimization:
\begin{equation}
\begin{aligned}
&\minimize_{\G\in\C^{K\times N}} ||\G||_{\mathcal{A}}\quad\st \y=\L'(\G).
\label{atomic}
\end{aligned}
\end{equation}
The atomic norm appearing in~\eqref{atomic} is defined as $||\G||_{\mathcal{A}}:=\inf \{\sum_{k}|\tilde{c}_k|:\G=\sum_{k} \tilde{c}_k\J_k, \J_k\in\mathcal{A}\}$. Moreover, the following result establishes its equivalence with the $\ell_{2,1}$ norm.

\begin{Proposition}
The atomic norm optimization problem \eqref{atomic} can be equivalently expressed as the following $\ell_{2,1}$ norm optimization problem

\begin{equation}
\begin{aligned}
&\minimize_{\X\in\C^{K\times M}} ||\X||_{2,1}\quad\st \y=\mathcal{L}(\X)
\label{21}
\end{aligned}
\end{equation}
where $\X=[c_1\h_1\quad c_2\h_2\quad\cdots\quad c_M\h_M]\in\C^{K\times M}$ and $\L$ represents the following linear sensing process
\begin{equation}
\begin{aligned}
\y(n)&=\langle\X,\b'_n\e_n^H\bar{\A}\rangle=\b'^H_n\X \a'_n.
\label{L11}
\end{aligned}
\end{equation}
in which $\b'_n$ and $\a_n'$ are the $n$-th column of $\B^H$ and $\A^T$.
\end{Proposition}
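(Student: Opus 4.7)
My plan is to establish the equivalence in two stages: first match the feasibility constraints via the substitution $\G = \X\bar{\A}^H$, and then prove the norm identity $||\G||_{\mathcal{A}} = \min_{\X : \X\bar{\A}^H = \G}||\X||_{2,1}$. Throughout I will use that the atomic set $\mathcal{A}$ is closed under complex-phase rescaling (since $||e^{i\theta}\h||_2 = ||\h||_2 = 1$), so any expansion $\G = \sum_k \tilde{c}_k \h_k \bar{\a}_{j_k}^H$ can be rewritten with nonnegative $\tilde{c}_k$ by absorbing the phase of $\tilde{c}_k$ into $\h_k$.

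For the feasibility step I would substitute $\G = \X\bar{\A}^H$ into $\y(n) = \langle \G,\b'_n\e_n^H\rangle$ and compute
\begin{align*}
\y(n) = \tr(\e_n\b'^H_n \X \bar{\A}^H) = \b'^H_n \X \bar{\A}^H \e_n = \b'^H_n \X \a'_n,
\end{align*}
which matches \eqref{L11} since $\bar{\A}^H\e_n = \a'_n$ is the $n$-th column of $\A^T$. Hence $\y = \L'(\G)$ becomes $\y = \L(\X)$ under the change of variable, and the two feasibility sets agree after lifting. For the norm identity, the easy direction $||\G||_{\mathcal{A}} \leq ||\X||_{2,1}$ follows by writing each nonzero column as $\x_j = ||\x_j||_2 \h_j$ with $\h_j = \x_j/||\x_j||_2$ (and $\h_j$ an arbitrary unit vector when $\x_j = \0$), so that $\G = \sum_{j=1}^M ||\x_j||_2 \h_j \bar{\a}_j^H$ is a valid atomic expansion with coefficient sum exactly $||\X||_{2,1}$. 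For the reverse direction, I take any atomic expansion with $\tilde{c}_k \geq 0$, group terms by the atom index $j$, and set $\x_j := \sum_{k:j_k=j}\tilde{c}_k\h_k$. The triangle inequality gives $||\x_j||_2 \leq \sum_{k:j_k=j}\tilde{c}_k$, so $||\X||_{2,1} \leq \sum_k \tilde{c}_k$; taking the infimum over atomic expansions yields $\min_{\X\bar{\A}^H = \G}||\X||_{2,1} \leq ||\G||_{\mathcal{A}}$. Chaining the two optimizations then shows that \eqref{atomic} and \eqref{21} attain the same optimal value.

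The only subtlety I anticipate is that the lifting $\X \mapsto \X\bar{\A}^H$ is not injective when $M > N$, so a given $\G$ admits an affine family of preimages. This is not really an obstacle: the argument above shows precisely that the $\ell_{2,1}$ formulation selects the preimage of smallest $||\X||_{2,1}$, and that this minimum equals $||\G||_{\mathcal{A}}$. Consequently, any minimizer $\X^\star$ of \eqref{21} pushes forward to a minimizer $\G^\star = \X^\star\bar{\A}^H$ of \eqref{atomic}, and every minimizer of \eqref{atomic} arises in this way (modulo the kernel of the lifting map), which gives the claimed equivalence.
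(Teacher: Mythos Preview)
Your proposal is correct and follows essentially the same route as the paper: both arguments group an arbitrary atomic expansion by the dictionary index $j$, collapse the $j$-th group via the triangle inequality to a single column $\x_j$, and then identify $\sum_j|c_j|$ with $||\X||_{2,1}$; the feasibility match via $\G=\X\bar{\A}^H$ and the computation $\langle\G,\b'_n\e_n^H\rangle=\b'^H_n\X\a'_n$ are identical. Your treatment is slightly more explicit about the two inequalities and about the non-injectivity of $\X\mapsto\X\bar{\A}^H$, but there is no substantive difference in approach.
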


\begin{proof}
We first note that the atomic norm can be equivalently expressed as $||\G||_{\mathcal{A}}=\inf \{\sum_{j=1}^M|c_j|:\G=\sum_{j=1}^M c_j\h_j\bar{\a}_j^H,||\h_j||_2=1\}$. To see this, consider any decomposition of $\G$ of the form $\G=\sum_k \tilde{c}_k\J_k$ with $\J_k \in \mathcal{A}$. Define $\mathcal{N}_j=\{k:\J_k=\tilde{\h}_k\bar{\a}_j^H\}$ and write $\G = \sum_{j=1}^M(\sum_{k\in\mathcal{N}_j}\tilde{c}_k\tilde{\h}_k)\bar{\a}_j^H$. This is equivalent to writing $\G=\sum_{j=1}^M c_j\h_j\bar{\a}_j^H$ where $\h_j=\frac{\sum_{k\in\mathcal{N}_j}\tilde{c}_k\tilde{\h}_k}{||\sum_{k\in\mathcal{N}_j}\tilde{c}_k\tilde{\h}_k||_2}$ and $c_j=||\sum_{k\in\mathcal{N}_j}\tilde{c}_k\tilde{\h}_k||_2$. Finally, note that $|c_j|\leq \sum_{k\in\mathcal{N}_j}|\tilde{c}_k|$.

Next, to establish the equivalence with the $\ell_{2,1}$ norm, for any $c_j$ and $\h_j$ with $||\h_j||_2=1$, define $\mathbf{x}_j=c_j\mathbf{h}_j$ and $\X =[\mathbf{x}_1\quad \mathbf{x}_2\quad\cdots\quad \mathbf{x}_M]$. Then
\begin{equation*}
\begin{aligned}
&\quad||\G||_{\mathcal{A}}\\
&=\inf \Bigg\{\sum_{j=1}^M|c_j|:\G=\sum_{j=1}^Mc_j\h_j\bar{\a}_j^H,||\h_j||_2=1\Bigg\}\\
                   &=\inf \Bigg\{\sum_{j=1}^M||\x_j||_2:\G=\sum_{j=1}^M\x_j\bar{\a}_j^H\Bigg\}\\
                   &=\inf \big\{||\X||_{2,1}:\G=\X\bar{\A}^H\big\}.
\end{aligned}
\end{equation*}
Finally, to establish the equivalence of the linear sensing process, (\ref{L1}) indicates that for $\G = \X\bar{\A}^H$,
\begin{equation*}
\begin{aligned}
\y(n)&=\langle\G,\b'_n\e_n^H\rangle=\langle\X,\b'_n\e_n^H\bar{\A}\rangle=\b'^H_n\X \a'_n.
\end{aligned}
\end{equation*}
\end{proof}

The above optimization focuses on recovering the structured matrix $\X$ from linear measurements. Once the optimization is solved, the unknown parameters can be easily extracted from the solution $\hat{\X}$ as follows:
\begin{align}
c_j=||\hat{\x}_j||_2, \quad\h_j=\frac{\hat{\x}_j}{||\hat{\x}_j||_2}, \text{ and } \D_j=\text{diag}(\B\h_j)
\label{recover}
\end{align}
for $\hat{\x}_j\neq0$ and $1\leq j\leq M$.

The adjoint of the linear operator $\L$ is $\L^*(\y)=\sum_{l=1}^Ny_l\b'_l\a'^H_{l}$. The linear operator $\L$ also has a matrix-vector multiplication form. Note that $\L(\X)=\mathbf{\Phi}\cdot\text{vec}(\X)$, where $\mathbf{\Phi}\in\C^{N\times KM}$ is
\begin{equation}
\begin{aligned}
\mathbf{\Phi}=[ \mathbf{\phi}_{1,1}\quad\cdots\quad\pphi_{K,1}\quad\cdots\quad\pphi_{1,M}\quad\cdots\quad\pphi_{K,M} ]
\label{phi}
\end{aligned}
\end{equation}
in which $\pphi_{i,j}=\text{diag}(\b_i)\a_j\in\C^{N\times 1}$ and $\b_i$ is the $i$-th column of $\B$. Furthermore,
\begin{align}
\PPhi^H = [\pphi'_1\quad\pphi'_2\quad \cdots\quad\pphi'_N]\in\C^{KM\times N}
\label{phiH}
\end{align}
where $\mathbf{\phi}'_i=\bar{\a}'_i\otimes \b_i'\in\C^{KM\times1}$.

Finally, we note that the observed signal could be contaminated with noise. In this case, our measurement model becomes
\begin{align}
\y=\sum^{M}_{j=1}c_j\D_j\a_j+\n
\label{originnoisy}
\end{align}
for some unknown noise vector $\n\in\C^{N\times1}$ which we suppose satisfies $||\n||_2\leq \eta$. In this case, we can write $\y=\L(\X_0)+\n$, where $\X_0$ is the ground truth solution. As an alternative to equality-constrained $\ell_{2,1}$ norm minimization~\eqref{21}, we then consider the following relaxation:
\begin{equation}
\begin{aligned}
&\minimize_{\X\in\C^{K\times M}} ||\X||_{2,1}\quad\st ||\y-\L(\X)||_2\leq \eta.
\label{noisy}
\end{aligned}
\end{equation}

\subsection{Applications of The Proposed Signal Model\label{app}}

The proposed signal model encompasses a wide range of applications. We briefly introduce some of them as follows.

\subsubsection{Direction of arrival estimation for antenna array\label{DOA}}

We first consider the direction of arrival (DOA) estimation problem in antenna array. Assume we have a linear array antenna consisting of $N$ elements, and we want to estimate the DOAs of several sources from a snapshot of the received signal. In addition, we consider the narrowband scenario and confine the the array and the far-field sources to a common plane as described in \cite{malioutov2005sparse}. In this case, the DOA is determined by the azimuth angle, $\theta$, of the source, which ranges from $0$ to $180$ degrees. Mathematically, after discretizing the azimuth angle into $M$ grids, the observervation of the array can be represented as \cite{friedlander2014bilinear}
\begin{align*}
\y=\D\A(\theta)\c+\n\in\C^{N\times1}
\end{align*}
where $\D\in\C^{N\times N}$ is the diagonal matrix capturing the unknown calibration of the array elements \cite{ling2015self}. Particularly, the calibration issue may arise from gain discrepancies caused by the change of temperatures and humidity of the environment \cite{ling2015self}. Namely, the channel is not ideal. One can simulate different scenarios and collect many possible calibration vectors. By applying the singular value decomposition (SVD) on the matrix formed by those calibration vectors, we can then extract the subspace matrix, $\B$, with desired dimensions to approximate the calibration using $\D=\diag(\B\h)$ where $\h$ is the unknown coefficient vector. $\A(\theta)\in\C^{N\times M}$ is the known array manifold matrix whose columns $\a(\theta_j)$ for $j\in\{1,2,\cdots,M\}$ are the steering vectors. For uniformly spaced linear array antenna (ULA), $\a(\theta_j)=[1,e^{i\frac{2\pi d}{\lambda}\cos(\theta_j)},\cdots,e^{i(N-1)\frac{2\pi d}{\lambda}\cos(\theta_j)}]$ where $d$ is the distance between array elements and $\lambda$ is the radar operating wavelength \cite{ward1998space}. Moreover, the entries of $\c$ indicate the strength of the impinging signals and if there exists $J(<M)$ sources, only $J$ entries of $\c$ are nonzero. $\n$ consists of the discretization error, approximation error, and additive noise.

Furthermore, let us consider a more severe while realistic situation, where the calibration is sensitive to the direction of arrival which implies that the channel responses from different angles are slightly different. So that the calibration matrix, $\D$, are different for different $\theta_j$. In this case, we can write
\begin{align*}
\y=\sum^{M}_{j=1}c_j\D_j\a(\theta_j)+\n\in\C^{N\times1}.
\end{align*}

\subsubsection{Super-resolution for single molecule imaging\label{smi}}

Another application is the single molecule imaging \cite{tang2013sparse} via stochastic optical reconstruction microscopy (STORM) \cite{rust2006sub}. In this application, the cellular structure of the object of interest is dyed with fluorophores, and STORM divides the imaging process into thousands of cycles. Within each cycle or observation, only a portion of the fluorophores are activated and imaged. Therefore, a typical observation is a low-resolution frame with its activated fluorophores convolved with the non-stationary point spread functions of the microscope, which can be represented as
\begin{align*}
\y=Sample\left[\sum_{j=1}^M c_j(\B'\h_j)\circledast\e_j +\n'\right] \in\R^{N\times 1}
\end{align*}
where $\y\in\R^{N\times 1}$ is a vectorized, imaged frame downsampled from its super-resolution image with $M (>N)$ pixels, $c_j$ represents the intensity of the activated fluorophores, and $\B'$ is the subspace that the point spread functions live in. $\e_j\in\R^M$, which indicates the location of the activated flurophores, is the $j$-th column of the identity matrix and $\n'$ denotes the noise. Moreover, $\y$ can also be represented equivalently as
\begin{align*}
\y=Sample\Bigg\{IDFT\left[\sum_{j=1}^M c_j\D_j\a_j +\n\right]\Bigg\} \in\R^{N\times 1},
\end{align*}
where $IDFT[\cdot]$ is the inverse discrete Fourier transform (DFT) operator, $\D_j=\diag(\B\h_j)$ with $\B=DFT[\B']$, and $\a_j$s are the DFT of spikes containing the location information. $\n=DFT[\n']$. The goal of this application is to recover the super-resolution image from its low-resolution frame $\y$, or mathematically, locating the nonzero $c_j$.

\vspace{2mm}

Other applications that fit into the model investigated in this work include frequency estimation with damping that appears in nuclear magnetic resonance spectroscopy \cite{cai2016robust} with damping signals approximately living in a common subspace \cite{yang2016super} and the CDMA system with spreading sequence sensitive channel as described in Section 6.4 of \cite{ling2015self}.

\subsection{Main Contributions}

Our contributions are twofold.
First, we employ $\ell_{2,1}$ norm minimization to achieve sparse recovery and blind demodulation simultaneously given the generalized signal model from equation (\ref{origin}).
Second, for perfect recovery of all parameters in the noiseless case, we derive near optimal sample complexity bounds for the cases where $\A$ is a random Gaussian and a random subsampled Fourier dictionary.
Both of bounds require the number of measurements $N$ to be proportional to the number of degrees of freedom, $O(JK)$, up to log factors.
We also provide bounds on recovering the column-wise sparse matrix in the noisy case; these bounds show that the recovery error scales linearly with respect to the strength of the noise.

\subsection{Related Work}

The $\ell_{2,1}$ norm has been widely used to promote sparse recovery in multiple measurement vector (MMV) problems \cite{cotter2005sparse,chen2005sparse}. The MMV problem involves a collection of sparse signal vectors that are stacked as the rows of a matrix $\X$. These signals have a common sparsity pattern, which results in a column-wise sparse structure for $\X$. As
in our setup, the $\ell_{2,1}$ norm is used to recover $\X$ from linear measurements of the form $\y =\PPhi_{MMV} \cdot\text{vec}(\X^T)$. However, $\PPhi_{MMV}$ has a block diagonal structure where all diagonal sub-matrices are the same which is the dictionary matrix. This is different from the structure of the linear measurements in our problem; see for example~\eqref{phi}.

Our work is also closely related to certain recent works in model-based deconvolution, self-calibration, and demixing.
When all $\D_j$ in (\ref{origin}) are the same, our signal model coincides with the self-calibration problem in \cite{ling2015self}, although that work employs $\ell_1$ norm minimization rather than $\ell_{2,1}$ norm minimization to recover $\X$.
A more recent paper \cite{hung2017low} does apply the $\ell_{2,1}$ norm for the self-calibration problem but again assumes a common modulation matrix $\D$.
The paper \cite{ling2017blind} generalizes the work of \cite{ling2015self} and considers a blind deconvolution and demixing problem which can be interpreted as the self-calibration scenario with multiple sensors whose calibration parameters might be different. However, the signal model in that paper is not directly comparable to our model, and the recovery approach studied in that paper involves nuclear norm minimization and requires knowledge of the number of sensors.
A blind sparse spike deconvolution is studied in \cite{chi2016guaranteed}, wherein the dictionary consists of sampled complex sinusoids over a continuous frequency range and all atoms undergo the same modulation. Inspired by \cite{chi2016guaranteed}, \cite{yang2016super} generalizes the model to the case of different modulating waveforms. Like \cite{chi2016guaranteed}, however, \cite{yang2016super} also considers a sampled sinusoid dictionary over a continuous frequency range, and it employs a random sign assumption on the coefficient vectors $\h_j$ which makes it difficult to derive recovery guarantees with noisy measurements. More works considering a common modulation process can be found in \cite{ahmed2014blind, eldar2017sensor, flinth2018sparse}.

Our work can be viewed as a generalization of the self-calibration \cite{ling2015self} and blind deconvolution problems \cite{ahmed2014blind}. Moreover, our analysis is quite different from the works considering the continuous sinusoid dictionary \cite{chi2016guaranteed, yang2016super}, since the tools in those papers are specialized to the continuous sinusoids dictionary and we consider discrete Gaussian and random Fourier dictionaries in both noiseless and noisy settings.

\vspace{2mm}

The rest of the paper is organized as follows. In Section~\ref{sec:mainresults}, we present our main theorems regarding perfect parameter recovery in the noiseless setting and matrix denoising in the noisy setting. Sections~\ref{sec:proof1} and~\ref{sec:proof2} contain the detailed proofs of the main theorems. Several numerical simulations are provided in Section~\ref{sec:sims} to illustrate the critical scaling relationships, and we conclude in Section~\ref{sec:conclusion}.

\section{Main Results}
\label{sec:mainresults}

We present our main theorems in this section. In each of the noisless and noisy cases, we consider two models for the dictionary matrix $\A$. In the first model, $\A\in\R^{N\times M}$ is a real-valued random Gaussian matrix, with each entry sampled independently from the standard normal distribution. In the second model, $\A\in \C^{N\times M}$ is a complex-valued random Fourier matrix, with each of its $N (< M)$ rows chosen uniformly with replacement from the $M\times M$ discrete Fourier transform matrix $\F$ where $\F^H\F=M\I_M$. Our first theorem concerns perfect parameter recovery in the noiseless setting.

\begin{Theorem}
(Noiseless case) Consider the measurement model in equation (\ref{origin}), assume that at most $J (< M)$ coefficients $c_j$ are nonzero, and furthermore assume that the nonzero coefficients $c_j$ are real-valued and positive. Suppose that each modulation matrix $\D_j$ satisfies the subspace constraint~\eqref{eq:sc}, where $\B^H\B=\I_K$ and each $\h_j$ has unit norm \footnote{Theorem \ref{maintheom1} actually works for $\h_j$ with arbitrary norms as long as the relative scale between $c_j$ and $\h_j$ is known. }.

Then the solution $\hat{\X}$ to problem \eqref{21} is the ground truth solution $\X_0$---which means that $c_j$, $\h_j$, and $\D_j$ can all be successfully recovered for each $j$ using \eqref{recover}---with probability at least $1-O(N^{-\alpha+1})$
\begin{itemize}
\item if $\A\in\R^{N\times M}$ is a random Gaussian matrix and
\begin{align}
\frac{N}{\log^2(N)}\geq C_{\alpha}\mu_{max}^2KJ(\log(M-J)+\log(N)).
\label{maintheom11}
\end{align}
\item if $\A\in\C^{N\times M}$ is a random Fourier matrix and
\begin{equation}
\begin{aligned}
N\geq &\; C_{\alpha}\mu_{max}^2KJ\log(4\sqrt{2J}\gamma)\cdot\\
&\qquad\qquad(\log(M-J)+\log(K+1)+\log(N))
\label{maintheom12}
\end{aligned}
\end{equation}
where $\gamma=\sqrt{2M\log(2KM)+2M+1}$.
\end{itemize}
In both cases, $C_\alpha$ is a constant defined for $\alpha> 1$ and the coherence parameter
\begin{align*}
\mu_{max} = \max_{i,j}\sqrt{N}|\B_{ij}|.
\end{align*}
\label{maintheom1}
\end{Theorem}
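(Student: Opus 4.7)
The plan is to reduce Theorem 1 to the construction of an approximate dual certificate plus local injectivity of $\L$ on the column support, and then to build that certificate via a golfing scheme. Concretely, let $T \subseteq \{1,\dots,M\}$ denote the (at most $J$) nonzero columns of $\X_0$ and let $\P_T$ be the projection that zeros out columns not in $T$. Standard convex analysis for the $\ell_{2,1}$ norm (whose subdifferential at $\X$ consists of matrices $\W$ with $\P_T\W = \sign(\X)$ and $\|\P_{T^c}\W\|_{2,\infty}\le 1$) yields the following sufficient condition for $\X_0$ to be the unique minimizer of \eqref{21}: (i) $\L$ restricted to matrices supported on $T$ is injective, and (ii) there exists a vector $\w\in\C^N$ (the dual certificate) such that $\P_T \L^*(\w) = \sign(\X_0)_T$ and $\|\P_{T^c}\L^*(\w)\|_{2,\infty} < 1$. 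In practice it is easier to construct an \emph{inexact} certificate obeying $\|\P_T\L^*(\w)-\sign(\X_0)_T\|_F \le \epsilon_1$ and $\|\P_{T^c}\L^*(\w)\|_{2,\infty}\le \tfrac{1}{2}$ and then close the gap via a perturbation argument that uses local injectivity.

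To construct $\w$, I would follow Gross's golfing scheme, already adapted to block-sparse recovery in matrix-completion-style settings. Split the $N$ measurement indices into $L \asymp \log N$ disjoint batches of roughly equal size, forming operators $\L_1,\dots,\L_L$. Initialize $\Q_0 = \sign(\X_0)_T$; at iteration $\ell$ set $\w_\ell = \L_\ell \Q_{\ell-1}$ and $\Q_\ell = \Q_{\ell-1} - \P_T\L_\ell^*\L_\ell \Q_{\ell-1}$, rescaled to account for batch size; the final certificate is $\w = \sum_{\ell=1}^{L}\w_\ell$. Because each batch is independent of the previous $\Q$, iterating contracts the on-support residual geometrically, $\|\Q_\ell\|_F \le (1/2)^\ell \|\Q_0\|_F$, while the off-support perturbation $\P_{T^c}\L_\ell^*\L_\ell \Q_{\ell-1}$ remains well behaved in the $\ell_{2,\infty}$ norm. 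This reduces the theorem to two deterministic inequalities for each batch operator $\L_\ell$:
\begin{equation*}
\|\P_T\L_\ell^*\L_\ell\P_T - \P_T\|_{\mathrm{op}} \le \tfrac{1}{2}, \qquad \|\P_{T^c}\L_\ell^*\L_\ell \Q\|_{2,\infty} \le \beta\|\Q\|_{2,\infty}
\end{equation*}
for $\Q$ supported on $T$, with $\beta$ small.

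The core of the work is proving these two concentration estimates, and this is the main obstacle because the summands $\pphi'_n (\pphi'_n)^H = (\bar{\a}'_n\bar{\a}'^H_n)\otimes (\b'_n\b'^H_n)$ have Kronecker rank-one structure. For the Gaussian dictionary, the $\a'_n$ are independent standard Gaussian vectors, so the local isometry on $T$ follows by a matrix Bernstein argument (after truncation) with the coherence parameter $\mu_{max}$ controlling the per-summand spectral norm through $\|\b'_n\|_2\le\mu_{max}\sqrt{K/N}$; the extra $\log^2 N$ factor in \eqref{maintheom11} comes from the subgaussian tail when handling the unbounded Gaussian entries. For the Fourier dictionary, $\a'_n$ has bounded entries but is no longer independent of the deterministic structure in a convenient way; the same matrix Bernstein program works, but the truncation level $\gamma = \sqrt{2M\log(2KM)+2M+1}$ enters as a high-probability bound on the norms of $\bar{\a}'_n\otimes \b'_n$ (or on certain auxiliary random quantities) and produces the additional $\log(4\sqrt{2J}\gamma)$ and $\log(K+1)$ factors in \eqref{maintheom12}. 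The off-support bound $\|\P_{T^c}\L_\ell^*\L_\ell \Q\|_{2,\infty}$ is handled by a vector Bernstein inequality on each column outside $T$, taking a union bound over the $M-J$ off-support columns, which yields the $\log(M-J)$ factor common to both regimes.

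Putting the pieces together: the golfing iteration requires each of the $L\asymp\log N$ batches to satisfy the two bounds above with probability at least $1-O(N^{-\alpha})$, and a union bound over batches produces the final $1-O(N^{-\alpha+1})$ probability. After $L$ iterations, $\P_T\L^*(\w)$ agrees with $\sign(\X_0)_T$ up to error $\epsilon_1$ of order $N^{-c}$, while the off-support condition $\|\P_{T^c}\L^*(\w)\|_{2,\infty}\le \tfrac{1}{2}+\beta\sum_\ell(1/2)^\ell < 1$ holds. Combined with the local injectivity consequence of the first isometry bound (applied to the full operator $\L$), this shows $\X_0$ is the unique optimum. The hardest technical step will be controlling $\|\pphi'_n\|_2$ uniformly in the Fourier case to set the Bernstein truncation level tightly enough to avoid parasitic factors of $M$, which is exactly where the $\gamma$ term and the coherence parameter $\mu_{max}$ enter the final bound \eqref{maintheom12}.
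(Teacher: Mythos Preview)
Your high-level scaffold (inexact dual certificate plus local isometry $\|\L_T^*\L_T-\I_T\|\le\tfrac12$, closed via the null-space argument) matches the paper, and for the random Fourier dictionary your golfing construction is essentially what the paper does: partition the $N$ measurements into $P$ batches, iterate $\Y_p=\Y_{p-1}-\tfrac{N}{Q}\L_p^*\L_p(\Y_{p-1,T}-\sign(\X_{0,T}))$, and bound the on-support residual geometrically while controlling the off-support $\ell_{2,\infty}$ norm via matrix Bernstein and a union bound over the $M-J$ columns.

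The genuine departure from the paper is in the Gaussian case. The paper does \emph{not} golf there; it constructs an \emph{exact} certificate directly by least squares,
\[
\p=\tilde{\PPhi}_T(\tilde{\PPhi}_T^H\tilde{\PPhi}_T)^{-1}\operatorname{vec}(\sign(\tilde{\X}_{0,T})),
\]
so that $\Y_T=\sign(\X_{0,T})$ holds with no error. The crucial structural fact exploited here---unavailable in the Fourier case---is that $\p$ depends only on the columns $\a_i$, $i\in T$, and is therefore \emph{independent} of each Gaussian column $\a_j$, $j\in T^c$. The off-support bound then reduces to controlling $\|\P\a_j\|_2$ for a fixed matrix $\P$ and a fresh standard Gaussian $\a_j$, handled by a single Hanson--Wright/Hsu--Kakade--Zhang tail bound plus a union over $M-J$ columns. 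This is both shorter and sharper than golfing; in particular it avoids multiplying the batch count into the sample complexity, and the $\log^2 N$ in \eqref{maintheom11} comes solely from the isometry lemma (Lemma~4.3 in \cite{ling2015self}), not from the certificate step. Your golfing route for the Gaussian case could be made to work, but would likely produce a bound of the Fourier shape rather than \eqref{maintheom11}.

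One correction on the role of $\gamma$: it is not a Bernstein truncation level on $\|\pphi'_n\|_2$. In the paper $\gamma$ is the high-probability bound on the full operator norm $\|\L\|$, and it enters only through the inexact-certificate tolerance $\|\Y_T-\sign(\X_{0,T})\|_F\le \tfrac{1}{4\sqrt{2}\gamma}$ (this is what forces $P\ge\log(4\sqrt{2J}\gamma)/\log 2$ golfing rounds in the Fourier case). The $\log(K+1)$ factor comes from the ambient dimension in matrix Bernstein when bounding each $K$-dimensional off-support column, not from $\gamma$.
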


We note that both of the sample complexity bounds in Theorem~\ref{maintheom1} require the number of measurements $N$ to be proportional to the number of degrees of freedom, $O(KJ)$, up to log factors. We also note that the sample complexity bounds scale with the square of the coherence parameter $\mu_{max} = \max_{i,j}\sqrt{N}|\B_{ij}|$. Under the assumption $\B^H\B=\I_K$ which requires the columns of $\B$ to be orthonormal, $\mu_{max}\in[1,\sqrt{N}]$. Specifically, given the system parameters with large enough $N$, \eqref{maintheom11} is satisfied when $1\leq\mu_{max}\leq\sqrt{\frac{N}{C_{\alpha}\log^2(N)KJ(\log(M-J)+\log(N))}}$. The valid range of $\mu_{max}$ for \eqref{maintheom12} and the noisy case can be easily derived in the same manner. And $\mu_{max}$ is minimized when the energy of each column of $\B$ is not concentrated on a few entries but spread across the whole column.

Our second theorem provides bounds on recovering the column-wise sparse matrix in the noisy case; these bounds show that the recovery error scales linearly with respect to the strength of the noise.

\begin{Theorem}
(Noisy case)
Consider the measurement model in equation (\ref{originnoisy}), assume that at most $J (< M)$ coefficients $c_j$ are nonzero, and furthermore assume that the norm of the noise is bounded, $||\n||_2\leq\eta$. Suppose also that each modulation matrix $\D_j$ satisfies the subspace constraint~\eqref{eq:sc}, where $\B^H\B=\I_K$.

Then with probability at least $1-O(N^{-\alpha+1})$, the solution $\hat{\X}$ to problem \eqref{noisy} satisfies
\begin{itemize}
\item if $\A\in\R^{N\times M}$ is a random Gaussian matrix,
\begin{align}
||\hat{\X}-\X_0||_F\leq\left(C_1+C_2\sqrt{J}\right)\eta
\label{gnoisy}
\end{align}
when
\begin{equation}
\begin{aligned}
\frac{N}{\log^2(N)}\geq &\ C_{\alpha}\mu_{max}^2KJ\left(\log(C\mu_{max}\sqrt{KJ})C+1\right)\cdot\\
&\quad\qquad(\log(M-J)+\log(MK)+\log(N))
\label{maintheom21}
\end{aligned}
\end{equation}
where $C$ is a constant.
\item if $\A\in\C^{N\times M}$ is a random Fourier matrix,
\begin{align}
||\hat{\X}-\X_0||_F\leq\left(C_1+C_2\sqrt{PJ}\right)\eta
\label{fnoisy}
\end{align}
when
\begin{equation}
\begin{aligned}
N\geq &\; C_{\alpha}\mu_{max}^2KJ\log(4\sqrt{2J}\gamma)\cdot\\
&\qquad\qquad(\log(M-J)+\log(MK)+\log(N))
\label{maintheom22}
\end{aligned}
\end{equation}
where $\gamma=\sqrt{2M\log(2KM)+2M+1}$ and $P\geq \log(4\sqrt{2J}\gamma)/\log2$.
\end{itemize}
 In both cases, $C_\alpha$ is defined for $\alpha> 1$. $C_1$ and $C_2$ are constant.
\label{maintheom2}
\end{Theorem}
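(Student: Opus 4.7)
The plan is to upgrade the certificate-based argument used for the noiseless Theorem~\ref{maintheom1} to an \emph{inexact dual certificate} argument of the Cand\`es--Plan type, and combine it with a local isometry property of $\L$ on matrices supported on $T=\mathrm{supp}(\X_0)$. Write $\H=\hat{\X}-\X_0$; since both $\X_0$ and $\hat{\X}$ are feasible for \eqref{noisy}, the triangle inequality gives $\|\L(\H)\|_2\leq 2\eta$. Optimality of $\hat{\X}$ combined with the convexity inequality $\|\X_0+\H_T\|_{2,1}\geq \|\X_0\|_{2,1}+\Re\langle \sign(\X_0),\H_T\rangle$ yields the fundamental cone-type inequality
\begin{equation*}
\|\H_{T^c}\|_{2,1}\leq -\Re\langle \sign(\X_0),\H_T\rangle,
\end{equation*}
which is the starting point of the error analysis.

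Next I would construct, with high probability, a vector $\w\in\C^N$ and $\Y=\L^*(\w)$ satisfying (i) $\|\Y_T-\sign(\X_0)\|_F\leq \tfrac14$, (ii) $\max_{j\in T^c}\|\Y_j\|_2\leq \tfrac12$, and (iii) a bound on $\|\w\|_2$ of the form $\|\w\|_2\leq C$ in the Gaussian case and $\|\w\|_2\leq C\sqrt{P}$ in the Fourier case, where $P=\lceil\log(4\sqrt{2J}\gamma)/\log 2\rceil$. The Gaussian bound comes from a one-shot least-squares construction $\w=(\L_T\L_T^*)^{-1}\L_T\sign(\X_0)$; the Fourier bound follows from a golfing scheme that partitions the $N$ measurements into $P$ disjoint batches and refines the residual on $T$ at each stage, which is exactly the scheme that produces the extra $\log(4\sqrt{2J}\gamma)$ factor in \eqref{maintheom22}. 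Plugging $\Y=\L^*(\w)$ into the cone inequality gives
\begin{equation*}
\tfrac12\|\H_{T^c}\|_{2,1}\leq 2\eta\|\w\|_2+\tfrac14\|\H_T\|_F,
\end{equation*}
so $\|\H_{T^c}\|_{2,1}$ is controlled by $\eta$ and $\|\H_T\|_F$.

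To close the loop I need a local isometry: with the same high probability as the certificate, the restricted operator $\L_T$ satisfies $\|\L_T^*\L_T-\I\|\leq\tfrac12$, which forces $\|\L(\H_T)\|_2\geq \tfrac1{\sqrt{2}}\|\H_T\|_F$. This follows from a matrix Bernstein bound on $\L^*\L-\I$ restricted to the support, using exactly the sample complexities~\eqref{maintheom21}/\eqref{maintheom22}, and is the same concentration estimate that anchors the noiseless proof. Combining with $\|\L(\H_T)\|_2\leq \|\L(\H)\|_2+\|\L(\H_{T^c})\|_2\leq 2\eta+\|\L|_{T^c}\|\cdot\|\H_{T^c}\|_{2,1}$, where the column-wise operator norm of $\L$ is uniformly bounded under the coherence assumption, I obtain $\|\H_T\|_F\lesssim \eta+\eta\|\w\|_2$. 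Putting everything together through $\|\H\|_F\leq \|\H_T\|_F+\|\H_{T^c}\|_F\leq \|\H_T\|_F+\|\H_{T^c}\|_{2,1}$ and using the cardinality $|T|\leq J$ to absorb $\sqrt{J}$ factors arising from $\|\H_T\|_F\leq \sqrt{J}\max_j\|\H_j\|_2$ produces the claimed bounds \eqref{gnoisy} and \eqref{fnoisy}, with the $\sqrt{PJ}$ scaling in the Fourier case coming from $\|\w\|_2\lesssim \sqrt{P}$.

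The main obstacle is the construction and analysis of the inexact dual certificate with simultaneously controlled $\ell_2$-norm of $\w$ and approximation error of $\Y$ on $T$ and $T^c$; for the Fourier dictionary this requires the golfing scheme, whose number of stages $P$ must match the factor $\log(4\sqrt{2J}\gamma)$ hidden in \eqref{maintheom22}, and each stage demands a matrix Bernstein concentration of $\L_i^*\L_i-\I$ on the support together with a vector Bernstein bound off the support under the coherence constraint $\mu_{max}$. The slightly larger sample complexity compared to~\eqref{maintheom12}, namely the extra $\log(MK)$ rather than $\log(K+1)$, reflects a union bound over the $MK$ entries needed for the denoising guarantee rather than just the $K+1$ directions needed for exact recovery, and this is the delicate part that must be tracked through all of the tail estimates.
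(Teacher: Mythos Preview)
Your overall architecture---optimality/cone inequality, inexact dual certificate $\Y=\L^*(\w)$, local isometry on $T$, then close the loop---matches the paper's Section~IV. However, there is a genuine gap in how you obtain the $\sqrt{J}$ (resp.\ $\sqrt{PJ}$) factor.

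\textbf{The bound on $\|\w\|_2$ is wrong, and that is where the $\sqrt{J}$ lives.} For the Gaussian case the one-shot certificate is $\p=\tilde\PPhi_T(\tilde\PPhi_T^H\tilde\PPhi_T)^{-1}\mathrm{vec}(\sign(\tilde\X_{0,T}))$ (note $(\L_T^*\L_T)^{-1}$, not your $(\L_T\L_T^*)^{-1}$, which need not be invertible). Then $\|\p\|_2^2\le 2\|\sign(\tilde\X_{0,T})\|_F^2=2J$, so $\|\p\|_2\le\sqrt{2J}$, \emph{not} a constant. In the Fourier/golfing case one similarly gets $\|\p\|_2\le\sqrt{2PJ}$, not $\sqrt{P}$. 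This bound feeds into $|\langle\H,\Y\rangle|=|\langle\L(\H),\p\rangle|\le 2\eta\|\p\|_2$, and \emph{that} term is the sole origin of the $C_2\sqrt{J}\eta$ and $C_2\sqrt{PJ}\eta$ in \eqref{gnoisy}--\eqref{fnoisy}. Your proposed mechanism ``$\|\H_T\|_F\le\sqrt{J}\max_j\|\H_j\|_2$'' does not appear anywhere in the argument and cannot produce the factor: there is no control on $\max_j\|\H_j\|_2$, and the inequality points the wrong way for the purpose you state.

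\textbf{A secondary difference in closing the loop.} The paper bounds $\|\H_T\|_F$ via $\|\PPhi_T^H\PPhi\,\mathrm{vec}(\H)\|_2$ and the cross-block condition $\max_{i\in T^c}\|\PPhi_T^H[\PPhi_{K(i-1)+1}\cdots\PPhi_{Ki}]\|\le\beta$ (their Theorem~4.2), rather than your route through $\|\L(\H_T)\|_2\le 2\eta+\|\L|_{T^c}\|\,\|\H_{T^c}\|_{2,1}$. It is precisely the matrix Bernstein bound establishing $\beta\le1$ that introduces the extra $\log(MK)$ in \eqref{maintheom21}--\eqref{maintheom22}; your description of it as ``a union bound over the $MK$ entries'' is not quite right---the $KM$ enters as the ambient dimension in the Bernstein tail, with a further union over $i\in T^c$. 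Your alternative ``column-wise operator norm of $\L|_{T^c}$'' could in principle be made to work, but you would need to prove that $\max_{j\in T^c}\sup_{\|\h\|_2=1}\|\L(\h\e_j^T)\|_2$ is strictly below the threshold that keeps the loop contractive, and track the sample complexity of that estimate; as written this step is asserted rather than proved.
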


Although Theorem~\ref{maintheom2} focuses exclusively on bounding the recovery error of the matrix $\X_0$, one can also attempt to estimate the parameters $c_j$, $\h_j$, and $\D_j$ from $\hat{\X}$ using \eqref{recover}. And according to Theorem \ref{maintheom2}, for any $\hat{\x}_j=\hat{c}_j\hat{\h}_j$ and $\x_{0,j}=c_{0,j}\h_{0,j}$ where $\hat{\x}_j$ and $\x_{0,j}$ are the $j$-th columns of the solution $\hat{\X}$ and the ground truth $\X_0$ respectively, we would have $||\hat{c}_j\hat{\D}_j-c_{0,j}\D_{0,j}||_F=||\hat{c}_j\hat{\h}_j-c_{0,j}\h_{0,j}||_2\leq \left(C_1+C_2\sqrt{J}\right)\eta$ with random Gaussian dictionary and $||\hat{c}_j\hat{\D}_j-c_{0,j}\D_{0,j}||_F=||\hat{c}_j\hat{\h}_j-c_{0,j}\h_{0,j}||_2\leq  \left(C_1+C_2\sqrt{PJ}\right)\eta$ for random Fourier dictionary. In addition, as results on structured matrix recovery from (possibly noisy) linear measurements, we believe that Theorems~\ref{maintheom1} and~\ref{maintheom2} may be of independent interest outside of the sparse recovery and blind demodulation problem.

\section{Proof of Theorem \ref{maintheom1}}
\label{sec:proof1}

To begin our proof of the main theorem in the noiseless case, we first derive sufficient conditions for exact recovery.

\subsection{Sufficient Conditions for Exact Recovery}

Sufficient conditions for exact recovery are the null space property and an alternative sufficient condition derived from the null space property. Similar sufficient conditions with complete proofs are available  for  minimization problems using other types of norms \cite{ling2015self, candes2011probabilistic, foucart2013mathematical, vaiter2015model}. However, since we cannot find sufficient conditions that suit our purpose and in order to be self-contained, we provide a short proof for the ones specific to the $\ell_{2,1}$ norm minimization problem in this section.

\begin{Proposition}
(The null space property) The matrix $\X_0=[c_1\h_1\quad c_2\h_2\quad....\quad c_M\h_M]\in\C^{K\times M}$ with support $T$ is the unique solution to the inverse problem (\ref{21}) if
\begin{align*}
-|\langle\H_T,\sign(\X_0)\rangle|+||\H_{T^C}||_{2,1}>0
\end{align*}
for any $\H\neq\textbf{0}$ in the nullspace of $\L$.
\label{pro1}
\end{Proposition}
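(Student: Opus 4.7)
The plan is the standard convex-analysis argument that reduces uniqueness of an $\ell_{2,1}$ minimizer to a condition on the null space of $\L$. I would fix an arbitrary competing feasible point, write it as $\X_0+\H$ (so that necessarily $\H\in\ker(\L)$ and $\H\neq\0$), and aim to show $||\X_0+\H||_{2,1}>||\X_0||_{2,1}$; the hypothesis of the proposition will supply the strict inequality at the end.

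First I would exploit the column-wise definition of $||\cdot||_{2,1}$ together with the fact that $\X_0$ has support $T$ to split the norm across $T$ and $T^C$:
\begin{equation*}
||\X_0+\H||_{2,1} = ||(\X_0)_T+\H_T||_{2,1} + ||\H_{T^C}||_{2,1}.
\end{equation*}
This step is pure bookkeeping and uses only the disjointness of the two column index sets together with $(\X_0)_{T^C}=\0$.

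Next I would handle the piece on $T$ via a per-column Cauchy--Schwarz inequality. For each $j\in T$ the vector $\sign(\x_{0,j})=\x_{0,j}/||\x_{0,j}||_2$ has unit norm, so
\begin{equation*}
||\x_{0,j}+\h_j||_2 \geq \Re\langle \sign(\x_{0,j}),\x_{0,j}+\h_j\rangle = ||\x_{0,j}||_2+\Re\langle \sign(\x_{0,j}),\h_j\rangle.
\end{equation*}
Summing over $j\in T$ and using that $\sign(\X_0)$ vanishes on $T^C$ yields the subgradient-type inequality
\begin{equation*}
||(\X_0)_T+\H_T||_{2,1} \geq ||\X_0||_{2,1} + \Re\langle \sign(\X_0),\H_T\rangle.
\end{equation*}

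Finally I would combine the two displays and bound the real part of an inner product from below by minus its modulus, $\Re(z)\geq -|z|$, to obtain
\begin{equation*}
||\X_0+\H||_{2,1} \geq ||\X_0||_{2,1} - |\langle \H_T,\sign(\X_0)\rangle| + ||\H_{T^C}||_{2,1}.
\end{equation*}
The assumed strict positivity of the last two terms then gives $||\X_0+\H||_{2,1}>||\X_0||_{2,1}$ for every nonzero $\H\in\ker(\L)$, establishing that $\X_0$ is the unique minimizer of \eqref{21}. There is no serious obstacle in the argument; the only point that requires care is that the variables are complex-valued, so the ``subgradient inequality'' is best obtained directly from complex Cauchy--Schwarz with the $\Re$ kept explicit, rather than from a real-variable subdifferential calculation.
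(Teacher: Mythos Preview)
Your proof is correct and follows essentially the same approach as the paper: split $||\X_0+\H||_{2,1}$ across $T$ and $T^C$, apply a column-wise Cauchy--Schwarz bound on $T$, and conclude via the hypothesis. The only cosmetic difference is that the paper keeps the modulus $|\langle\X_{0,T}+\H_T,\sign(\X_{0,T})\rangle|$ and applies the reverse triangle inequality, whereas you work with the real part directly and invoke $\Re(z)\ge -|z|$; both routes yield the same final inequality.
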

\begin{proof}
Let $\hat{\X}=\X_0+\H$ be a solution to problem (\ref{21}), with $\L(\H)=\mathbf{0}$. To prove $\X_0$ is the unique solution, it is sufficient to show that $||\hat{\X}||_{2,1}>||\X_0||_{2,1}$ if $\H\neq\textbf{0}$. We start by observing that
\begin{equation*}
\begin{aligned}
&\quad||\X_0+\H||_{2,1}=||\X_{0,T}+\H_T||_{2,1}+||\H_{T^C}||_{2,1}\\
               &\geq |\langle\X_{0,T}+\H_T,\sign(\X_{0,T})\rangle|+||\H_{T^C}||_{2,1}\\
               &=|\langle\X_{0,T},\sign(\X_{0,T})\rangle+\langle\H_T,\sign(\X_{0,T})\rangle|+||\H_{T^C}||_{2,1}\\
               &\geq||\X_{0,T}||_{2,1}-|\langle\H_T,\sign(\X_{0,T})\rangle|+||\H_{T^C}||_{2,1}
\end{aligned}
\end{equation*}
where $\sign(\X_{0,T})=\sign(\X_0)$ and the first inequality comes from the fact that
\begin{equation}
\begin{aligned}
&\quad||\X_{0,T}+\H_T||_{2,1}=\sum_{i\in T}||\x_{0,i}+\h_i||_2||\sign(\x_{0,i})||_2\\
                   &\geq\sum_{i\in T}|\langle\x_{0,i}+\h_i,\sign(\x_{0,i})\rangle|\geq|\langle\X_{0,T}+\H_T,\sign(\X_{0,T} \rangle|.
\label{ineq}
\end{aligned}
\end{equation}
Therefore, as long as $-|\langle\H_T,\sign(\X_0)\rangle|+||\H_{T^C}||_{2,1}>0$ for any $\H\neq \textbf{0}$ in the nullspace of $\L$, $\X_0$ is the unique solution.
\end{proof}

\begin{Proposition}
The matrix $\X_0\in\C^{K\times M}$ with support $T$ is the unique solution to the inverse problem \eqref{21} if there exists $\gamma > 0$ and a matrix $\Y$ in the range space of $\L^*$ such that
\begin{align*}
||\Y_T-\sign(\X_{0,T})||_{F}\leq \frac{1}{4\sqrt2\gamma}\quad \text{and} \quad ||\Y_{T^C}||_{2,\infty}\leq\frac{1}{2}
\end{align*}
and the operator $\L$ satisfies ($\L_T(\X) = \{ \b'^H_n\X \a'_{n,T}\}_{n=1}^N$)
\begin{align}
||\L_{T}^*\L_{T}-\I_T||\leq\frac{1}{2}\quad \text{and} \quad ||\L||\leq \gamma.
\label{eq:isomopnorm}
\end{align}
\label{inexact}
\end{Proposition}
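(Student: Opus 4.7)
The plan is to invoke the null space property from Proposition \ref{pro1}: it suffices to show that for every nonzero $\H$ with $\L(\H)=\0$ we have $-|\langle \H_T,\sign(\X_0)\rangle|+\|\H_{T^C}\|_{2,1}>0$. So I fix such an $\H$, split it as $\H_T+\H_{T^C}$, and aim to bound the first (negative) term by something strictly smaller than $\|\H_{T^C}\|_{2,1}$.

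The dual certificate enters as follows. Since $\Y$ lies in the range of $\L^*$, write $\Y=\L^*(\w)$; then $\langle \H,\Y\rangle=\langle \L(\H),\w\rangle=0$, which gives $\langle \H_T,\Y_T\rangle=-\langle \H_{T^C},\Y_{T^C}\rangle$. I then insert $\Y_T$ into the inner product we want to bound:
\begin{equation*}
\begin{aligned}
|\langle \H_T,\sign(\X_0)\rangle| &\le |\langle \H_T,\sign(\X_{0,T})-\Y_T\rangle|+|\langle \H_T,\Y_T\rangle|\\
&\le \|\H_T\|_F\,\|\sign(\X_{0,T})-\Y_T\|_F + |\langle \H_{T^C},\Y_{T^C}\rangle|\\
&\le \tfrac{1}{4\sqrt{2}\gamma}\,\|\H_T\|_F + \tfrac{1}{2}\|\H_{T^C}\|_{2,1},
\end{aligned}
\end{equation*}
using Cauchy--Schwarz, the duality between $\|\cdot\|_{2,1}$ and $\|\cdot\|_{2,\infty}$ (since $|\langle \H_{T^C},\Y_{T^C}\rangle|\le \|\H_{T^C}\|_{2,1}\|\Y_{T^C}\|_{2,\infty}$), and the two hypotheses on $\Y$.

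The remaining step is to control $\|\H_T\|_F$ using the isometry/boundedness assumptions \eqref{eq:isomopnorm}. From $\|\L_T^*\L_T-\I_T\|\le 1/2$ I get $\|\L_T(\H_T)\|_2^2\ge \tfrac{1}{2}\|\H_T\|_F^2$, while from $\L(\H)=\0$ I have $\L_T(\H_T)=-\L(\H_{T^C})$, so $\|\L_T(\H_T)\|_2=\|\L(\H_{T^C})\|_2\le \gamma\|\H_{T^C}\|_F$. Chaining these yields $\|\H_T\|_F\le \sqrt{2}\gamma\|\H_{T^C}\|_F\le \sqrt{2}\gamma\|\H_{T^C}\|_{2,1}$, where the last step uses the elementary fact $\|\cdot\|_F\le\|\cdot\|_{2,1}$. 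Substituting back gives $|\langle \H_T,\sign(\X_0)\rangle|\le \tfrac{1}{4}\|\H_{T^C}\|_{2,1}+\tfrac{1}{2}\|\H_{T^C}\|_{2,1}=\tfrac{3}{4}\|\H_{T^C}\|_{2,1}$, hence $-|\langle \H_T,\sign(\X_0)\rangle|+\|\H_{T^C}\|_{2,1}\ge \tfrac{1}{4}\|\H_{T^C}\|_{2,1}\ge 0$.

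To upgrade this to a strict inequality, I argue that equality forces $\H_{T^C}=\0$, which combined with $\L(\H)=\0$ and the injectivity of $\L_T$ (guaranteed by $\|\L_T^*\L_T-\I_T\|\le 1/2$) forces $\H_T=\0$, contradicting $\H\ne\0$. Applying Proposition \ref{pro1} then completes the proof. The only mildly subtle point — and the one I would double-check carefully — is the comparison $\|\H_{T^C}\|_F\le \|\H_{T^C}\|_{2,1}$ and the correct identification of the operator norm $\|\L\|$ as being with respect to the Frobenius norm on the domain, so that $\|\L(\H_{T^C})\|_2\le \gamma\|\H_{T^C}\|_F$ is legitimate; everything else is a routine dual-certificate chain.
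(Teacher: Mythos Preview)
Your proposal is correct and follows essentially the same argument as the paper's own proof: invoke Proposition~\ref{pro1}, insert the dual certificate to split $\langle \H_T,\sign(\X_0)\rangle$ via $\langle\H_T,\Y_T\rangle=-\langle\H_{T^C},\Y_{T^C}\rangle$, apply Cauchy--Schwarz and the $\ell_{2,1}/\ell_{2,\infty}$ duality, and then control $\|\H_T\|_F$ by the chain $\tfrac{1}{\sqrt2}\|\H_T\|_F\le\|\L(\H_T)\|_2=\|\L(\H_{T^C})\|_2\le\gamma\|\H_{T^C}\|_F\le\gamma\|\H_{T^C}\|_{2,1}$ before handling the strictness via injectivity of $\L_T$. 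The two points you flagged for checking are indeed fine: $\|\cdot\|_F\le\|\cdot\|_{2,1}$ is elementary, and the paper interprets $\|\L\|$ as the spectral norm of the matrix $\PPhi$ acting on $\mathrm{vec}(\X)$, so the Frobenius-to-$\ell_2$ bound is the intended one.
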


\begin{proof}
Proposition \ref{pro1} shows that to establish uniqueness, it is sufficient to prove that $-|\langle\H_T,\sign(\X_0)\rangle|+||\H_{T^C}||_{2,1}>0$ for any $\H\neq \textbf{0}$ in the nullspace of $\L$. Note that
\begin{equation*}
\begin{aligned}
&-|\langle\H_T,\sign(\X_0)\rangle|+||\H_{T^C}||_{2,1}\\
               &~=-|\langle\H_T,\sign(\X_0)-\Y_T\rangle+\langle\H_T,\Y_T\rangle|   +||\H_{T^C}||_{2,1}\\
               &~\geq-|\langle\H_T,\sign(\X_0)-\Y_T\rangle|-|\langle\H_{T^C},\Y_{T^C}\rangle|
               +||\H_{T^C}||_{2,1}
\end{aligned}
\end{equation*}
since $\langle\H_T,\Y_T\rangle=-\langle\H_{T^C},\Y_{T^C}\rangle$. By applying the H\"{o}lder inequality, we get a stronger condition
\begin{align*}
-||\sign(\X_0)-\Y_T||_F||\H_T||_F+(1-||\Y_{T^C}||_{2,\infty})||&\H_{T^C}||_{2,1} \\
&>0.
\end{align*}
Since $||\L_{T}^*\L_{T}-\I_T||\leq\frac{1}{2}$ and $||\L||\leq \gamma$, we have $||\L(\H_T)||_F\geq\frac{1}{\sqrt2}||\H_T||_F$, $||\L(\H_{T^C})||_F\leq\gamma||\H_{T^C}||_F$ and
\begin{equation}
\begin{aligned}
\frac{1}{\sqrt2}||\H_T||_F\leq||\L(\H_T)||_F&=||\L(\H_{T^C})||_F\\
&\leq\gamma||\H_{T^C}||_F\leq\gamma||\H_{T^C}||_{2,1}.
\label{ine}
\end{aligned}
\end{equation}
Plugging~\eqref{ine} into the stronger condition above yields
\begin{align*}
\left(1-||\Y_{T^C}||_{2,\infty}-\sqrt2\gamma||\sign(\X_0)-\Y_T||_F\right)||\H_{T^C}||_{2,1}>0.
\end{align*}
Therefore, if $||\Y_T-\sign(\X_{0,T})||_{F}\leq \frac{1}{4\sqrt2\gamma}$, $||\Y_{T^C}||_{2,\infty}\leq\frac{1}{2}$, and $\H_{T^C}\neq\textbf{0}$, the left hand side is positive. On the other hand, if $\H_{T^C}=\textbf{0}$, from (\ref{ine}), $\H_T=\textbf{0}$ and $\H=\textbf{0}$.
\end{proof}

\subsection{Bounding The Isometry Constant and Operator Norm}

In this section, we bound the isometry constant and operator norm $\gamma$ appearing in~\eqref{eq:isomopnorm} based on the randomness in the matrix $\A$. The isometry bound for the linear operator $\L$ can be found in Lemma 4.3 in \cite{ling2015self}.

\begin{Lemma}\cite[Lemma 4.3]{ling2015self}
(Isometry) For the linear operator $\L$ defined in (\ref{21}) with $\B^H\B=\I_K$ and $\delta>0$,
\begin{align*}
||\PPhi^H_T\PPhi_T-\I_{T}||=||\L_{T}^*\L_{T}-\I_T||\leq\delta
\end{align*}
with probability at least $1-N^{-\alpha+1}$ where $\I_T$ is the identity operator on the support $T$ such that $\I_T(\X)=\X_T$,
\begin{itemize}
\item if $\A$ is a random Gaussian matrix and $N\geq C_{\alpha}\mu_{max}^2KJ\max\{\log(N)/\delta^2,\log^2(N)/\delta\}$.
\item if $\A$ is a random Fourier matrix and $N\geq C_{\alpha}\mu_{max}^2KJ\log(N)/\delta^2$.
\end{itemize}
Here $C_{\alpha}$ is a constant that grows linearly with $\alpha>1$.
\label{isometryproperty}
\end{Lemma}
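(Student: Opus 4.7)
The plan is to view $\PPhi_T^H\PPhi_T$ as a sum of $N$ independent rank-one Hermitian random matrices and apply a matrix Bernstein-type concentration inequality. Using the representation in \eqref{phiH}, I would first write
\begin{equation*}
\PPhi_T^H\PPhi_T \;=\; \sum_{n=1}^N \pphi'_{n,T}\pphi'^H_{n,T},
\end{equation*}
where each summand inherits the Kronecker structure $\pphi'_{n,T}=\bar{\a}'_{n,T}\otimes\b'_n$, with $\a'_{n,T}\in\C^J$ the restriction of the random $n$-th row of $\A$ to the support $T$ and $\b'_n\in\C^K$ the deterministic $n$-th column of $\B^H$. Using the identity $(\x\otimes\y)(\x\otimes\y)^H = (\x\x^H)\otimes(\y\y^H)$, together with $\EEE[\bar{\a}'_{n,T}\a'^T_{n,T}] = \I_J$ -- which holds in the Gaussian case (standard normal entries) and in the Fourier case (because $\EEE[\bar{A}_{n,j}A_{n,j'}]=\delta_{jj'}$ for a uniformly random DFT row) -- I would obtain $\EEE[\pphi'_{n,T}\pphi'^H_{n,T}] = \I_J\otimes(\b'_n\b'^H_n)$, and hence $\sum_n\EEE[\pphi'_{n,T}\pphi'^H_{n,T}] = \I_J\otimes\B^H\B = \I_T$ by the orthonormality hypothesis. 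Setting $X_n := \pphi'_{n,T}\pphi'^H_{n,T} - \EEE[\pphi'_{n,T}\pphi'^H_{n,T}]$ reduces the task to bounding $\|\sum_n X_n\|$.

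Next I would apply matrix Bernstein, which requires (i) an almost-sure operator-norm bound $L\geq\|X_n\|$ and (ii) a variance proxy $\sigma^2\geq\|\sum_n\EEE X_n^2\|$. Since $\|\pphi'_{n,T}\|_2^2 = \|\a'_{n,T}\|_2^2\cdot\|\b'_n\|_2^2$ and the coherence hypothesis gives $\|\b'_n\|_2^2\leq K\mu_{max}^2/N$, the Fourier case is immediate: $\|\a'_{n,T}\|_2^2 = J$ deterministically, so $L\lesssim KJ\mu_{max}^2/N$. A short computation using $(\pphi'_{n,T}\pphi'^H_{n,T})^2 = \|\pphi'_{n,T}\|_2^2\cdot\pphi'_{n,T}\pphi'^H_{n,T}$ then gives $\sigma^2\lesssim L$, so standard matrix Bernstein yields
\begin{equation*}
\Pr\Bigl[\,\|\sum_{n=1}^N X_n\|\geq\delta\,\Bigr] \;\leq\; 2KJ\exp\!\left(\frac{-\delta^2/2}{\sigma^2 + L\delta/3}\right),
\end{equation*}
which after inverting for the target failure probability $N^{-\alpha+1}$ produces the Fourier bound $N\gtrsim\mu_{max}^2KJ\log(N)/\delta^2$.

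The main obstacle is the Gaussian case, in which $\|\a'_{n,T}\|_2^2$ is $\chi^2$-distributed and hence unbounded, invalidating the standard Bernstein hypothesis. I would resolve this by truncation: conditional on the event $\mathcal{E}_n=\{\|\a'_{n,T}\|_2^2\leq cJ\log N\}$, whose complement has probability $\lesssim N^{-\alpha}$ by a chi-square tail bound, the truncated summands $\tilde X_n := X_n\cdot\one_{\mathcal{E}_n}$ satisfy $L_{trunc}\lesssim KJ\mu_{max}^2\log(N)/N$, and a union bound over $n=1,\ldots,N$ controls the discarded mass. Applying matrix Bernstein to $\sum_n\tilde X_n$ then yields the Gaussian condition $N\gtrsim\mu_{max}^2KJ\max\{\log(N)/\delta^2,\,\log^2(N)/\delta\}$, where the second term in the max arises precisely from the $L\delta$ contribution in the Bernstein denominator once $L$ has absorbed the extra $\log N$ factor from truncation. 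An equivalent route would be to apply a sub-exponential ($\psi_1$) matrix Bernstein inequality directly, exploiting the sub-exponential tail of the quadratic form $\|\pphi'_{n,T}\|_2^2$; both approaches produce the same scaling and recover the bound cited from \cite{ling2015self}.
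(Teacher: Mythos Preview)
The paper does not prove this lemma at all; it simply cites Lemma~4.3 of \cite{ling2015self} and moves on. Your proposal is correct and is essentially the proof given in that reference: one writes $\PPhi_T^H\PPhi_T-\I_T$ as a sum of centered independent rank-one terms via the Kronecker factorization $\pphi'_{n,T}=\bar{\a}'_{n,T}\otimes\b'_n$, uses $\B^H\B=\I_K$ to identify the mean, and applies the standard matrix Bernstein inequality in the Fourier case (where $\|\a'_{n,T}\|_2^2=J$ deterministically) and a sub-exponential ($\psi_1$) matrix Bernstein in the Gaussian case (where $\|\a'_{n,T}\|_2^2$ is chi-square). Your truncation alternative for the Gaussian case is also fine and yields the same $\max\{\log(N)/\delta^2,\log^2(N)/\delta\}$ scaling; the reference \cite{ling2015self} opts for the $\psi_1$ route directly, but the two are equivalent in spirit and output.

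One small point worth tightening: when you compute the variance proxy $\sigma^2$ in the Gaussian case, the truncated summands $\tilde X_n$ are no longer exactly mean-zero, so you should either (a) re-center and verify that the re-centering correction $\|\sum_n\EEE[X_n\one_{\mathcal{E}_n^c}]\|$ is negligible (it is, by the same chi-square tail), or (b) simply invoke the $\psi_1$ Bernstein inequality (as in Proposition~\ref{orlicz1b} of the present paper, taken from \cite{ahmed2014blind}), which handles unbounded sub-exponential summands without any truncation bookkeeping. Either way the conclusion stands.
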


$\L_T(\X)=\PPhi_T\cdot\text{vec}(\X)$ and $\PPhi_T$ can be viewed as $\PPhi$ constructed using $\A_T$, whose $i$-th column is zero if $i\in\T^C$, following \eqref{phi}. Therefore, $\PPhi_T\in\C^{N\times KM}$ has many zero columns and removing those zero columns results in $\tilde{\PPhi}_T\in\C^{N\times KJ}$. If $||\PPhi^H_T\PPhi_T-\I_{T}||=||\tilde{\PPhi}^H_T\tilde{\PPhi}_T-\tilde{\I}_{T}||\leq\delta<1$, $\tilde{\PPhi}^H_T\tilde{\PPhi}_T$ is invertible and
$||(\tilde{\PPhi}^H_T\tilde{\PPhi}_T)^{-1}||\leq (1-\delta)^{-1}$ according to Lemma A.12 in \cite{foucart2013mathematical}. This property will be applied in \eqref{gaussianconstruction} and Theorem \ref{gaussian}. To bound the operator norm of $\L$, we use results from~\cite{ahmed2014blind} and \cite{ling2015self}.

\begin{Lemma}\cite{ahmed2014blind,ling2015self}
For the linear operator $\L$ defined in \eqref{21} with $\B^H\B=\I_K$ and $\alpha\geq1$,
\begin{itemize}
\item if $\A$ is a random Gaussian matrix,
\begin{align*}
||\L||\leq\sqrt{M\log(MN/2)+\alpha \log(N)}
\end{align*}
with probability at least $1-N^{-\alpha}$.

\item if $\A$ is a random Fourier matrix,
\begin{align*}
||\L||\leq\sqrt{2M\log(2KM)+2M+1}
\end{align*}
with probability at least $1-N^{-\alpha}$ when $N\geq \alpha\mu_{max}^2K\log(N)$.

\end{itemize}
\label{operatornorm}
\end{Lemma}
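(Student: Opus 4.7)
My plan is to reduce $||\L||$ to $||\A||$ via a simple deterministic argument, and then invoke known random-matrix tail bounds for $||\A||$ in each case.

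\textbf{Deterministic reduction.} The key observation is that $\L$ only reads the diagonal of a matrix product: since
\[
\L(\X)_n \;=\; \b'^H_n\X\a'_n \;=\; \sum_{i,j}\B_{ni}\X_{ij}\A_{nj} \;=\; (\B\X\A^T)_{nn},
\]
we have $||\L(\X)||_2^2 = \sum_{n=1}^N|(\B\X\A^T)_{nn}|^2 \le ||\B\X\A^T||_F^2$. Expanding and using $\B^H\B=\I_K$ gives
\[
||\B\X\A^T||_F^2 \;=\; \tr\bigl((\A^T\bar{\A})(\X^H\X)\bigr) \;\le\; ||\A||^2\,||\X||_F^2,
\]
because $\A^T\bar{\A}$ is positive semidefinite with $||\A^T\bar{\A}||=||\A||^2$. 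Hence $||\L||\le ||\A||$, reducing the task to controlling $||\A||$ in each random model.

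\textbf{Controlling $||\A||$ in the two cases.} For Gaussian $\A$, a standard concentration argument---an $\varepsilon$-net on the unit sphere in $\R^M$ combined with the scalar Gaussian tail for $\u^T\A\v$, or equivalently the Davidson--Szarek inequality---gives $||\A||^2 \le M\log(MN/2) + \alpha\log N$ with probability at least $1-N^{-\alpha}$, as used in~\cite{ahmed2014blind}. For Fourier $\A$, write $\A\A^H = \sum_{m=1}^M \a_m\a_m^H$ as a sum of $M$ independent rank-one $N\times N$ summands; each summand has operator norm $N$ (since DFT-row entries have unit modulus) and, for $m\neq0\pmod M$, expectation $\I_N$, so $\EEE[\A\A^H]=M\I_N$. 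Matrix Bernstein applied to the centered sum $\A\A^H - M\I_N$, together with the side hypothesis $N\ge \alpha\mu_{max}^2K\log N$, then yields $||\A||^2 \le 2M\log(2KM)+2M+1$ with probability at least $1-N^{-\alpha}$, as in~\cite[Lemma~4.3]{ling2015self}.

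\textbf{Main obstacle.} Neither the deterministic reduction nor the tail estimates on $||\A||$ are themselves delicate; the main care lies in matching the precise log factors of the stated bounds. The $\log(2KM)$ factor in the Fourier case is larger than what a direct matrix-Bernstein estimate on the $N\times N$ matrix $\A\A^H$ would predict, and instead reflects applying concentration to the larger $KM\times KM$ matrix $\PPhi^H\PPhi$ in the joint analysis with the isometry constant carried out in~\cite{ling2015self}. Reproducing these constants exactly therefore requires faithfully tracking the union-bound and dimension choices in that cited proof rather than any genuinely new idea.
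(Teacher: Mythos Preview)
The paper does not supply its own proof of this lemma; it is quoted verbatim from \cite{ahmed2014blind,ling2015self}. Your deterministic reduction $||\L||\le||\A||$ via $\L(\X)_n=(\B\X\A^T)_{nn}$, $||\L(\X)||_2\le||\B\X\A^T||_F$, and $\B^H\B=\I_K$ is correct and is a genuinely different (and cleaner) route than what the cited references do. There, the bound is obtained by applying matrix concentration directly to $\PPhi\PPhi^H$ (equivalently $\L\L^*$) using the Kronecker structure $\pphi'_l=\bar{\a}'_l\otimes\b'_l$; this is why the Fourier bound carries the factor $\log(2KM)$ and the side hypothesis $N\ge\alpha\mu_{max}^2K\log N$, neither of which a pure $||\A||$-analysis would produce. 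Your reduction decouples $\B$ from the problem entirely, at the price of not reproducing those exact constants---which, as you correctly note, is a bookkeeping matter rather than a mathematical one, and is irrelevant for every downstream use of $\gamma$ in the paper.

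One genuine slip to fix in the Fourier case: you write $\A\A^H=\sum_{m=1}^M\a_m\a_m^H$ and treat the summands as independent. They are not---the \emph{rows} of $\A$ are sampled independently, so every column $\a_m$ depends on the same random row indices. Matrix Bernstein therefore does not apply to this decomposition. The easy repair is to work instead with $\A^H\A=\sum_{n=1}^N\bar{\a}'_n\a'^T_n$, a sum of $N$ independent rank-one $M\times M$ matrices each of operator norm $M$ and expectation $\I_M$; Bernstein then gives $||\A^H\A||\le CM\log M$ with the required probability, which is enough. (Also, the pointer ``\cite[Lemma~4.3]{ling2015self}'' is the isometry estimate---the paper's Lemma~\ref{isometryproperty}---not the operator-norm bound.)
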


\subsection{Constructing The Dual Certificate for The Gaussian Case}

In the case where $\A$ is a random Gaussian matrix, we construct a certificate matrix $\Y$ that satisfies the conditions in Proposition \ref{inexact}. When $||\PPhi^H_T\PPhi_T-\I_{T}||\leq\frac{1}{2}$, we can set
\begin{align}
\text{vec}(\Y) = \PPhi^H\p = \text{vec}(\L^*(\p))\in\C^{KM\times1},
\label{eq:ydual}
\end{align}
where
\begin{align}
\p = \tilde{\PPhi}_T(\tilde{\PPhi}_T^H\tilde{\PPhi}_T)^{-1}\text{vec}(\sign(\tilde{\X}_{0,T}))\in\C^{N\times 1}.
\label{gaussianconstruction}
\end{align}
By construction, $\Y_T=\sign(\X_{0,T})$, and we need only to verify that $||\Y_{T^C}||_{2,\infty}\leq 1/2$.

\begin{Theorem}
If $||\PPhi^H_T\PPhi_T-\I_{T}||\leq\frac{1}{2}$, there exists $\Y$ in the range space of $\L^*$ such that
\begin{align*}
\Y_T=\sign(\X_{0,T})\quad \text{and} \quad ||\Y_{T^C}||_{2,\infty}\leq\frac{1}{2}
\end{align*}
with probability at least $1-(M-J)e^{-\alpha}$ when $N\geq 40\alpha\mu_{max}^2KJ$ for $\alpha\geq\log(M-J)$.
\label{gaussian}
\end{Theorem}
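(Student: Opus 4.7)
The plan is to verify that the candidate dual certificate $\Y$ defined via \eqref{eq:ydual}--\eqref{gaussianconstruction} meets both conditions of Proposition~\ref{inexact}. The equality $\Y_T = \sign(\X_{0,T})$ holds automatically by construction: under the isometry hypothesis $\|\PPhi_T^H \PPhi_T - \I_T\| \leq 1/2$, the Gram matrix $\tilde{\PPhi}_T^H \tilde{\PPhi}_T$ is invertible, and a direct calculation gives $\text{vec}(\Y_T) = \tilde{\PPhi}_T^H \p = \text{vec}(\sign(\tilde{\X}_{0,T}))$. What remains is to control $\|\Y_{T^C}\|_{2,\infty} = \max_{j \in T^C} \|\y_j\|_2$ with high probability over the randomness of the inactive columns $\{\a_j : j \in T^C\}$.

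For a fixed $j \in T^C$, I would unravel the definition of $\Y$ from $\text{vec}(\Y) = \PPhi^H \p$ together with the structure of $\PPhi$ in \eqref{phi} to write the length-$K$ column $\y_j$ as $\y_j = \M \a_j$, where $\M := \B^H \diag(\p) \in \CCC^{K\times N}$. The key structural observation is that $\p$ is determined entirely by $\tilde{\PPhi}_T$ and the fixed vector $\sign(\tilde{\X}_{0,T})$, so $\p$ is a deterministic function of $\{\a_i : i \in T\}$ alone. Consequently, for $j \in T^C$, the matrix $\M$ is independent of $\a_j$, and conditional on $\p$ the vector $\y_j$ is, after identifying $\CCC^K$ with $\RRR^{2K}$, the image of the real standard Gaussian $\a_j$ under a deterministic linear map.

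I would then invoke Gaussian concentration for Lipschitz functions, giving, conditional on $\p$,
\begin{align*}
\PPP\!\bigl(\|\y_j\|_2 \geq \|\M\|_F + u \mid \p\bigr) \leq \exp\!\bigl(-u^2/(2\|\M\|^2)\bigr).
\end{align*}
To bound the Frobenius and spectral norms of $\M$ I would combine two ingredients. First, a direct expansion yields $\|\M\|_F^2 = \sum_n \|\tilde{\b}_n\|_2^2 |\p_n|^2 \leq (K\mu_{max}^2/N)\|\p\|_2^2$, where $\tilde{\b}_n$ denotes the $n$-th row of $\B$ and I used the coherence bound $\|\tilde{\b}_n\|_2^2 \leq K\mu_{max}^2/N$. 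Second, the isometry hypothesis gives $\|(\tilde{\PPhi}_T^H \tilde{\PPhi}_T)^{-1}\| \leq 2$, so $\|\p\|_2^2 \leq 2\|\text{vec}(\sign(\tilde{\X}_{0,T}))\|_2^2 = 2J$. Putting these together, $\|\M\|_F \leq \mu_{max}\sqrt{2KJ/N}$, and I would use the trivial inequality $\|\M\| \leq \|\M\|_F$. The hypothesis $N \geq 40\alpha\mu_{max}^2 KJ$ forces $\|\M\|_F \leq 1/\sqrt{20\alpha}$, so choosing $u = 1/2 - \|\M\|_F$ produces a per-index failure probability at most $e^{-\alpha}$. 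A union bound over the at most $M-J$ indices in $T^C$ yields the overall failure bound $(M-J)e^{-\alpha}$; the hypothesis $\alpha \geq \log(M-J)$ guarantees this is nontrivial.

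The main obstacle I anticipate is extracting the precise numerical constants: the crude bound $\|\M\| \leq \|\M\|_F$ is what ultimately forces the constant $40$ in the sample complexity, and ensuring $u^2/\|\M\|^2 \geq 2\alpha$ uniformly for all admissible $\alpha \geq \log(M-J)$ requires careful bookkeeping near the boundary of small $\alpha$. A secondary subtlety is justifying that the Gaussian concentration inequality applies to the complex-valued quantity $\|\M \a_j\|_2$; this is handled by stacking real and imaginary parts into a $2K \times N$ real matrix, and observing that this operation leaves both the Frobenius and the spectral norm of $\M$ unchanged.
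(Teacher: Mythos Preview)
Your proposal is correct and follows the paper's strategy almost exactly: the dual certificate construction, the identification of the $j$-th off-support column as $\P\a_j$ with $\P=\B^H\diag(\p)$ (your $\M$ is the paper's $\P$), the bound $\|\P\|_F^2\leq 2\mu_{max}^2KJ/N$ via $\|\p\|_2^2\leq 2J$, the independence of $\P$ from $\a_j$ for $j\in T^C$, and the final union bound are all identical.

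The one substantive difference is the concentration tool applied to $\|\P\a_j\|_2$. The paper controls the \emph{squared} norm via the quadratic-form tail bound of Hsu--Kakade--Zhang,
\[
\text{Pr}\bigl(\|\P\a_j\|_2^2>\tr(\bSigma)+2\sqrt{\tr(\bSigma^2)\alpha}+2\|\bSigma\|\alpha\bigr)\leq e^{-\alpha},\qquad \bSigma=\P^H\P,
\]
and then uses $\tr(\bSigma^2)\leq\tr(\bSigma)^2$ and $\|\bSigma\|\leq\tr(\bSigma)$ to reduce everything to $\tr(\bSigma)=\|\P\|_F^2$. You instead apply Gaussian Lipschitz concentration directly to the norm, using $\EEE\|\P\a_j\|_2\leq\|\P\|_F$ and Lipschitz constant $\|\P\|\leq\|\P\|_F$. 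Both are valid; the quadratic-form route is slightly sharper in the constants and delivers the bound for every $\alpha>1$, whereas your Lipschitz argument, as you note, needs a hair more room (roughly $\alpha\gtrsim 1.5$) for the inequality $1/2-\|\M\|_F\geq\sqrt{2\alpha}\,\|\M\|_F$ to close with the constant $40$. Since in the paper's regime $\alpha\geq\log(M-J)$ with $M$ large, this discrepancy is immaterial.
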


\begin{proof}
To simplify the notation, without loss of generality, we assume the support of $\X_0$ is the first $J$ columns. Let $\Y$  be the dual certificate matrix defined in~\eqref{eq:ydual}. After removing the columns of $\Y$ on support $T$, we obtain $\text{vec}(\tilde{\Y}_{T^C})\in \C^{K(M-J)\times 1}$ which takes the form
\begin{equation*}
\begin{aligned}
&\text{vec}(\tilde{\Y}_{T^C})=\tilde{\PPhi}^H_{T^C}\p\\
&=[\phi_{1,J+1}^H\p,\cdots,\phi_{K,J+1}^H\p,\phi_{1,J+2}^H\p,\cdots,\phi_{K,M}^H\p]^T\\
&=[\a_{J+1}^H\text{diag}(\bar{\b}_1)\p,\cdots,\a_{J+1}^H\text{diag}(\bar{\b}_K)\p,\\
&\qquad \qquad \qquad \a_{J+2}^H\text{diag}(\bar{\b}_1) \p,\cdots,\a_{M}^H\text{diag}(\bar{\b}_K)\p]^T.
\end{aligned}
\end{equation*}
The columns of $\tilde{\PPhi}_{T^C}$ are independent of $\p$ since $\p$ is constructed with $\a_i$ $(i\in T)$. Equivalently,
\begin{align*}
\tilde{\Y}_{T^C} =
\begin{bmatrix}
\a_{J+1}^H\text{diag}(\bar{\b}_1)\p & \cdots & \a_{M}^H\text{diag}(\bar{\b}_1)\p\\
\a_{J+1}^H\text{diag}(\bar{\b}_2)\p & \cdots & \a_{M}^H\text{diag}(\bar{\b}_2)\p\\
\vdots &  \ddots & \vdots\\
\a_{J+1}^H\text{diag}(\bar{\b}_K)\p & \cdots & \a_{M}^H\text{diag}(\bar{\b}_K)\p
\end{bmatrix}.
\end{align*}
Thus $||\Y_{T^C,j}||_2=||\P\a_{j}||_2$ $(j>J)$ where $\a_{j}$ is real and
\begin{align*}
\P =
\begin{bmatrix}
\p^T\text{diag}(\bar{\b}_1)\\
\p^T\text{diag}(\bar{\b}_2)\\
\vdots \\
\p^T\text{diag}(\bar{\b}_K)
\end{bmatrix}\in\C^{K\times N}.
\end{align*}
We set $\mathbf{\Sigma}=\P^H\P\in\C^{N\times N}$ and have
\begin{align*}
\tr(\mathbf{\Sigma})=||\P||_F^2\leq\frac{2\mu_{max}^2KJ}{N}
\end{align*}
since each row of $\P$ can be bounded by
\begin{equation*}
\begin{aligned}
&\quad||\p^T\text{diag}(\bar{\b}_k)||_2^2\leq\frac{\mu_{max}^2}{N}||\p||_2^2\\
                                   &=\frac{\mu_{max}^2}{N}\text{vec}(\sign(\tilde{\X}_{0,T}))^H(\tilde{\PPhi}_T^H\tilde{\PPhi}_T)^{-1}\text{vec}(\sign(\tilde{\X}_{0,T}))\\
                                   &\leq \frac{2\mu_{max}^2}{N}||\sign(\tilde{\X}_{0,T})||_F^2= \frac{2\mu_{max}^2J}{N}
\end{aligned}
\end{equation*}
since we assume $||\PPhi^H_T\PPhi_T-\I_{T}||\leq\frac{1}{2}$ which implies $||(\tilde{\PPhi}_T^H\tilde{\PPhi}_T)^{-1}||\leq 2$. By generalizing Proposition 1 in \cite{hsu2012tail} to our case, we have
\begin{align*}
\text{Pr}\left( ||\P\a_j||_2^2>\tr(\mathbf{\Sigma})+2\sqrt{\tr(\mathbf{\Sigma}^2)\alpha}+2||\mathbf{\Sigma}||\alpha \right)\leq e^{-\alpha}.
\end{align*}
In addition, because $\mathbf{\Sigma}$ is positive semi-definite and all its eigenvalues are non-negative, $\tr(\mathbf{\Sigma}^2)=\sum_{i=1}^N\lambda_i^2\leq(\sum_{i=1}^N\lambda_i)^2= \tr(\mathbf{\Sigma})^2$ where $\lambda_i$ is the $i$-th eigenvalue of $\mathbf{\Sigma}$. $||\mathbf{\Sigma}||=\sigma_{max}\leq\sum_{i=1}^N\lambda_i=\tr(\mathbf{\Sigma})$ where $\sigma_{max}$ is the maximum singular value of $\mathbf{\Sigma}$. Therefore, for $\alpha>1$, we obtain
\begin{equation*}
\begin{aligned}
&\quad\tr(\mathbf{\Sigma})+2\sqrt{\tr(\mathbf{\Sigma}^2)\alpha}+2||\mathbf{\Sigma}||\alpha \\
&\leq \tr(\mathbf{\Sigma})+2\tr(\mathbf{\Sigma})\alpha+2\tr(\mathbf{\Sigma})\alpha\leq \frac{2\mu_{max}^2KJ}{N}(1+4\alpha).
\end{aligned}
\end{equation*}
If we pick $N\geq40\alpha\mu_{max}^2KJ$, $||\P\a_j||_2>1/2$ with probability at most $e^{-\alpha}$. Taking the union over all $(M-J)$ non-zero columns of $\Y_{T^C}$ gives
\begin{align*}
\text{Pr}(||\Y_{T^C}||_{2,\infty}>1/2)&\leq (M-J)e^{-\alpha}.
\end{align*}
Therefore, $||\Y_{T^C}||_{2,\infty}\leq 1/2$ with probability at least $1-(M-J)e^{-\alpha}$ when $N\geq 40\alpha\mu_{max}^2KJ$. To make the probability meaningful, $\alpha$ should be greater than $\log(M-J)$.

\end{proof}

\subsection{Proof of Theorem \ref{maintheom1} for Random Gaussian Dictionary\label{gsum}}

In this section, we assemble the pieces to complete the proof of Theorem \ref{maintheom1} in the Gaussian case. To do so, we ensure that all sufficient conditions in Proposition~\ref{inexact} are met. First, if we take $\delta=1/2$ and set $\alpha_1>1$ in Lemma \ref{isometryproperty}, we have
\begin{align*}
||\L_{T}^*\L_{T}-\I_T||\leq\frac{1}{2}
\end{align*}
when $N\geq C_{\alpha_1}\mu_{max}^2KJ\log^2(N)$ with probability at least $1-N^{-\alpha_1+1}$. Then, applying the same $\alpha_1$ in Lemma \ref{operatornorm} and setting $\gamma = \sqrt{M\log(MN/2)+\alpha_1 \log(N)}$, we have that $||\L||\leq \gamma$ with probability at least $1-N^{-\alpha_1}\geq 1-N^{-\alpha_1+1}$. In Theorem~\ref{gaussian}, we have proved that $\Y_T=\sign(\X_{0,T})$ and $||\Y_{T^C}||_{2,\infty}\leq\frac{1}{2}$ when $N\geq 40\alpha_2\mu_{max}^2KJ$ with probability at least $1-(M-J)e^{-\alpha_2}$ and $\alpha_2\geq\log(M-J)$.

Note that if $\alpha_2\geq(\alpha_1-1)\log(N)+\log(M-J)$, we have $(M-J)e^{-\alpha_2}\leq N^{-\alpha_1+1}$. Combining the above requirements on $N$, all conditions in Proposition~\ref{inexact} are satisfied with probability at least $1-3N^{-\alpha_1+1}$ when $N\geq \max\{C_{\alpha_1},40\}((\alpha_1-1)\log(N)+\log(M-J))\mu_{max}^2KJ\log^2(N)$. Furthermore,
\begin{equation*}
\begin{aligned}
&\quad \max\{C_{\alpha_1},40\}((\alpha_1-1)\log(N)+\log(M-J))\cdot\\
&\qquad\qquad\qquad\qquad\qquad\qquad\qquad\mu_{max}^2KJ\log^2(N)\\
&\leq C_{\alpha}(\log(N)+\log(M-J))\mu_{max}^2KJ\log^2(N)
\end{aligned}
\end{equation*}
if we set $C_{\alpha}=\max\{C_{\alpha_1},40\}\cdot\alpha_1$ and $\alpha = \alpha_1>1$, which yields the Theorem \ref{maintheom1} when $\A$ is a random Gaussian matrix.

\subsection{Constructing The Dual Certificate for The Fourier Case}

In this section, we construct a certificate $\Y$ that satisfies the inexact duality condition in Proposition \ref{inexact} when $\A$ is a random Fourier matrix. Specifically, we construct the dual certificate using the golfing scheme \cite{gross2011recovering} which has been widely applied in compressive sensing \cite{ahmed2014blind, candes2011probabilistic}. In the golfing scheme, a series of matrices in the range of $\L^*$ are constructed iteratively. In each iteration step, only some of the measurements are utilized to ensure independence between iterations. And the constructed matrices will converge to $\sign(\X_{0,T})$ on support $T$ while entries on $T^C$ are small. The goal is to find the conditions under which the final constructed matrix can serve as the certificate matrix.

According to Section (4.2.1) in  \cite{ling2015self}, there exists a partition of the $N$ measurements into $P$ disjoint subsets such that each subset, $\Gamma_p$, contains $Q$ elements and
\begin{align*}
\max_{1\leq p\leq P}||\B_p-\frac{Q}{N}\I_K||<\frac{Q}{4N},
\end{align*}
where $\B_p=\sum_{l\in\Gamma p}\b'_l\b_l'^H$ and $Q>C\mu_{max}^2K\log(N)$. So
\begin{align}
\max_{1\leq p\leq P}||\B_p||\leq\frac{5Q}{4N}.
\label{BL}
\end{align}
Define $\L_{p}(\X)=\{\b'^H_l\X \a'_{l}\}_{l\in\Gamma_p}$ and $0$ on entries $l\notin \Gamma_p$. $\L_{p}^*(\x)=\sum_{l\in\Gamma_p}x_l\b'_l\a'^H_{l}$. The golfing scheme iterates through
\begin{align}
\Y_p = \Y_{p-1}-\frac{N}{Q}\L_{p}^*\L_{p}(\Y_{p-1,T}-\text{sign}(\X_{0,T})), \quad \Y_0=0.
\label{golfing}
\end{align}
\begin{Theorem}
If $\X_0$ is the ground truth solution to problem \eqref{21}, there exists a matrix $\Y\in\L^*$ such that
\begin{align*}
||\Y_T-\sign(\X_{0,T})||_{F}\leq \frac{1}{4\sqrt2\gamma}\text{ and } ||\Y_{T^C}||_{2,\infty}\leq\frac{1}{2}
\end{align*}
with probability at least $1-2N^{-\alpha+1}$ for $\alpha>1$ when
\begin{align*}
&N=PQ,\qquad
P\geq \frac{\log(4\sqrt{2J}\gamma)}{\log2}
\end{align*}
and
\begin{align*}
Q\geq C_{\alpha}\mu_{max}^2KJ(\log(M-J)+\log(K+1)+\log(N))
\end{align*}
where $C_\alpha$ a constant determined by $\alpha$.
\label{finexact}
\end{Theorem}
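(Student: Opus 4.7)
The plan is to take $\Y := \Y_P$ from the golfing recursion~\eqref{golfing} and verify both required inequalities. Introducing the support residual $\W_p := \sign(\X_{0,T}) - \Y_{p,T}$, a direct unrolling of~\eqref{golfing} gives
\begin{align*}
\W_p &= \bigl(\I_T - \tfrac{N}{Q}\L_{p,T}^*\L_{p,T}\bigr)\W_{p-1}, \quad \W_0 = \sign(\X_{0,T}), \\
\Y_{P,T^C} &= \tfrac{N}{Q}\sum_{p=1}^P \L_{p,T^C}^*\L_{p,T}(\W_{p-1}).
\end{align*}
Because the subsets $\Gamma_p$ are disjoint, the Fourier rows defining $\L_p$ are independent of those defining $\L_{p'}$ for $p' < p$; in particular $\W_{p-1}$ is independent of $\L_p$. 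This independence is precisely what allows the per-step concentration arguments to be applied.

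For the support-side inequality I would apply Lemma~\ref{isometryproperty} to the rescaled per-block operator $\sqrt{N/Q}\,\L_p$, which has the same structure as $\L$ but uses only the $Q$ independent Fourier rows indexed by $\Gamma_p$, with isometry constant $\delta = 1/2$. The hypothesis $Q \gtrsim \mu_{max}^2 KJ\log(N)$ implied by the stated lower bound on $Q$ then gives $\|\I_T - \tfrac{N}{Q}\L_{p,T}^*\L_{p,T}\| \leq 1/2$ for every $p$, with total failure probability at most $PN^{-\alpha}$ after a union bound. Iterating the recursion yields $\|\W_P\|_F \leq 2^{-P}\|\W_0\|_F = 2^{-P}\sqrt{J}$, and the choice $P \geq \log(4\sqrt{2J}\gamma)/\log 2$ produces $\|\W_P\|_F \leq \frac{1}{4\sqrt{2}\gamma}$, meeting the first requirement.

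The main obstacle is the off-support bound. Conditional on $\W_{p-1}$, the $j$-th block column ($j \in T^C$) of $\tfrac{N}{Q}\L_{p,T^C}^*\L_{p,T}(\W_{p-1})$ is a sum over $l \in \Gamma_p$ of $Q$ independent mean-zero random $K$-vectors of the form $\tfrac{N}{Q}\,(\b'^H_l\W_{p-1}\a'_{l,T})\,\bar{a}'_{l,j}\,\b'_l$ (mean zero because $\EEE[a'_{l,k}\bar{a}'_{l,j}] = 0$ for $k \in T$, $j \in T^C$). The coherence bound $|\B_{ij}| \leq \mu_{max}/\sqrt{N}$ controls the per-term magnitude and the partition property~\eqref{BL} controls the variance, so a vector Bernstein inequality should give the per-step estimate
\begin{align*}
\Bigl\|\tfrac{N}{Q}\L_{p,T^C}^*\L_{p,T}(\W_{p-1})\Bigr\|_{2,\infty} \leq \tfrac{1}{4\sqrt{J}}\,\|\W_{p-1}\|_F
\end{align*}
with failure probability at most $N^{-\alpha}$ per $p$, provided $Q \gtrsim \mu_{max}^2 KJ(\log(M-J)+\log(K+1)+\log(N))$, where the union bound over the $M-J$ off-support columns and the $K$ coordinates of each block supplies the $\log(M-J)+\log(K+1)$ factors. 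Telescoping with $\|\W_{p-1}\|_F \leq 2^{-(p-1)}\sqrt{J}$ and summing the geometric series then produces $\|\Y_{P,T^C}\|_{2,\infty} \leq \sum_{p=1}^P 2^{-(p-1)}/4 \leq 1/2$.

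Combining the two failure modes by a union bound over $p$, the total failure probability is at most $2PN^{-\alpha} \leq 2N^{-\alpha+1}$ after absorbing $\log P \leq \log N$ into $C_\alpha$. Substituting the chosen $P$ and $Q$ gives $N = PQ$ and the claim. The delicate step is the vector Bernstein estimate above: one has to carefully track how $\W_{p-1}$ enters both the random scalar $\b'^H_l\W_{p-1}\a'_{l,T}$ and the second-moment calculation, and the factor $\sqrt{J}$ in the target bound is exactly what allows the geometric series in $p$ to close at the value $1/2$ demanded by Proposition~\ref{inexact}.
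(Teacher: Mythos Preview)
Your proposal is correct and follows essentially the same route as the paper: construct $\Y=\Y_P$ via the golfing scheme~\eqref{golfing}, obtain the geometric decay $\|\W_p\|_F\leq 2^{-p}\sqrt{J}$ from a per-partition isometry estimate, and control each off-support column by a vector/matrix Bernstein inequality applied to the independent sums $\sum_{l\in\Gamma_p}\z_{l,i}$, using the coherence bound for the uniform term and~\eqref{BL} for the variance. The only cosmetic differences are that the paper states the off-support target as the absolute threshold $\|\Pi_{T^C}(\L_p^*\L_p\W_{p-1})\|_{2,\infty}\leq 2^{-p-1}\tfrac{Q}{N}$ (equivalent to your relative bound $\tfrac{1}{4\sqrt{J}}\|\W_{p-1}\|_F$ after substituting $\|\W_{p-1}\|_F\leq 2^{-p+1}\sqrt{J}$), and it invokes Lemma~4.6 of~\cite{ling2015self} rather than Lemma~\ref{isometryproperty} for the per-block isometry, since the restricted blocks satisfy $\B_p\approx\tfrac{Q}{N}\I_K$ only approximately via~\eqref{BL}.
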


\begin{proof}
If we define $\W_p=\Y_{p,T}-\text{sign}(\X_{0,T})$, (\ref{golfing}) gives
\begin{align}
\W_p = \frac{N}{Q}\left(\frac{Q}{N}-\L_{p,T}^*\L_{p,T}\right)(\W_{p-1}),
\label{Wdef}
\end{align}
where $\L_{p,T}(\X)=\{\b'^H_l\X \a'_{l,T}\}_{l\in\Gamma_p}$ with $0$ on entries $l\notin \Gamma_p$ and $\L_{p,T}^*(\x)=\sum_{l\in\Gamma_p}x_l\b'_l\a'^H_{l,T}$ which are used to generate the sequence $\Y_{p,T}$. And we can obtain
\begin{equation}
\begin{aligned}
||\W_p||_F &\leq ||\frac{N}{Q}(\frac{Q}{N}-\L_{p,T}^*\L_{p,T})||\cdot||\W_{p-1}||_F\leq \frac{1}{2}||\W_{p-1}||_F
\label{WF}
\end{aligned}
\end{equation}
with probability at least $1-N^{-\alpha+1}$ when $Q\geq C_{\alpha,1}\mu_{max}^2KJ\log(N)$ with $\alpha>1$ applying Lemma 4.6 in \cite{ling2015self}. Therefore,
\begin{align}
||\W_P||_F \leq 2^{-P}||\W_0||_F=2^{-P}||\text{sign}(\X_{0,T})||_F=2^{-P}\sqrt{J}.
\label{boundofw}
\end{align}
To ensure that $||\W_P||_F=||\Y_{P,T}-\sign(\X_{0,T})||_{F}\leq \frac{1}{4\sqrt2\gamma}$ where $\Y_{P}=\Y$ is the final constructed dual certificate after $P$ iterations, we need
\begin{align}
P\geq \frac{\log(4\sqrt{2J}\gamma)}{\log2}.
\label{pbound}
\end{align}

We now turn to find the conditions such that $||\Y_{T^C}||_{2,\infty}\leq\frac{1}{2}$. Note that substituting $\W_p$ into equation \eqref{golfing} yields
\begin{align*}
\Y = -\frac{N}{Q}\sum_{p=1}^P\L_{p}^*\L_{p}(\W_{p-1}).
\end{align*}
It is sufficient to show $||\Pi_{T^C}(\L_{p}^*\L_{p}\W_{p-1})||_{2,\infty}\leq2^{-p-1}\frac{Q}{N}$, where $\Pi_{T^C}$ is the projection operator which projects a matrix on the support $T^C$, to make $||\Y_{T^C}||_{2,\infty}\leq\frac{1}{2}$ because
\begin{equation*}
\begin{aligned}
&\quad||\Y_{T^C}||_{2,\infty}=||-\frac{N}{Q}\sum_{p=1}^P\Pi_{T^C}(\L_{p}^*\L_{p}(\W_{p-1}))||_{2,\infty}\\
&\leq \frac{N}{Q}\sum_{p=1}^P||\Pi_{T^C}(\L_{p}^*\L_{p}(\W_{p-1}))||_{2,\infty}\leq\frac{N}{Q}\sum_{p=1}^P\left(2^{-p-1}\frac{Q}{N}\right)\\
&=\sum_{p=1}^P2^{-p-1}=\frac{1}{2}(1-2^{-P})<\frac{1}{2}.\\
\end{aligned}
\end{equation*}
Defining $\PPhi_p$ to be $\PPhi$ with non-zero rows indexed by $\Gamma_p$ and zero otherwise, we have $\L_p(\X)=\PPhi_p\cdot\text{vec}(\X)$ and for a vector $\w= \text{vec}(\W)\in \C^{KM\times 1}$ where $\W\in\C^{K\times M}$ has support $T$,
\begin{equation}
\begin{aligned}
&||\Pi_{T^C}(\L_{p}^*\L_{p}(\W))||_{2,\infty}=\max_{i\in T^C}
\left\|
\begin{bmatrix}
\langle\PPhi_p^H\PPhi_p\w,\e_{K(i-1)+1}\rangle\\
\langle\PPhi_p^H\PPhi_p\w,\e_{K(i-1)+2}\rangle\\
\vdots\\
\langle\PPhi_p^H\PPhi_p\w,\e_{K(i-1)+K}\rangle\\
\end{bmatrix}
\right\|_2
\label{2infi}
\end{aligned}
\end{equation}
where $i$ is the column index and $\e_j$ is the $j$-th column of the identity matrix $\I_{KM}$. In addition,
\begin{equation*}
\begin{aligned}
\begin{bmatrix}
\langle\PPhi_p^H\PPhi_p\w,\e_{K(i-1)+1}\rangle\\
\langle\PPhi_p^H\PPhi_p\w,\e_{K(i-1)+2}\rangle\\
\vdots\\
\langle\PPhi_p^H\PPhi_p\w,\e_{K(i-1)+K}\rangle\\
\end{bmatrix}
&=
\sum_{l\in\Gamma_p}
\begin{bmatrix}
\langle\phi'_l\phi_l'^H\w,\e_{K(i-1)+1}\rangle\\
\langle\phi'_l\phi_l'^H\w,\e_{K(i-1)+2}\rangle\\
\vdots\\
\langle\phi'_l\phi_l'^H\w,\e_{K(i-1)+K}\rangle\\
\end{bmatrix}\\
&=\sum_{l\in\Gamma_p}\z_{l,i}.
\end{aligned}
\end{equation*}
Furthermore, we have $\E(\z_{l,i})=0$ because
\begin{equation*}
\begin{aligned}
\E(\z_{l,i})
&=\E\left(
\begin{bmatrix}
\langle\bar{\a}'_l\otimes \b'_l\cdot \bar{\a}'^H_l\otimes \b_l'^H\cdot\w,\e_{K(i-1)+1}\rangle\\
\langle\bar{\a}'_l\otimes \b'_l\cdot \bar{\a}'^H_l\otimes \b_l'^H\cdot\w,\e_{K(i-1)+2}\rangle\\
\vdots\\
\langle\bar{\a}'_l\otimes \b'_l\cdot \bar{\a}'^H_l\otimes \b_l'^H\cdot\w,\e_{K(i-1)+K}\rangle\\
\end{bmatrix}\right)\\
&=
\begin{bmatrix}
\langle(\I_M\otimes \b'_l\b_l'^H)\w,\e_{K(i-1)+1}\rangle\\
\langle(\I_M\otimes \b'_l\b_l'^H)\w,\e_{K(i-1)+2}\rangle\\
\vdots\\
\langle(\I_M\otimes \b'_l\b_l'^H)\w,\e_{K(i-1)+K}\rangle\\
\end{bmatrix}\\
&=
\begin{bmatrix}
\langle\text{vec}(\b'_l\b_l'^H\W),\e_{K(i-1)+1}\rangle\\
\langle\text{vec}(\b'_l\b_l'^H\W),\e_{K(i-1)+2}\rangle\\
\vdots\\
\langle\text{vec}(\b'_l\b_l'^H\W),\e_{K(i-1)+K}\rangle\\
\end{bmatrix}=\mathbf{0}
\end{aligned}
\end{equation*}
following $\E(\bar{\a}'_l\bar{\a}'^H_l)=\I_M$ since $\a'_l\in\C^{M\times 1}$ is the transpose of a random row of the $M\times M$ DFT matrix and $\b'_l\b_l'^H\W$ has support $T$ and $\mathbf{0}$ on $T^C$. Therefore, for $i\in T^C$, $\E(\z_{l,i})=0$.
Moreover,
\begin{align*}
||\z_{l,i}||_2\leq\sqrt{K\cdot\left(\frac{\mu_{max}^2\sqrt{KJ}||\w||_2}{N}\right)^2}=\frac{\mu_{max}^2K\sqrt{J}||\w||_2}{N}.
\end{align*}
Because each entry of $\z_{l,i}$ can be bounded by
\begin{equation*}
\begin{aligned}
&\quad|\langle\phi'_l\phi_l'^H\w,\e_{K(i-1)+j}\rangle|=|\e_{K(i-1)+j}^H\phi'_l\phi_l'^H\w|\\
&=||\e_{K(i-1)+j}^H\phi'_{l}||_2||(\bar{\a}'_{l}\otimes \b_l')^H\w||_2\\
&=||\e_{K(i-1)+j}^H\phi'_{l}||_2||(\bar{\a}'_{l,T}\otimes \b_l')^H\w||_2\\
&\leq \frac{\mu_{max}}{\sqrt{N}}||\bar{\a}'^H_{l,T}\otimes \b_l'^H||_2||\w||_2\leq\frac{\mu_{max}^2\sqrt{KJ}||\w||_2}{N}
\end{aligned}
\end{equation*}
where the third equality holds because $\w= \text{vec}(\W)$ and $\W$ has support T.
The variance of $\z_{l,i}$ is also bounded:
\begin{equation*}
\begin{aligned}
&\quad\max\left\{||\sum_{l\in\Gamma p}\E(\z_{l,i}\z_{l,i}^H)||,||\sum_{l\in\Gamma p}\E(\z_{l,i}^H\z_{l,i})||\right\}\\
&\leq \sum _{l\in\Gamma_p}\E (||\z_{l,i}||_2^2)\leq \frac{5\mu_{max}^2KQ||\w||^2_2}{4N^2}
\end{aligned}
\end{equation*}
because for each element of $||\z_{l,i}||_2^2$, we have
\begin{equation*}
\begin{aligned}
&\quad\E\left(|\langle\phi'_l\phi_l'^H\w,\e_{K(i-1)+j}\rangle|^2\right)=\E\left(||\e_{K(i-1)+i}^H\phi'_l\phi_l'^H\w||_2^2\right)\\
&\leq \frac{\mu_{max}^2}{N}\E\left(||\phi_l'^H\w||_2^2\right)=\frac{\mu_{max}^2}{N}\w^H(\I_M\otimes \b'_l\b_l'^H)\w
\end{aligned}
\end{equation*}
and therefore
\begin{equation*}
\begin{aligned}
\E (||\z_{l,i}||_2^2)\leq \frac{\mu_{max}^2K}{N}\w^H(\I_M\otimes \b'_l\b_l'^H)\w.
\end{aligned}
\end{equation*}
As a result,
\begin{equation}
\begin{aligned}
&\quad\sum_{l\in\Gamma_p}\E (||\z_{l,i}||_2^2)\leq\sum_{l\in\Gamma_p}\frac{\mu_{max}^2K}{N}\w^H(\I_M\otimes \b'_l\b_l'^H)\w\\
&=
\frac{\mu_{max}^2K}{N}\w^H(\I_M\otimes \B_p)\w\leq\frac{5\mu_{max}^2KQ||\w||^2_2}{4N^2}.
\label{var}
\end{aligned}
\end{equation}
The second inequality in (\ref{var}) applies the inequality (\ref{BL}) and $||\I_M\otimes \B_p||=||\I_M||\cdot||\B_p||$. We then apply the matrix Bernstein inequality from Theorem 1.6 in \cite{tropp2012user}. If we set $\w = \text{vec}(\W_{p-1})$ and we know from \eqref{boundofw} that $||\w||_2=||\W_{p-1}||_F\leq2^{-p+1}\sqrt{J}$, we obtain
\begin{equation*}
\begin{aligned}
\text{Pr}&\left(||\sum_{l\in\Gamma_p}\z_{l,i}||_2\geq t\right)\\
&\leq(K+1)\exp\left(\frac{-3t^2}{\frac{30\mu_{max}^2KQ||\w||_2^2}{4N^2}+\frac{2\mu_{max}^2K\sqrt{J}||\w||_2t}{N}}\right)\\
&\leq(K+1)\exp\left(\frac{-3Q}{128\mu_{max}^2KJ}\right)
\end{aligned}
\end{equation*}
where $t=2^{-p-1}\frac{Q}{N}$, for a particular $i\in T^C$ and $p$. We then take the union over all $i\in T^C$ and get
\begin{equation*}
\begin{aligned}
\text{Pr}&\left(||\Pi_{T^C}(\L_{p}^*\L_{p}(\W_{p-1}))||_{2,\infty}\geq 2^{-p-1}\frac{Q}{N}\right)\\
&\leq(M-J)(K+1)\exp\left(\frac{-3Q}{128\mu_{max}^2KJ}\right).
\end{aligned}
\end{equation*}
To ensure $||\Pi_{T^C}(\L_{p}^*\L_{p}(\W_{p-1}))||_{2,\infty}\leq2^{-p-1}\frac{Q}{N}$ for all $p$, we obtain
\begin{equation*}
\begin{aligned}
\text{Pr}&\left(||\Pi_{T^C}(\L_{p}^*\L_{p}(\W_{p-1}))||_{2,\infty}\leq 2^{-p-1}\frac{Q}{N},\quad \forall 1\leq p\leq P\right)\\
&>1-P(M-J)(K+1)\exp\left(\frac{-3Q}{128\mu_{max}^2KJ}\right)\\
&\geq 1-PN^{-\alpha}\geq 1-N^{-\alpha+1}
\end{aligned}
\end{equation*}
when $Q\geq \frac{128\mu_{max}^2KJ\alpha}{3}(\log(M-J)+\log(K+1)+\log(N))$ using the same $\alpha$ as in deriving equation (\ref{WF}). Setting $C_{\alpha}=\max\{C,C_{\alpha,1},\frac{128}{3}\alpha\}$, where $C$ is a constant comes from equation (\ref{BL}), gives us Theorem \ref{finexact}.
\end{proof}

\subsection{Proof of Theorem \ref{maintheom1} for Random Fourier Dictionary\label{fsum}}

We now complete the proof of Theorem \ref{maintheom1} in the case when $\A$ is a random Fourier matrix. First, combining the conditions and probabilities from Lemma \ref{isometryproperty} and \ref{operatornorm}, we know that the operator $\L$ satisfies the inequalities $||\L_{T}^*\L_{T}-\I_T||\leq\frac{1}{2}$ and $||\L||\leq \gamma=\sqrt{2M\log(2KM)+2M+1}$ with probability at least $1-(N+1)N^{-\alpha}\geq1-2N^{-\alpha+1}$ when $N\geq C_{\alpha,1}\mu_{max}^2KJ\log(N)$ for some constant, $C_{\alpha,1}$, that grows linearly with $\alpha>1$.

Applying the same $\alpha$ in Theorem \ref{finexact}, the desired dual matrix exists with probability at least $1-2N^{-\alpha+1}$ when $N\geq C_{\alpha,2}\mu_{max}^2KJ\log(4\sqrt{2J}\gamma)(\log(M-J)+\log(K+1)+\log(N))$. Merging the requirement on $N$ by setting $C_{\alpha}=\max\{C_{\alpha,1},C_{\alpha,2}\}$ and combining the probabilities, we complete the proof by applying Proposition \ref{inexact}.

\section{Proof of Theorem \ref{maintheom2}}
\label{sec:proof2}

To derive our recovery guarantee in the presence of measurement noise, the main ingredient of the proof is Theorem \ref{noisyinexact} which is a variation of the Theorem 4.33 in \cite{foucart2013mathematical} from the infinity norm optimization to $\ell_{2,1}$ norm optimization problem.

\begin{Theorem}
Define $\PPhi\in\C^{N\times KM}$ and $\PPhi\cdot\text{vec}(\X)=\L(\X)$. Suppose the ground truth $\X_0$ to \eqref{noisy} has $J$ non-zero columns with support $T$ and the measurement vector $\y=\L(\X_0)+\n$ with $||\n||_2\leq \eta$. For $\delta,\beta,\theta,\gamma,\tau>0$ and $\delta<1$, assume that
\begin{align*}
\max_{
\substack{i\in T^C }
}
&||\PPhi_T^H[\PPhi_{K(i-1)+1}\cdots\quad \PPhi_{K(i-1)+K}]||\leq\beta,
\\
&||\PPhi_T^H\PPhi_T-\I_T||\leq\delta
\end{align*}
and that there exists a matrix $\Y=\L^*(\p)\in \C^{K\times M}$ such that
\begin{equation*}
\begin{aligned}
||\Y_T-\sign(&\X_{0,T})||_F\leq\frac{1}{4\sqrt{2}\gamma},\quad ||\Y_{T^C}||_{2,\infty}\leq\theta,
\\
&\text{and}\quad ||\p||_{2}\leq\tau\sqrt{J}.
\end{aligned}
\end{equation*}
If $\rho:=\theta+\frac{\beta}{4\sqrt{2}\gamma(1-\delta)}<1$, then the minimizer, $\hat{\X}$, to \eqref{noisy} satisfies
\begin{equation*}
\begin{aligned}
||\hat{\X}&-\X_0||_F\leq\left(C_1+C_2\sqrt{J}\right)\eta
\end{aligned}
\end{equation*}
where $C_1$ and $C_2$ are two constants depending on $\delta,\beta,\theta,\gamma,\tau$.

\label{noisyinexact}
\end{Theorem}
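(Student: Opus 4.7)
The plan is to adapt the dual-certificate argument of Proposition~\ref{inexact} while carefully tracking the noise budget $\eta$ and the residual slackness $\|\Y_T - \sign(\X_{0,T})\|_F \leq 1/(4\sqrt{2}\gamma)$. Write $\H := \hat{\X} - \X_0$. Since both $\X_0$ and $\hat{\X}$ are feasible, $\|\L(\H)\|_2 \leq 2\eta$, and minimality of $\hat{\X}$ yields $\|\hat{\X}\|_{2,1} \leq \|\X_0\|_{2,1}$. Repeating the inequality chain in the proof of Proposition~\ref{pro1} then gives $\|\H_{T^C}\|_{2,1} \leq |\langle \H_T, \sign(\X_{0,T})\rangle|$. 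Using $\Y = \L^*(\p)$, I split
\begin{align*}
\langle \H_T, \sign(\X_{0,T})\rangle = \langle \H_T, \sign(\X_{0,T}) - \Y_T\rangle + \langle \H, \Y\rangle - \langle \H_{T^C}, \Y_{T^C}\rangle.
\end{align*}
The first term is bounded by Cauchy--Schwarz and the certificate hypothesis by $\|\H_T\|_F / (4\sqrt{2}\gamma)$; the middle term rewrites as $\langle \L(\H), \p\rangle$ and so is at most $2\eta\tau\sqrt{J}$; the last, via the $\ell_{2,1}/\ell_{2,\infty}$ H\"older inequality, is at most $\theta\|\H_{T^C}\|_{2,1}$. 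Rearranging yields
\begin{align*}
(1-\theta)\|\H_{T^C}\|_{2,1} \leq \frac{\|\H_T\|_F}{4\sqrt{2}\gamma} + 2\eta\tau\sqrt{J}.
\end{align*}

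Next I would derive a companion bound on $\|\H_T\|_F$ from $\|\PPhi_T^H\PPhi_T - \I_T\|\leq \delta$ and the cross-block hypothesis on $\beta$. Left-multiplying $\PPhi\,\text{vec}(\H) = \L(\H)$ by $\PPhi_T^H$ gives
\begin{align*}
\PPhi_T^H\PPhi_T\,\text{vec}(\H_T) = \PPhi_T^H\L(\H) - \PPhi_T^H\PPhi_{T^C}\,\text{vec}(\H_{T^C}),
\end{align*}
so $\PPhi_T^H\PPhi_T$ is invertible with $\|(\PPhi_T^H\PPhi_T)^{-1}\|\leq (1-\delta)^{-1}$ and $\|\PPhi_T\|\leq \sqrt{1+\delta}$. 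The key structural step is to estimate $\|\PPhi_T^H\PPhi_{T^C}\,\text{vec}(\H_{T^C})\|_2$: partition the product according to the $(M-J)$ width-$K$ column blocks of $\PPhi_{T^C}$ indexed by $i \in T^C$, apply the hypothesis $\|\PPhi_T^H[\PPhi_{K(i-1)+1}\,\cdots\,\PPhi_{K(i-1)+K}]\|\leq \beta$ to each, and use the triangle inequality. This delivers the block-sparse compatible bound $\beta\|\H_{T^C}\|_{2,1}$, so that
\begin{align*}
\|\H_T\|_F \leq \frac{1}{1-\delta}\bigl(2\sqrt{1+\delta}\,\eta + \beta\,\|\H_{T^C}\|_{2,1}\bigr).
\end{align*}

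Substituting this into the previous display and collecting $\|\H_{T^C}\|_{2,1}$ on one side yields a coefficient $1-\rho$ with $\rho = \theta + \beta/(4\sqrt{2}\gamma(1-\delta))$, which is strictly positive by hypothesis. Solving gives $\|\H_{T^C}\|_{2,1} \leq (C_1' + C_2'\sqrt{J})\eta$ with explicit constants in $\delta,\beta,\theta,\gamma,\tau$, and plugging this back into the $\|\H_T\|_F$ bound produces $\|\H_T\|_F \leq (C_1'' + C_2''\sqrt{J})\eta$ of the same form. Finally I would assemble
\begin{align*}
\|\H\|_F \leq \|\H_T\|_F + \|\H_{T^C}\|_F \leq \|\H_T\|_F + \|\H_{T^C}\|_{2,1},
\end{align*}
using $\|\cdot\|_F \leq \|\cdot\|_{2,1}$ (since $\sqrt{\sum a_i^2}\leq \sum a_i$ for $a_i \geq 0$) to obtain the claimed $\|\hat{\X} - \X_0\|_F \leq (C_1 + C_2\sqrt{J})\eta$.

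The main subtlety, and likely the main obstacle, is the column-block decomposition that yields the $\beta\|\H_{T^C}\|_{2,1}$ bound on the cross-term. In the scalar-sparsity version of this argument in \cite[Thm.~4.33]{foucart2013mathematical}, one would invoke an $\ell_1/\ell_\infty$ H\"older inequality with $\max_{i\in T^C}\|\PPhi_T^H\PPhi_i\|_2$; here each \emph{coordinate} is a $K$-dimensional column, so the correct coupling quantity is the operator norm of a $|T|K\times K$ block, and the $\ell_{2,1}$ nature of the sparsity must be respected throughout. The hypotheses on $\beta,\delta,\gamma,\theta,\tau$ are tuned precisely so that $\rho<1$, which is what allows the noiseless scheme to survive the noise perturbation.
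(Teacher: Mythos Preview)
Your proposal is correct and matches the paper's proof essentially step for step: the same cone inequality $\|\H_{T^C}\|_{2,1}\le|\langle\H_T,\sign(\X_{0,T})\rangle|$, the same three-term split via $\Y=\L^*(\p)$, the same bound $\|\H_T\|_F\le\frac{1}{1-\delta}\bigl(2\sqrt{1+\delta}\,\eta+\beta\|\H_{T^C}\|_{2,1}\bigr)$ obtained from the block-wise cross-term estimate, and the same final assembly $\|\H\|_F\le\|\H_T\|_F+\|\H_{T^C}\|_{2,1}$. One cosmetic point: when you say ``$\PPhi_T^H\PPhi_T$ is invertible,'' you should work with the compressed matrix $\tilde{\PPhi}_T^H\tilde{\PPhi}_T$ (as the paper does), since $\PPhi_T$ carries zero columns; the argument is otherwise unchanged.
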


\begin{proof}
Due to our assumption on the noise, $\X_0$ is a feasible solution. Assume the final minimizer to \eqref{noisy} is $\hat{\X}=\X_0+\H$, which implies
\begin{equation*}
\begin{aligned}
||\X_0||_{2,1}&\geq ||\X_0+\H||_{2,1}=||\X_{0,T}+\H_T||_{2,1}+||\H_{T^C}||_{2,1}\\
               &\geq |\langle\X_{0,T}+\H_T,\sign(\X_{0,T})\rangle|+||\H_{T^C}||_{2,1}\\
               &\geq||\X_{0}||_{2,1}-|\langle\H_T,\sign(\X_{0,T})\rangle|+||\H_{T^C}||_{2,1}
\end{aligned}
\end{equation*}
where the second inequality comes from equation (\ref{ineq}). Thus
\begin{equation}
\begin{aligned}
||\H_{T^C}||_{2,1}&\leq |\langle\H_T,\sign(\X_{0,T})\rangle|\\
                  &\leq |\langle\H_T,\sign(\X_{0,T})-\Y_{T}\rangle|+|\langle\H_T,\Y_{T}\rangle|\\
                  &\leq \frac{1}{4\sqrt{2}\gamma}||\H_T||_F+|\langle\H,\Y\rangle|+|\langle\H_{T^C},\Y_{T^C}\rangle|\\
                  &\leq \frac{1}{4\sqrt{2}\gamma}||\H_T||_F + 2\tau\eta\sqrt{J}+\theta||\H_{T^C}||_{2,1}.
\label{HTC21}
\end{aligned}
\end{equation}
The last inequality comes from the H\"{o}lder inequality and our assumption $||\n||\leq \eta$, which tells us
\begin{equation*}
\begin{aligned}
||\L(\H)||_2&=||\L(\hat{\X}-\X_0)||_2=||\L(\hat{\X})-\L(\X_0)||_2\\
&\leq||\L(\hat{\X})-\y||_2+||\y-\L(\X_0)||_2\leq2\eta
\end{aligned}
\end{equation*}
and
\begin{equation*}
\begin{aligned}
|\langle\H,\Y\rangle|&=|\langle\H,\L^*(\p)\rangle|=|\langle\L(\H),\p\rangle|\leq \tau\sqrt{J}||\L(\H)||_2\\
&\leq 2\tau\eta\sqrt{J}.
\end{aligned}
\end{equation*}
Moreover, $||\H_T||_F$ can also be bounded as follows.
\begin{equation}
\begin{aligned}
&\quad ||\H_T||_F = ||(\tilde{\PPhi}_T^H\tilde{\PPhi}_T)^{-1}\tilde{\PPhi}_T^H\tilde{\PPhi}_T\cdot\text{vec}(\tilde{\H}_T)||_2\\
           &\leq\frac{1}{1-\delta}||\tilde{\PPhi}_T^H\tilde{\PPhi}_T\cdot\text{vec}(\tilde{\H}_T)||_2=\frac{1}{1-\delta}||\PPhi_T^H\PPhi_T\cdot\text{vec}(\H_T)||_2\\
           &=\frac{1}{1-\delta}||\PPhi_T^H(\PPhi\cdot\text{vec}(\H)-\PPhi_{T^C}\cdot\text{vec}(\H_{T^C}))||_2\\
           &\leq  \frac{1}{1-\delta}||\PPhi_T^H\PPhi\cdot\text{vec}(\H)||_2 +\frac{1}{1-\delta}||\PPhi_T^H\PPhi_{T^C}\cdot\text{vec}(\H_{T^C})||_2\\
&= \frac{1}{1-\delta}||\PPhi_T^H\L(\H)||_2 +\frac{1}{1-\delta}||\PPhi_T^H\PPhi_{T^C}\cdot\text{vec}(\H_{T^C})||_2\\
&\leq \frac{2\eta\sqrt{1+\delta}}{1-\delta}+\frac{\beta}{1-\delta}||\H_{T^C}||_{2,1}
\label{HTF}
\end{aligned}
\end{equation}
because $||\PPhi_T^H\PPhi_T-\I_T||\leq\delta$ ensures that $||(\tilde{\PPhi}_T^H\tilde{\PPhi}_T)^{-1}||\leq\frac{1}{1-\delta}$ and $||\PPhi_T^H||\leq\sqrt{1+\delta}$ according to Lemma A.12 and Proposition A.15 in \cite{foucart2013mathematical} respectively. Furthermore,
\begin{equation*}
\begin{aligned}
&\qquad||\PPhi_T^H\PPhi_{T^C}\cdot\text{vec}(\H_{T^C})||_2\\
&= ||\sum_{i\in T^C}\PPhi_T^H[\PPhi_{K(i-1)+1}\cdots\quad \PPhi_{K(i-1)+K}]\h_i||_2\\
&\leq
\sum_{i\in T^C}||\PPhi_T^H[\PPhi_{K(i-1)+1}\cdots\quad \PPhi_{K(i-1)+K}]||\cdot||\h_i||_2\\
&\leq
\sum_{i\in T^C}\beta||\h_i||_2=\beta||\H_{T^C}||_{2,1}
\end{aligned}
\end{equation*}
in which $\h_i$ is the $i$-th column of $\H$. By setting $\rho=\theta+\frac{\beta}{4\sqrt{2}\gamma(1-\delta)}$, $\mu=\frac{\sqrt{1+\delta}}{1-\delta}$ and substituting the inequality \eqref{HTF} into \eqref{HTC21}, we obtain
\begin{align}
||\H_{T^C}||_{2,1}\leq\frac{\eta\mu}{2\sqrt{2}\gamma(1-\rho)}+\frac{2\tau\eta\sqrt{J}}{1-\rho}.
\label{HTC21COM}
\end{align}
Substituting inequality \eqref{HTC21COM} into \eqref{HTF} yields
\begin{align*}
||\H_{T}||_{F}\leq2\eta\mu+\frac{\beta}{1-\delta}\left(\frac{\eta\mu}{2\sqrt{2}\gamma(1-\rho)}+\frac{2\tau\eta\sqrt{J}}{1-\rho}\right).
\end{align*}
Combining the above two inequalities, we obtain
\begin{equation}
\begin{aligned}
||\H||_F&\leq||\H_T||_F+||\H_{T^C}||_F\leq ||\H_T||_F+||\H_{T^C}||_{2,1}\\
&\leq\bigg(2\mu+\frac{\mu}{2\sqrt{2}\gamma(1-\rho)}+\frac{\beta\mu}{2\sqrt{2}\gamma(1-\delta)(1-\rho)}\\
&\qquad+\left(\frac{2\tau}{1-\rho}+\frac{2\beta\tau}{(1-\delta)(1-\rho)} \right)\sqrt{J}\bigg)\eta\\
&=\left(C_1+C_2\sqrt{J}\right)\eta.
\label{noisyC}
\end{aligned}
\end{equation}
\end{proof}

Next, we specify the values of the variables $\theta$, $\tau$, $\delta$ and $\beta$ when $\A$ is a random Gaussian and Fourier matrix. The Orlicz-1 norm \cite{ahmed2014blind} and associated matrix Bernstein inequality are needed for determining the value of $\beta$ when $\A$ is Gaussian. Specifically, the Orlicz-1 norm is defined as \cite{ahmed2014blind}
\begin{align}
||\Z||_{\psi_1}=\inf_{u\geq 0}\{\E[\exp(||\Z||/u)]\leq2\}.
\label{ol1}
\end{align}
Its associated matrix Bernstein inequality is provided in Proposition 3 in \cite{ahmed2014blind} which can be rewritten as
\begin{Proposition}
Let $\Z_1,...,\Z_N$ be independent $M\times M$ random matrices with $\E(\Z_j)=0$. Suppose
\begin{align*}
\max_{1\leq j\leq N}||\Z_j||_{\psi_1}\leq R
\end{align*}
and define
\begin{align*}
\sigma^2=\max\Bigg\{||\sum_{j=1}^N\E(\Z_j\Z_j^H)||,||\sum_{j=1}^N\E(\Z_j^H \Z_j)||\Bigg\}.
\end{align*}
Then there exists a constant $C$ such that for $t > 0$
\begin{align*}
\text{Pr}\left(||\sum_{j=1}^N\Z_j||>t\right)\leq 2M\exp\left(-\frac{1}{C}\frac{t^2}{\sigma^2+\log\left(\frac{\sqrt{N}R}{\sigma}\right)Rt}\right).
\end{align*}
\label{orlicz1b}
\end{Proposition}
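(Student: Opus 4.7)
This proposition is a sub-exponential matrix Bernstein inequality, and the authors flag that it is merely a restatement of Proposition~3 in \cite{ahmed2014blind}. The natural plan is to reduce to Tropp's bounded matrix Bernstein inequality (cited earlier as Theorem~1.6 in \cite{tropp2012user}) via a truncation argument, which is the standard route from $\psi_1$ concentration to Bernstein-type tail bounds.

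First, I would translate the Orlicz-1 norm bound into a sub-exponential operator-norm tail. By Markov's inequality applied to the definition \eqref{ol1}, $\mathrm{Pr}(\|\Z_j\|>u)\leq 2\exp(-u/R)$ for every $u\geq 0$. Choose a truncation level $B>0$ to be optimized, and decompose $\Z_j=\tilde{\Z}_j+\hat{\Z}_j$ with $\tilde{\Z}_j:=\Z_j\mathbf{1}\{\|\Z_j\|\leq B\}$. A union bound then says that, outside an event of probability at most $2N\exp(-B/R)$, every $\Z_j$ coincides with $\tilde{\Z}_j$.

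Second, I would center the truncated pieces by putting $\Y_j:=\tilde{\Z}_j-\EEE[\tilde{\Z}_j]$, so that $\|\Y_j\|\leq 2B$ almost surely, $\EEE[\Y_j]=\0$, and the matrix variance of $\sum_j\Y_j$ differs from $\sigma^2$ only by tail contributions bounded using $\EEE[\|\Z_j\|^2\mathbf{1}\{\|\Z_j\|>B\}]$; standard $\psi_1$ moment calculations show this perturbation is absorbed into a constant multiple of $\sigma^2$ as long as $B\gtrsim R$. Similarly, $\|\sum_j\EEE[\tilde{\Z}_j]\|=\|\sum_j\EEE[\Z_j]-\sum_j\EEE[\hat\Z_j]\|=\|\sum_j\EEE[\hat\Z_j]\|$ since $\EEE[\Z_j]=\0$, and this mean-correction is bounded by $NR\exp(-B/R)$, which becomes negligible for the choice of $B$ below. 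Applying Tropp's bounded matrix Bernstein inequality to $\sum_j\Y_j$ gives
\[
\mathrm{Pr}\!\left(\Bigl\|\sum_{j=1}^N\Y_j\Bigr\|>s\right)\leq 2M\exp\!\left(-\frac{s^2/2}{C_0\sigma^2+\tfrac{2}{3}Bs}\right).
\]

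Third, I would pick $B$ to balance the truncation probability against the denominator $Bt$ in the Bernstein bound; the correct scaling is $B=R\log(\sqrt{N}R/\sigma)$, which makes $2N\exp(-B/R)=2\sigma/R$ (small for the regime of interest) while inserting exactly the factor $\log(\sqrt{N}R/\sigma)R t$ into the denominator. Merging the truncation event, the centering perturbation, and the Bernstein bound, with constants absorbed into a universal $C$, yields the stated inequality. The main technical obstacle is bookkeeping: verifying that the $\psi_1$ moment estimates indeed allow the mean-shift and the variance-shift to be absorbed into the target form without inflating $\sigma^2$ to something larger, and confirming that the choice of $B$ is admissible (which forces the mild condition $B\gtrsim R$, hence $\sigma\lesssim \sqrt{N}R$, satisfied in all our applications). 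Given that the statement is an almost verbatim reformulation of \cite[Prop.~3]{ahmed2014blind}, the cleanest exposition is to cite their proof and only highlight the reshuffling of constants into the form stated here.
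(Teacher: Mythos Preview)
The paper does not actually prove this proposition: it introduces the statement with the words ``provided in Proposition~3 in \cite{ahmed2014blind} which can be rewritten as,'' and then simply states it without proof. Your proposal correctly recognizes this, and your final recommendation---to cite \cite{ahmed2014blind} and note only the reshuffling of constants---is exactly what the paper does.

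Your additional truncation sketch goes beyond what the paper offers and is the standard route (indeed the one used in \cite{ahmed2014blind}, which traces back to Koltchinskii). It is essentially correct, with one minor arithmetic slip: with $B=R\log(\sqrt{N}R/\sigma)$ you get $2N\exp(-B/R)=2\sqrt{N}\sigma/R$, not $2\sigma/R$. This does not damage the argument, since the truncation probability is still dominated by the Bernstein tail in the relevant regime after absorbing constants into $C$, but it is worth correcting if you choose to include the sketch rather than merely cite.
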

The following theorem utilizes the Proposition \ref{orlicz1b} and depicts the conditions under which $\beta=1$.

\begin{Theorem}
For $\PPhi$ defined in \eqref{phi} and $\L(\X)=\PPhi\text{vec}(\X)$,
\begin{align*}
\max_{
\substack{i\in T^C }
}
&||\PPhi_T^H[\PPhi_{K(i-1)+1}\cdots\quad \PPhi_{K(i-1)+K}]||\leq1
\end{align*}
with probability at least $1-N^{-\alpha+1}$
\begin{itemize}
\item if $\A$ is a random Gaussian matrix and
\begin{equation*}
\begin{aligned}
N&\geq C_{\alpha}\mu_{max}^2KJ\left(\log(C\mu_{max}\sqrt{KJ})C+1\right)\cdot\\
&\qquad\qquad\qquad\left(\log(KM)+\log(M-J)+\log(N)\right),
\end{aligned}
\end{equation*}
\item if $\A$ is a random Fourier matrix and
\begin{align*}
N\geq C_{\alpha}\mu_{max}^2KJ\left(\log(KM)+\log(M-J)+\log(N)\right),
\end{align*}
\end{itemize}
where $C_{\alpha}$ is a constant that grows linearly with $\alpha>1$ and $C$ is a constant.
\label{beta}
\end{Theorem}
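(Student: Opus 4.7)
The plan is to fix $i \in T^C$ and decompose the target $KJ \times K$ block matrix as a sum of $N$ independent mean-zero random matrices indexed by the rows of $\A$, then invoke a matrix Bernstein inequality with $t = 1$ and take a union bound over the $M - J$ choices of $i$. Writing $\b^{(n)} \in \C^K$ for the $n$-th row of $\B$ reshaped as a column and $\a^{(n)}_T \in \C^J$ for the $n$-th row of $\A$ restricted to columns in $T$, the identity $\pphi_{k,j} = \diag(\b_k)\a_j$ combined with the Kronecker rule $(\u\otimes\v)\w^T = \u\otimes(\v\w^T)$ yields
\begin{equation*}
\PPhi_T^H [\PPhi_{K(i-1)+1},\ldots,\PPhi_{K(i-1)+K}] \;=\; \sum_{n=1}^N \Z_n, \quad \Z_n \;=\; \bigl(A_{n,i}\,\overline{\a^{(n)}_T}\bigr) \otimes M_n,
\end{equation*}
where $M_n := \overline{\b^{(n)}}(\b^{(n)})^T$. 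Since $i \notin T$, $\E[\Z_n] = 0$ in the Gaussian model (by independence of $A_{n,i}$ and $\a^{(n)}_T$) and in the Fourier model (because $\E[\overline{A_{n,j}} A_{n,i}] = M^{-1}(\F^H\F)_{j,i} = 0$ for $j \in T$); the $\Z_n$ are independent across $n$ in both models.

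Next I would bound the variance and the per-term norm. Using $(\u\otimes M)(\u\otimes M)^H = \u\u^H \otimes MM^H$ together with the Hermitian rank-one identity $M_n^2 = \|\b^{(n)}\|_2^2 M_n$, one obtains $\E[\Z_n\Z_n^H] = \|\b^{(n)}\|_2^2\,\I_J \otimes M_n$ and $\E[\Z_n^H\Z_n] = J\|\b^{(n)}\|_2^2 M_n$. Since the $M_n$ are positive semidefinite with $\sum_n M_n = \B^H\B = \I_K$, the PSD ordering gives
\begin{equation*}
\Bigl\|\sum_n \|\b^{(n)}\|_2^2 M_n\Bigr\| \;\le\; \max_n \|\b^{(n)}\|_2^2 \;\le\; \mu_{\max}^2 K/N,
\end{equation*}
hence $\sigma^2 \le J\mu_{\max}^2 K/N$ in both models. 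For the per-term norm, $\|\Z_n\| = |A_{n,i}|\,\|\a^{(n)}_T\|_2\,\|\b^{(n)}\|_2^2$: in the Fourier case this is a deterministic $\sqrt{J}\mu_{\max}^2 K/N$ (since $|A_{n,j}|=1$), while in the Gaussian case it is a product of two independent sub-Gaussians with $\psi_2$-norms $O(1)$ and $O(\sqrt{J})$ respectively, so the product-of-sub-Gaussians inequality gives $\|\Z_n\|_{\psi_1} \le R := C\mu_{\max}^2 K\sqrt{J}/N$.

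Applying the standard matrix Bernstein inequality (Theorem 1.6 of \cite{tropp2012user}) in the Fourier case and the Orlicz-$1$ Bernstein inequality of Proposition \ref{orlicz1b} in the Gaussian case, in each case with $t = 1$ and followed by a union bound over $i \in T^C$, produces the two sample complexities in the statement. The extra factor $\log(C\mu_{\max}\sqrt{KJ})$ present in the Gaussian bound but absent in the Fourier bound is exactly the $\log(\sqrt{N}R/\sigma)$ contribution inherent to the Orlicz-$1$ variant of Bernstein; the Fourier bound avoids it because the ambient bound on $\|\Z_n\|$ is deterministic.

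The main technical care lies in (i) the Kronecker factorization of $\Z_n$, which lets the $J$-dimensional and $K$-dimensional pieces separate so that $\B^H\B = \I_K$ can be directly applied in the variance bound; (ii) the rank-one identity $M_n^2 = \|\b^{(n)}\|_2^2 M_n$, which localizes the coherence parameter to the deterministic factor $\max_n \|\b^{(n)}\|_2^2$; and (iii) obtaining the sharp Orlicz-$1$ bound on $\|\Z_n\|_{\psi_1}$ in the Gaussian case, since a naive coordinate-wise or entrywise Bernstein would lose additional factors of $K$ or $J$ and degrade the final sample complexity.
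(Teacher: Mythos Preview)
Your proposal is correct and follows essentially the same route as the paper: the same Kronecker factorization $\Z_n = (\bar{\a}'_{n,T}\bar{\a}'^H_{n,i})\otimes(\b'_n\b'^H_n)$, the same variance computation via $\sum_n \b'_n\b'^H_n = \I_K$ and $\|\b'_n\|_2^2 \le \mu_{\max}^2 K/N$, the same Bernstein inequalities (Orlicz-$1$ for Gaussian, standard for Fourier), and the same union bound over $i\in T^C$.

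One small difference worth noting in the Gaussian case: the paper bounds the per-term norm via AM--GM, writing $|A_{n,i}|\,\|\a^{(n)}_T\|_2 \le \tfrac{1}{2}\|\a^{(n)}_{T\cup\{i\}}\|_2^2$, which is $\chi^2_{J+1}$ and hence has $\psi_1$-norm $O(J)$, giving $R = C\mu_{\max}^2KJ/N$. Your product-of-sub-Gaussians argument gives the sharper $R = C\mu_{\max}^2K\sqrt{J}/N$. With your $R$, however, $\log(\sqrt{N}R/\sigma)$ evaluates to $\log(C\mu_{\max}\sqrt{K})$, not $\log(C\mu_{\max}\sqrt{KJ})$; so your final sentence misidentifies where the $\sqrt{KJ}$ inside the logarithm comes from. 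This is harmless for proving the theorem as stated (your bound is only smaller), but be aware that you are not reproducing the stated log factor exactly---you are beating it.
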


\begin{proof}
We first prove the Gaussian case; the Fourier case is very similar. Note that for an arbitrary $i\in T^C$
\begin{equation*}
\begin{aligned}
&\quad||\PPhi_T^H[\PPhi_{K(i-1)+1}\cdots\quad \PPhi_{K(i-1)+K}]||\\
&=||\PPhi_T^H\PPhi_i||=||\sum_{j=1}^N \left(\bar{\a}'_{j,T}\otimes \b'_{j}\right)\cdot\left(\bar{\a}'^H_{j,i}\otimes \b'^H_j\right)||\\
&= ||\sum_{j=1}^N\left(\bar{\a}'_{j,T}\bar{\a}_{j,i}'^H)\otimes (\b'_j\b_j'^H\right)||= ||\sum_{j=1}^N \Z_j||
\end{aligned}
\end{equation*}
where $\PPhi_i\in \C^{N\times KM}$ is $\PPhi$ but only contains values in the $\left(K(i-1)+1\right)$-th to $\left(K(i-1)+K\right)$-th columns and is zero otherwise. $\PPhi_i$ can also be viewed as an extension of $[\PPhi_{T^C,K(i-1)+1}\cdots\quad \PPhi_{T^C,K(i-1)+K}]$ by padding zero columns. Moreover, $\bar{\a}_{j,i}'$ is the conjugate of the $j$-th column of $\A^T$ who has only one non-zero value in the $i$-th entry. In addition, $\E(\Z_j)=\E(\bar{\a}'_{j,T}\bar{\a}_{j,i}'^H\otimes \b'_j\b_j'^H)=\E(\bar{\a}'_{j,T}\bar{\a}_{j,i}'^H)\otimes \b'_j\b_j'^H=\mathbf{0}$ for $i\in T^C$. By applying the property of the Kronecker product, we estimate the spectral norm of $\Z_j$ which can be used to determine its Orlicz-1 norm:
\begin{equation*}
\begin{aligned}
||\Z_j||&=||\bar{\a}'_{j,T}\bar{\a}_{j,i}'^H\otimes \b'_j\b_j'^H||=||\b'_j\b_j'^H||\cdot||\bar{\a}'_{j,T}\bar{\a}_{j,i}'^H||\\
&=|\b_j'^H \b'_j|\cdot||\bar{\a}'_{j,T}\bar{\a}_{j,i}'^H||\leq \frac{\mu_{max}^2K}{N}||\bar{\a}'_{j,T}\bar{\a}_{j,i}'^H||\\
&=\frac{\mu_{max}^2K}{N}||\bar{\a}'_{j,T}||_2||\bar{\a}_{j,i}'||_2\\
&\leq \frac{\mu_{max}^2K}{N}\cdot\frac{||\bar{\a}'_{j,T}||_2^2+||\bar{\a}_{j,i}'||_2^2}{2}=\frac{\mu_{max}^2K}{2N}||\bar{\a}'_{j,\{T,i\}}||_2^2
\end{aligned}
\end{equation*}
in which $\bar{\a}'_{j,\{T,i\}}$ contains non-zero values in the entries indexed by $\{T,i\}$. Therefore, $||\bar{\a}'_{j,\{T,i\}}||_2^2$ follows the Chi-squared distribution with $J+1$ degrees of freedom which implies that $||\Z_j||_{\psi_1}\leq \frac{C\mu_{max}^2K(J+1)}{2N}\leq \frac{C\mu_{max}^2K\cdot2J}{2N}=\frac{C\mu_{max}^2KJ}{N}=R$ for some constant $C$ according to the proof of Lemma 4.7 in \cite{ling2015self} and the definition of Orlicz-1 norm in \eqref{ol1}. Moreover,
\begin{equation*}
\begin{aligned}
&\quad ||\sum_{j=1}^N\E(\Z_j^H\Z_j)||\\
&=||\sum_{j=1}^N\E\left[(\bar{\a}'_{j,i}\bar{\a}_{j,T}'^H)\otimes (\b'_j\b_j'^H)\cdot(\bar{\a}'_{j,T}\bar{\a}_{j,i}'^H)\otimes (\b'_j\b_j'^H)\right]||\\
&=||\sum_{j=1}^N\E\left(\bar{\a}'_{j,i}\bar{\a}_{j,T}'^H\bar{\a}'_{j,T}\bar{\a}_{j,i}'^H\right)\otimes (\b'_j\b_j'^H \b'_j\b_j'^H)||\\
&=||J\I_{M,i}\otimes \left(\sum_{j=1}^N||\b_j'||_2^2\cdot \b'_j\b_j'^H\right)||\\
&\leq \frac{\mu_{max}^2KJ}{N}||\I_{M,i}||\cdot||\sum_{j=1}^N\b_j'\b_j'^H||=\frac{\mu_{max}^2KJ}{N}
\end{aligned}
\end{equation*}
following from the fact that $\E\left(\bar{\a}'_{j,i}\bar{\a}_{j,T}'^H\bar{\a}'_{j,T}\bar{\a}_{j,i}'^H\right)=J\I_{M,i}$ for all $j$ and $\sum_{j=1}^N\b_j'\b_j'^H=\I_K$ from the assumption. On the other hand,
\begin{equation*}
\begin{aligned}
&\quad ||\sum_{j=1}^N\E(\Z_j\Z_j^H)||\\
&=||\sum_{j=1}^N\E\left(\bar{\a}'_{j,T}\bar{\a}_{j,i}'^H\bar{\a}'_{j,i}\bar{\a}_{j,T}'^H\right)\otimes \b'_j\b_j'^H \b'_j\b_j'^H||\\
&=||\I_{M,T}\otimes \left(\sum_{j=1}^N||\b_j'||_2^2\cdot \b'_j\b_j'^H\right)||\\
&\leq \frac{\mu_{max}^2K}{N}||\I_{M,T}||\cdot||\sum_{j=1}^N\b_j'\b_j'^H||=\frac{\mu_{max}^2K}{N}.
\end{aligned}
\end{equation*}
Therefore, $\max\{||\sum_{j=1}^N\E(\Z_j\Z_j^H)||,||\sum_{j=1}^N\E(\Z_j^H \Z_j)||\}=\frac{\mu_{max}^2KJ}{N}=\sigma^2$.
Substituting the variables $R$ and $\sigma^2$ into Proposition \ref{orlicz1b} and taking the union bound over all $i\in T^C$ results in
\begin{equation*}
\begin{aligned}
&\text{Pr}\left(\max_{i\in T^C}||\PPhi_T^H[\PPhi_{T^C,K(i-1)+1}\cdots\quad \PPhi_{T^C,K(i-1)+K}]||>1\right)\\
&\leq2(M-J)KM\exp\Bigg(-\frac{1}{C_{0}}\cdot\\
&\qquad\qquad\qquad\frac{N}{\mu_{max}^2KJ+\log\left(C\mu_{max}\sqrt{KJ}\right)C\mu_{max}^2 KJ}\Bigg).
\end{aligned}
\end{equation*}
Define a variable $\alpha>1$ and set
\begin{equation*}
\begin{aligned}
N&\geq C_{\alpha}\mu_{max}^2KJ\left(\log(C\mu_{max}\sqrt{KJ})C+1\right)\cdot\\
&\qquad\qquad\qquad\left(\log(KM)+\log(M-J)+\log(N)\right)\\
&\geq C_0\mu_{max}^2KJ\left(\log(C\mu_{max}\sqrt{KJ})C+1\right)\cdot\\
&\qquad\qquad\qquad\left(\log(KM)+\log(M-J)+\alpha\log(N)\right),
\end{aligned}
\end{equation*}
where $C_{\alpha}=C_0\alpha$. Simplifying the probability term gives
\begin{equation*}
\begin{aligned}
&\text{Pr}\left(\max_{i\in T^C}||\PPhi_T^H[\PPhi_{T^C,K(i-1)+1}\cdots\quad \PPhi_{T^C,K(i-1)+K}]||\leq1\right)\\
&> 1-2N^{-\alpha}\geq 1-N\cdot N^{-\alpha}=1- N^{-\alpha+1}.
\end{aligned}
\end{equation*}

Following the same procedures, when $\A$ is a random Fourier matrix and for any $i\in T^C$, we have $\E(\Z_j)=\E(\bar{\a}'_{j,T}\bar{\a}_{j,i}'^H)\otimes \b'_j\b_j'^H=0$, $||\Z_j||=\frac{\mu_{max}^2K}{N}||\bar{\a}'_{j,T}||_2||\bar{\a}_{j,i}'||_2=\frac{\mu_{max}^2K\sqrt{J}}{N}=R$ and $\sigma^2=\frac{\mu_{max}^2KJ}{N}$. The matrix Bernstein inequality implies
\begin{equation*}
\begin{aligned}
&\text{Pr}\left(\max_{i\in T^C}||\PPhi_T^H[\PPhi_{T^C,K(i-1)+1}\cdots\quad \PPhi_{T^C,K(i-1)+K}]||>1\right)\\
&\leq2(M-J)KM\exp\left(-\frac{N}{2\mu_{max}^2KJ+2/3\mu_{max}^2K\sqrt{J}}\right).
\end{aligned}
\end{equation*}
Similarly, if we define a variable $\alpha>1$ and let
\begin{equation*}
\begin{aligned}
N&\geq C_{\alpha}\mu_{max}^2KJ\left(\log(KM)+\log(M-J)+\log(N)\right)\\
&\geq(2\mu_{max}^2KJ+ \frac{2}{3}\mu_{max}^2K\sqrt{J})(\log(KM)+\log(M-J)\\
&\qquad\qquad\qquad+\alpha\log(N)),
\end{aligned}
\end{equation*}
by setting $C_{\alpha}=\frac{8}{3}\alpha$, simplifying the probability gives us
\begin{equation*}
\begin{aligned}
&\text{Pr}\left(\max_{i\in T^C}||\PPhi_T^H[\PPhi_{T^C,K(i-1)+1}\cdots\quad \PPhi_{T^C,K(i-1)+K}]||\leq1\right)\\
&> 1-2N^{-\alpha}\geq 1-N\cdot N^{-\alpha}=1- N^{-\alpha+1}.
\end{aligned}
\end{equation*}
\end{proof}

\subsection{Proof of Theorem \ref{maintheom2} for Random Gaussian Dictionary\label{noisyC1}}
According to Section \ref{gsum}, $||\PPhi_T^H\PPhi_T-\I_T||\leq\frac{1}{2}=\delta$, $||\Y_{T^C}||_{2,\infty}\leq\frac{1}{2}=\theta$ and $\gamma = \sqrt{M\log(MN/2)+\alpha \log(N)}$ with probability at least $1-3N^{-\alpha+1}$ when $\frac{N}{\log^2N}\geq C_{\alpha,1}\mu_{max}^2KJ(\log(N)+\log(M-J))$. Moreover, in Theorem \ref{gaussian}, where we construct the dual certificate matrix when $\A$ is a random Gaussian matrix, we define $\p = \tilde{\PPhi}_T(\tilde{\PPhi}_T^H\tilde{\PPhi}_T)^{-1}\text{vec}(\sign(\tilde{\X}_{0,T}))\in\C^{N\times 1}$ and $||\PPhi_T^H\PPhi_T-\I_T||\leq\frac{1}{2}$ leads to $||(\tilde{\PPhi}_T^H\tilde{\PPhi}_T)^{-1}||\leq 2$. So
\begin{equation*}
\begin{aligned}
||\p||_2&=\sqrt{\text{vec}(\sign(\tilde{\X}_{0,T}))^H(\tilde{\PPhi}_T^H\tilde{\PPhi}_T)^{-1}\text{vec}(\sign(\tilde{\X}_{0,T}))}\\
&\leq \sqrt{2||\text{vec}(\sign(\tilde{\X}_{0,T}))||_2^2}=\sqrt{2J}
\end{aligned}
\end{equation*}
which implies $\tau=\sqrt{2}$. If we use the same $\alpha$ in Theorem \ref{beta}, we have $\beta=1$ with probability at least $1-N^{-\alpha+1}$ when
\begin{equation*}
\begin{aligned}
N&\geq C_{\alpha,2}\mu_{max}^2KJ\left(\log(C\mu_{max}\sqrt{KJ})C+1\right)\cdot\\
&\qquad\qquad\qquad\left(\log(MK)+\log(M-J)+\log(N)\right).
\end{aligned}
\end{equation*}
Combining the requirement on $N$ and setting $C_{\alpha}=\max\{C_{\alpha,1},C_{\alpha,2}\}$ yield
\begin{equation}
\begin{aligned}
\frac{N}{\log^2N}\geq &C_{\alpha}\mu_{max}^2KJ\left(\log(C\mu_{max}\sqrt{KJ})C+1\right)\cdot\\
&\qquad\qquad(\log(M-J)+\log(MK)+\log(N)).
\label{gaussiannoisy}
\end{aligned}
\end{equation}
Therefore, the conditions in Theorem \ref{noisyinexact} are satisfied with probability at least $1-4N^{\alpha+1}$ when $N$ is as defined in equation (\ref{gaussiannoisy}). In addition, after substituting the parameters $\rho= \theta+\frac{\beta}{4\sqrt{2}\gamma(1-\delta)}=\frac{1}{2}+\frac{1}{2\sqrt{2}\gamma}<1$ and $\mu=\frac{\sqrt{1+\delta}}{1-\delta}=\sqrt{6}$ into \eqref{noisyC}, $2\mu+\frac{\mu}{2\sqrt{2}\gamma(1-\rho)}+\frac{\beta\mu}{2\sqrt{2}\gamma(1-\delta)(1-\rho)}=2\sqrt{6}+\frac{3\sqrt{6}}{\sqrt{2}\gamma-1}\leq5\sqrt{6}=C_1$ and $\frac{2\tau}{1-\rho}+\frac{2\beta\tau}{(1-\delta)(1-\rho)}=\frac{24\gamma}{\sqrt{2}\gamma-1}\leq24=C_2$.

\subsection{Proof of Theorem \ref{maintheom2} for Random Fourier Dictionary\label{noisyC2}}
In the proof of Theorem \ref{finexact}, we have derived $\Y = -\frac{N}{Q}\sum_{p=1}^P\L_{p}^*\L_{p}(\W_{p-1})$. Since the sets $\Gamma_p$ are disjoint, the indices of non-zero entries of $\L_{p}(\W_{p-1})$ for different $p$ are disjoint and $\Y = \L^*(-\frac{N}{Q}\sum_{p=1}^P\L_{p}(\W_{p-1}))=\L^*(\p)$. Moreover, $\W_{p-1}$ has support $T$ from its definition in \eqref{Wdef} which gives us
\begin{equation*}
\begin{aligned}
&\quad||\p||_2^2\leq\frac{N^2}{Q^2}\sum_{p=1}^P||\L_{p}(\W_{p-1})||_2^2
=\frac{N^2}{Q^2}\sum_{p=1}^P||\L_{p,T}(\W_{p-1})||_2^2\\
&= \frac{N^2}{Q^2}\sum_{p=1}^P \text{vec}(\W_{p-1})^H\PPhi_{p,T}^H\PPhi_{p,T}\text{vec}(\W_{p-1})\\
&\leq \frac{N^2}{Q^2}\sum_{p=1}^P||\PPhi_{p,T}^H\PPhi_{p,T}||\cdot||\W_{p-1}||_F^2\leq \frac{N^2}{Q^2}\sum_{p=1}^P \frac{3Q}{2N}4^{-p+1}J\\
&\leq\frac{2NJ}{Q}=2PJ
\end{aligned}
\end{equation*}
because $||\PPhi_{p,T}^H\PPhi_{p,T}||\leq \frac{3Q}{2N}$ and $||\W_{p-1}||_F^2\leq 4^{-p+1}J$ following from Lemma 4.6 in \cite{ling2015self} and equation (\ref{boundofw}) respectively. $\PPhi_{p,T}$ is $\PPhi$ constructed with $\A_T$ and only rows indexed by $\Gamma_p$ are non-zero. Therefore, $||\p||_2\leq\sqrt{2PJ}$ and $\tau=\sqrt{2P}$ with $P\geq \log(4\sqrt{2J}\gamma)/\log2$ defined in equation (\ref{pbound}). In addition, from Section \ref{fsum} and Theorem \ref{maintheom1}, we have $\delta=\frac{1}{2}$, $\theta=\frac{1}{2}$ and $\gamma = \sqrt{2M\log(2KM)+2M+1}$ with probability at least $1-4N^{-\alpha+1}$ when
\begin{equation*}
\begin{aligned}
N\geq &C_{\alpha ,1}\mu_{max}^2KJ\log(4\sqrt{2J}\gamma)\cdot\\
&\qquad\qquad\qquad(\log(M-J)+\log(K+1)+\log(N)).
\end{aligned}
\end{equation*}
Applying the same $\alpha$ to Theorem \ref{beta}, $\beta=1$ with probability at least $1-N^{-\alpha+1}$ when $N\geq C_{\alpha,2}\mu_{max}^2KJ\left(\log(KM)+\log(M-J)+\log(N)\right)$.
One can easily examine that $\rho= \theta+\frac{\beta}{4\sqrt{2}\gamma(1-\delta)}=\frac{1}{2}+\frac{1}{2\sqrt{2}\gamma}<1$.

If we set $C_\alpha = \max\{C_{\alpha,1},C_{\alpha,2}\}$ and merge the requirements on $N$, we obtain
\begin{equation}
\begin{aligned}
N\geq &C_{\alpha}\mu_{max}^2KJ\log(4\sqrt{2J}\gamma)\cdot\\
&\qquad\qquad(\log(M-J)+\log(MK)+\log(N)).
\label{fouriernoisy}
\end{aligned}
\end{equation}
Thus, the conditions in Theorem \ref{noisyinexact} are satisfied with probability at least $1-5N^{-\alpha+1}$ when $N$ satisfies \eqref{fouriernoisy}. Moreover, since $\mu=\frac{\sqrt{1+\delta}}{1-\delta}$, $2\mu+\frac{\mu}{2\sqrt{2}\gamma(1-\rho)}+\frac{\beta\mu}{2\sqrt{2}\gamma(1-\delta)(1-\rho)}=2\sqrt{6}+\frac{3\sqrt{6}}{\sqrt{2}\gamma-1}\leq5\sqrt{6}=C_1$ and $\frac{2\tau}{1-\rho}+\frac{2\beta\tau}{(1-\delta)(1-\rho)}=\frac{24\gamma\sqrt{P}}{\sqrt{2}\gamma-1}\leq24\sqrt{P}=C_2\sqrt{P}$ with $P\geq \log(4\sqrt{2J}\gamma)/\log2$.

\section{Numerical Simulations}
\label{sec:sims}

Here we present numerical simulations that illustrate and support our theoretical results. We set $\B\in \C^{N\times K}$ to be the first $K$ columns of the normalized DFT matrix $\frac{1}{\sqrt{N}}\F\in \C^{N\times N}$. The parameters $c_j$ and $\h_j$ are generated by sampling independently from the standard normal distribution, and the $J$ non-zero columns of the ground truth solution $\X_0=[c_j\h_j \cdots c_M\h_M]$ are selected uniformly. 40 simulations are run for each setting, based on which we compute the percentage of successful recovery. Both the dictionary, $\A$, and the ground truth solution, $\X_0$, including the support and its content, are sampled independently for each simulation. We solve problems \eqref{21} and \eqref{noisy} via CVX \cite{grant2008cvx}, and in the noiseless case if the relative error between the solution $\hat{\X}$ and the ground truth $\X_0$ is smaller than $10^{-5}$, $\frac{||\hat{\X}-\X_0||_F}{||\X_0||_F}\leq10^{-5}$, we count it as a successful recovery.

\begin{figure}[h]
\begin{center}
   \includegraphics[width=0.8\linewidth]{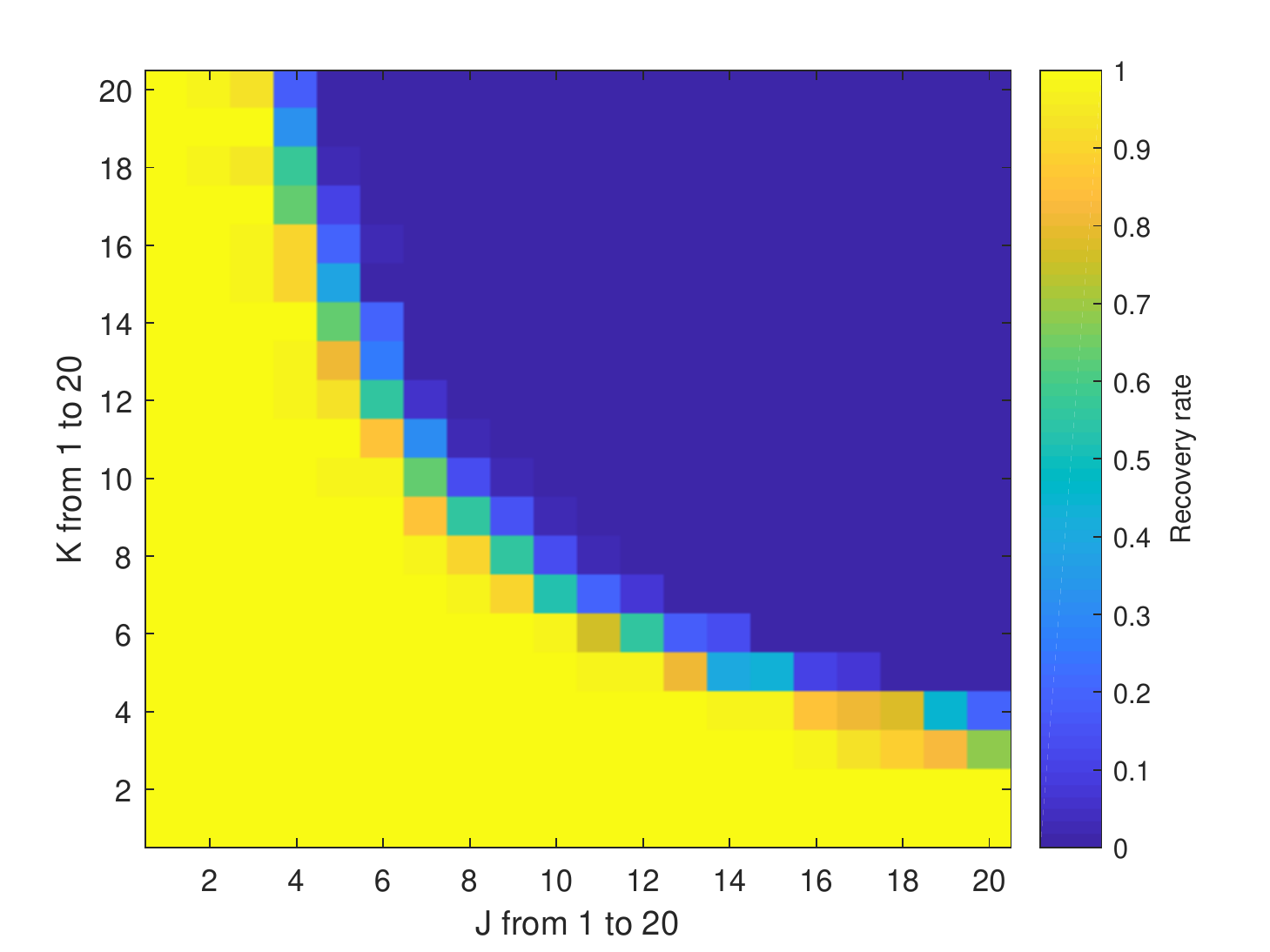}
\end{center}
   \caption{ The relation between the subspace dimension of the sensing matrix, $K$, and the number of committed atoms, $J$, in terms of the success recovery rate when $\A$ is a random Gaussian matrix. }
\label{fig:KJgaussian}
\end{figure}

\begin{figure}[h]
\begin{center}
   \includegraphics[width=0.8\linewidth]{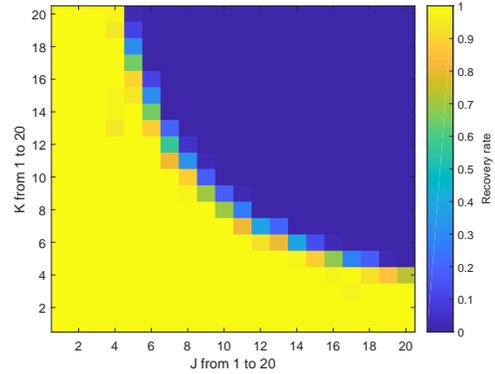}
\end{center}
   \caption{The relation between the subspace dimension of the sensing matrix, $K$, and the number of committed atoms, $J$, in terms of the success recovery rate when $\A$ is a random Fourier matrix. }
\label{fig:KJfourier}
\end{figure}

\subsection{ The Sufficient Number of Measurement\label{sufficientN}}
In the first noiseless simulation, we examine the recovery rate with respect to the parameters $K$ and $J$. We fix $M=200$ and $N=100$ and let $K$ and $J$ range from $1$ to $20$. The results are summarized in the phase transition plots of Fig. \ref{fig:KJgaussian} for the random Gaussian dictionary and Fig. \ref{fig:KJfourier} for the random Fourier dictionary. The results for the two dictionaries are similar. The reciprocal nature of the phase transition boundary supports the linear scaling with $KJ$ in equations (\ref{maintheom11}) and (\ref{maintheom12}). Roughly when $KJ\leq60$, the recovery success rate is satisfactory.

To further illustrate the linear scaling of the required number of measurements $N$ with respect to $K$ and $J$, we fix $M=200$ and $K=5$, and let $N$ and $J$ range from $30$ to $100$ and $1$ to $20$, respectively. The results are recorded in Figs. \ref{fig:KJsepGaussianJ} and \ref{fig:KJsepFourierJ} for the random Gaussian and Fourier dictionaries, respectively. The same simulation but switching the roles of $K$ and $J$ is also implemented, and the results are shown in Figs. \ref{fig:KJsepGaussianK} and \ref{fig:KJsepFourierK}. These results support the linear scaling of Theorem~\ref{maintheom1}.

\begin{figure}[t]
\begin{center}
   \includegraphics[width=0.8\linewidth]{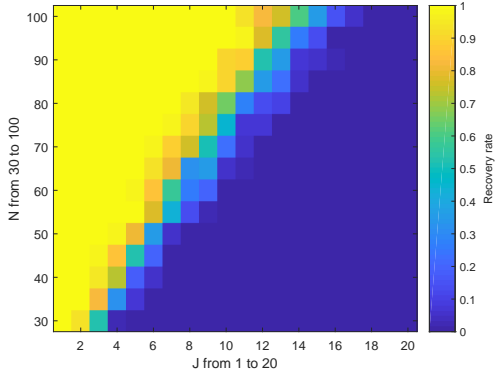}
\end{center}
   \caption{The nearly linear relation between the dimension of the observed signal, $N$, and the number of committed atoms, $J$, in terms of the success recovery rate when $\A$ is a random Gaussian matrix. }
\label{fig:KJsepGaussianJ}
\end{figure}

\begin{figure}[t]
\begin{center}
   \includegraphics[width=0.8\linewidth]{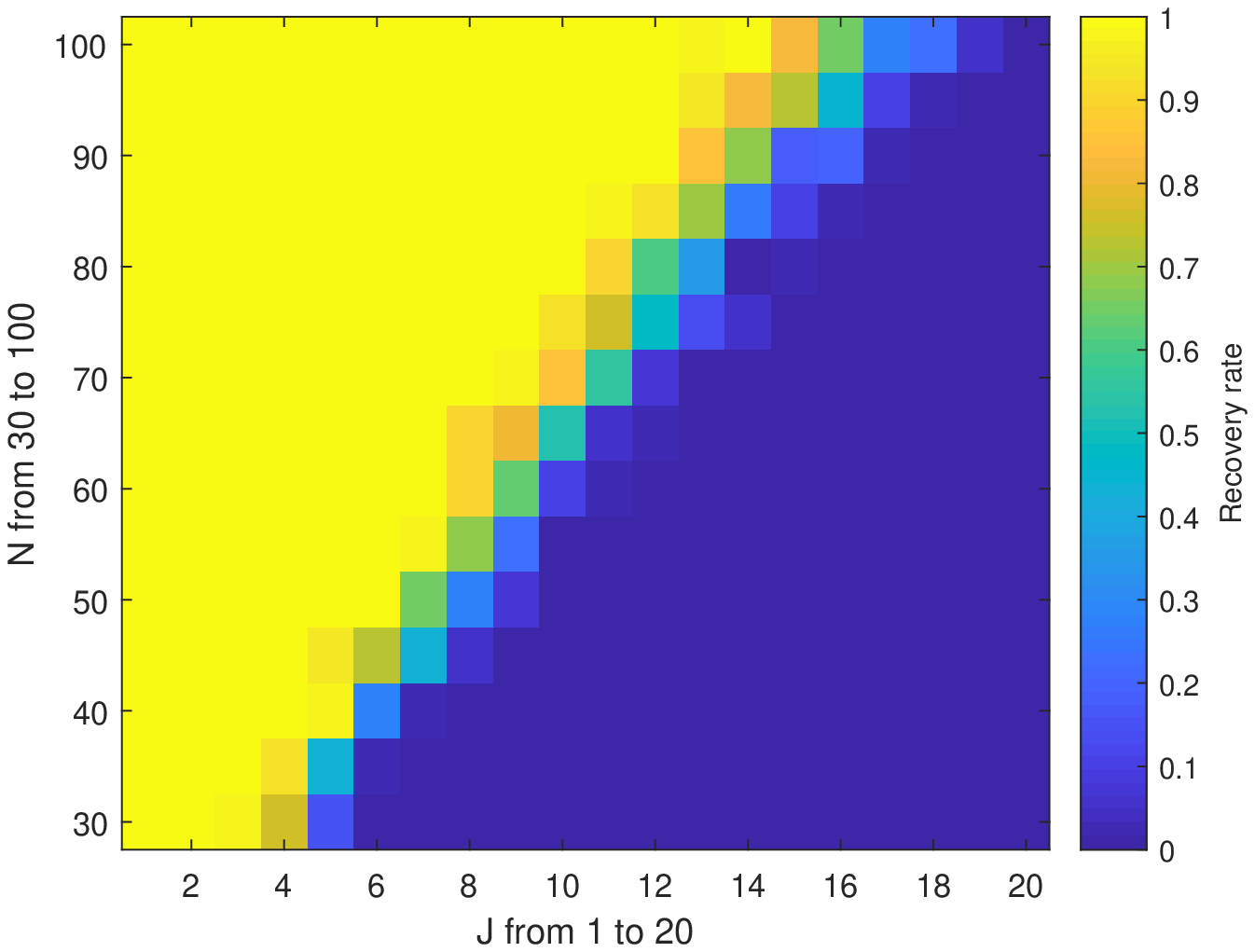}
\end{center}
   \caption{The nearly linear relation between the dimension of the observed signal, $N$, and the number of committed atoms, $J$, in terms of the success recovery rate when $\A$ is a random Fourier matrix. }
\label{fig:KJsepFourierJ}
\end{figure}

\begin{figure}[t]
\begin{center}
   \includegraphics[width=0.8\linewidth]{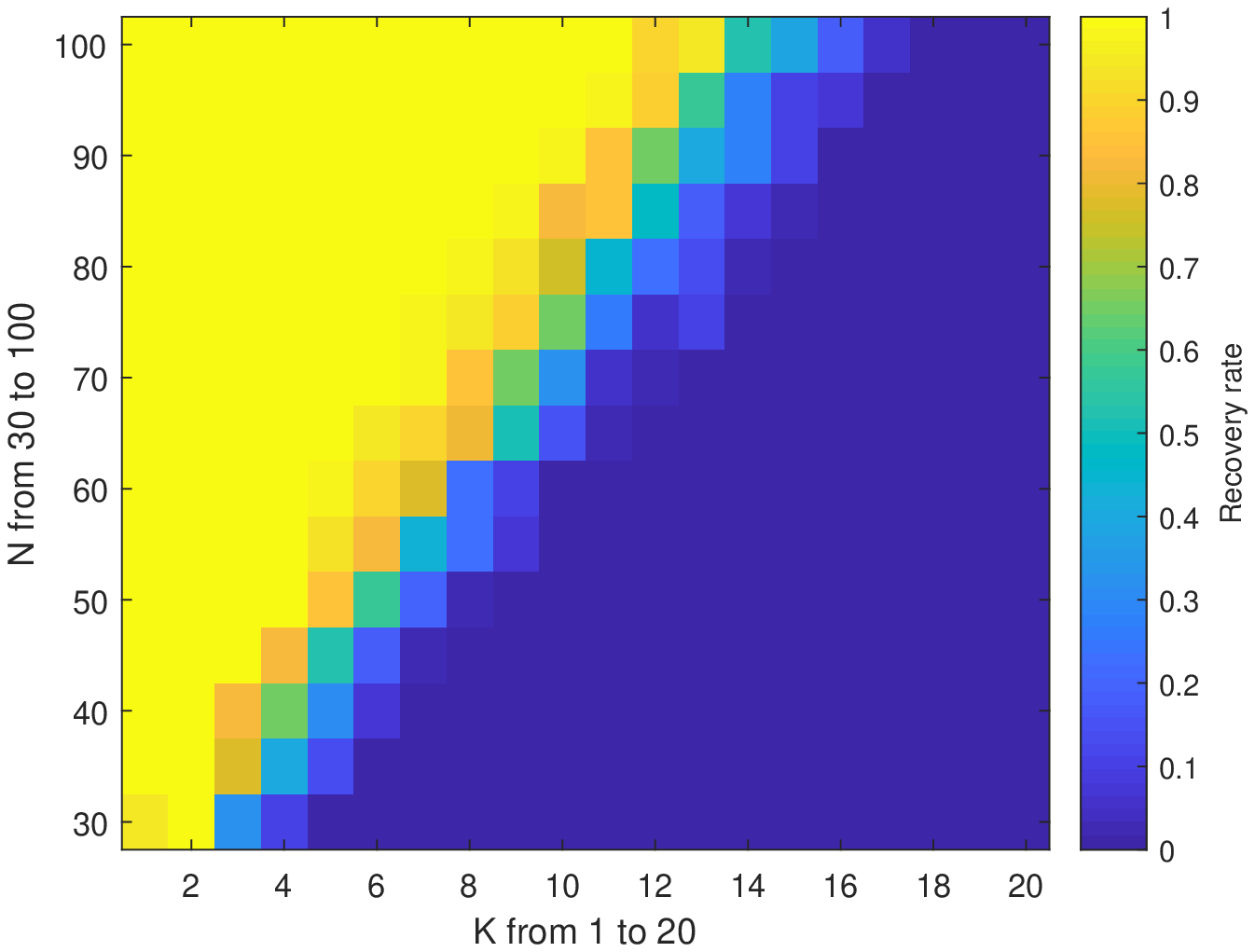}
\end{center}
   \caption{The nearly linear relation between the dimension of the observed signal, $N$, and the subspace dimension, $K$, in terms of the success recovery rate when $\A$ is a random Gaussian matrix. }
\label{fig:KJsepGaussianK}
\end{figure}

\begin{figure}[t]
\begin{center}
   \includegraphics[width=0.8\linewidth]{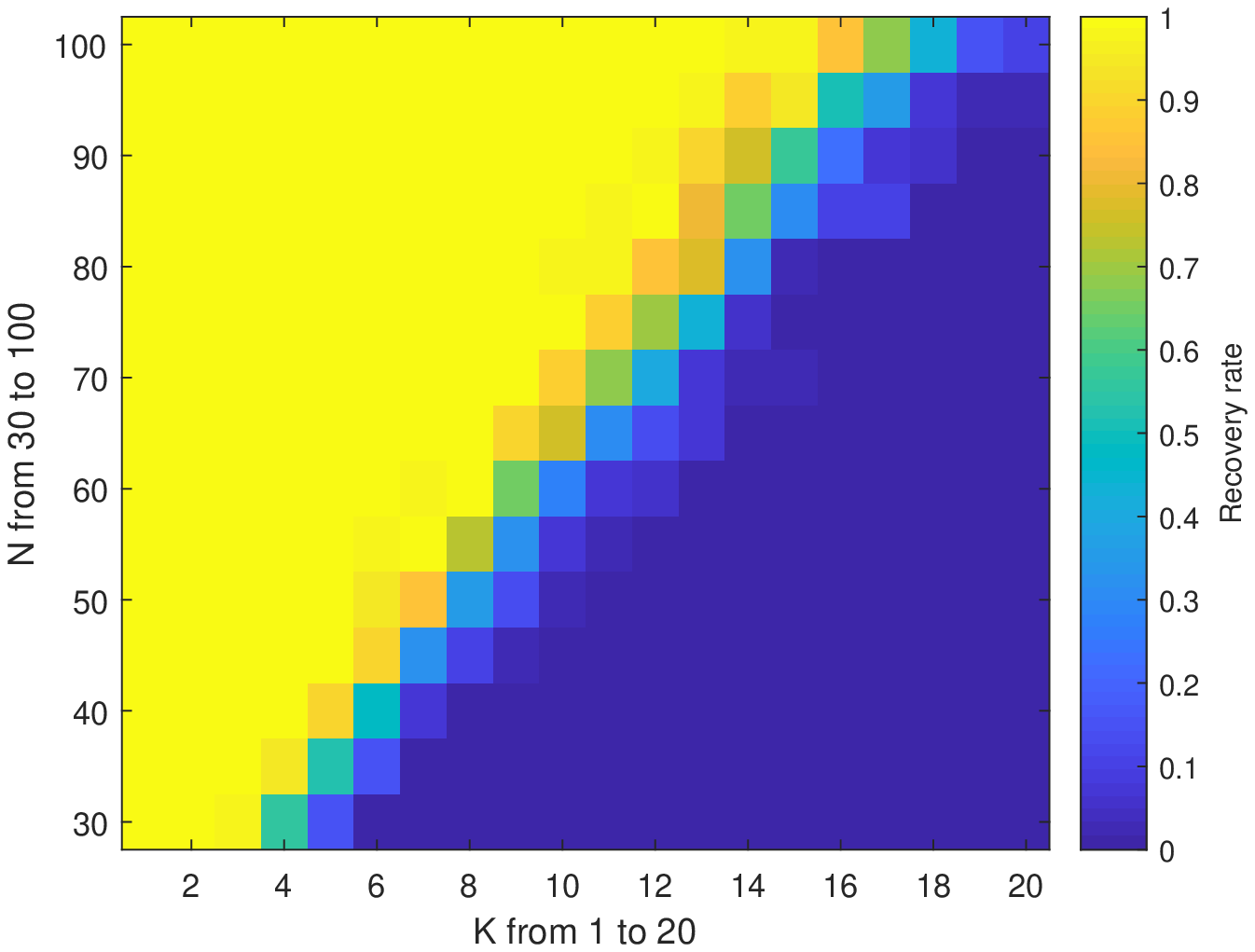}
\end{center}
   \caption{The nearly linear relation between the dimension of the observed signal, $N$, and the subspace dimension, $K$, in terms of the success recovery rate when $\A$ is a random Fourier matrix. }
\label{fig:KJsepFourierK}
\end{figure}

\subsection{The Recovery Error Bound with Noisy Measurement}

To test the noisy case, we set $M=200$, $K=J=5$, and $N=100$, and we let $\y=\L(\X_0)+\n$ with $||\n||_2\leq\eta$. Theorem \ref{maintheom2} gives a recovery guarantee of the form $||\hat{\X}-\X_0||_F\leq C\cdot \eta$ for a constant $C$ . Therefore, after dividing both sides by $||\X_0||_F$, setting $||\n||_2=\eta$ and changing the units to decibels (dB), we obtain
\begin{equation}
\begin{aligned}
20\log_{10}\left(\frac{||\hat{\X}-\X_0||_F}{||\X_0||_F}\right)\leq 20\log_{10}&\left(\frac{||\n||_2}{||\X_0||_F}\right)+\\
& \qquad\qquad 20\log_{10}(C).
\label{noisysim}
\end{aligned}
\end{equation}
We call $20\log_{10}\left(\frac{||\hat{\X}-\X_0||_F}{||\X_0||_F}\right)$ the relative error in dB and $20\log_{10}\left(\frac{||\n||_2}{||\X_0||_F}\right)$ the noise-to-signal ratio in dB. To examine the linear relation between the relative error and the noise-to-signal ratio in equation (\ref{noisysim}), we sample the real and complex components of the noise vector $\n$ independently from a standard Gaussian distribution and scale $||\n||_2$ to attain different noise-to-signal ratios. Similar to the previous plots, 40 independent simulations are run for each noise-to-signal ratio and the range of the standard deviation and mean (computed before transforming to dB) of the relative error in dB are recorded in Figs. \ref{fig:RelativeErrorGaussian} and \ref{fig:RelativeErrorFourier}. The dashed lines show the theoretical error bound from Theorem \ref{maintheom2} which are drawn by substituting the constants derived in Section \ref{noisyC1} and \ref{noisyC2} and the system parameters into equations (\ref{gnoisy}) and (\ref{fnoisy}). The slope of each dashed line are 1. We observe that when noise-to-signal ratio is smaller than 0 dB, the relative error scales linearly with respect to the noise-to-signal ratio with slope 1 for both random Gaussian and Fourier dictionaries. This confirms that $||\hat{\X}-\X_0||_F$ grows linearly with respect to $\eta$ in Theorem \ref{maintheom2}. Moreover, if the noise dominates the observed signal, solving the problem \eqref{noisy} results in $\hat{\X}=0$ and the relative error becomes 0 dB.

\begin{figure}[t]
\begin{center}
   \includegraphics[width=0.8\linewidth]{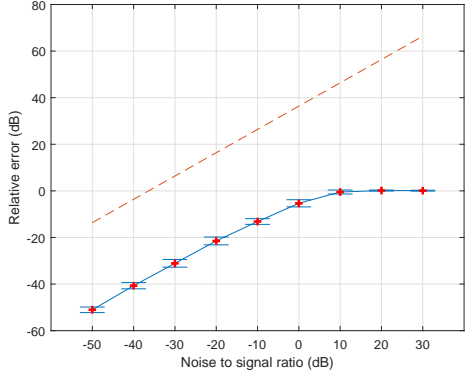}
\end{center}
   \caption{The relation between the relative error (dB) and noise-to-signal ratio (dB) when $\A$ is a random Gaussian matrix. The blue horizontal sticks and red plus sign indicate the range of the standard deviation and the mean of the relative error (dB) respectively given a specific noise-to-signal ratio (dB). The dashed line is the theoretical error bound from Theorem \ref{maintheom2}. }
\label{fig:RelativeErrorGaussian}
\end{figure}

\begin{figure}[t]
\begin{center}
   \includegraphics[width=0.8\linewidth]{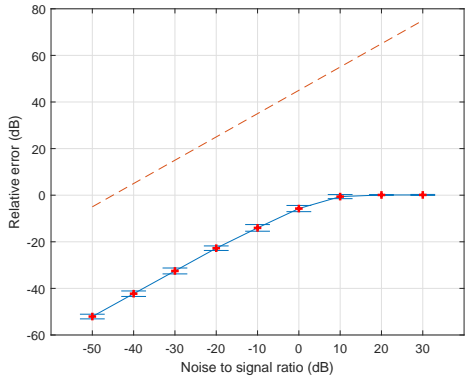}
\end{center}
   \caption{The relation between the relative error (dB) and noise-to-signal ratio (dB) when $\A$ is a random Fourier matrix. The blue horizontal sticks and red plus sign indicate the range of the standard deviation and the mean of the relative error (dB) respectively given a specific noise-to-signal ratio (dB). The dashed line is the theoretical error bound from Theorem \ref{maintheom2}.}
\label{fig:RelativeErrorFourier}
\end{figure}

\subsection{Direction of Arrival Estimation}

In this section, we apply the proposed signal model to the direction of arrival estimation problem introduced in Section \ref{DOA}. Note that there exits thousands of different subspaces that the complex calibration could live in. To give a concrete example and compare to the related work, we adopt the setting from \cite{ling2015self} where the calibration subspace $\B\in\C^{N\times K}$ is modeled by the first $K$ columns of the normalized DFT matrix $\frac{1}{\sqrt{N}}\F\in \C^{N\times N}$. The entries of $\h_j$ are sampled independently from the standard normal distribution and $\h_j$ is normalized to have unit norm. Moreover, we set $M=181$ and discretize the direction of arrival into $\theta_j=\{0,1,\cdots,180\}$ degrees. When the distance between array elements is half of the operating wavelength, we can obtain $\A$ by substituting $d = \frac{\lambda}{2}$ and $\theta_j$ into $\a(\theta_j)$ defined in Section \ref{DOA}. Furthermore, we set $N=50$ and $K=J=5$ where the directions of arrival of the 5 sources are $\{67,75, 92, 127,133\}$ degrees and the signal magnitudes are sampled independently from the uniform distribution on $[0,1]$. The real and imaginary parts of the noise vector are independent random Gaussian vectors with $0$ mean and identity covariance matrix. $SNR=30$ dB. By solving the $\ell_{2,1}$ norm minimization problem in \eqref{noisy}, the index of the nonzero column in the solution $\hat{\X}$ indicates the direction of arrival and the norm of the nonzero column indicates the signal strength. The result is recorded in Fig .\ref{fig:DOA} (a). As a comparison, we also apply the Sparselift method proposed in \cite{ling2015self} to this problem, which assumes $\D_j$ for all $j$ are the same and solves an $\ell_1$ norm minimization problem. The result is recorded in Fig .\ref{fig:DOA} (b).

\begin{figure}[h]
   \centering
   \subfloat[][The proposed method.]{\includegraphics[width=.4\textwidth]{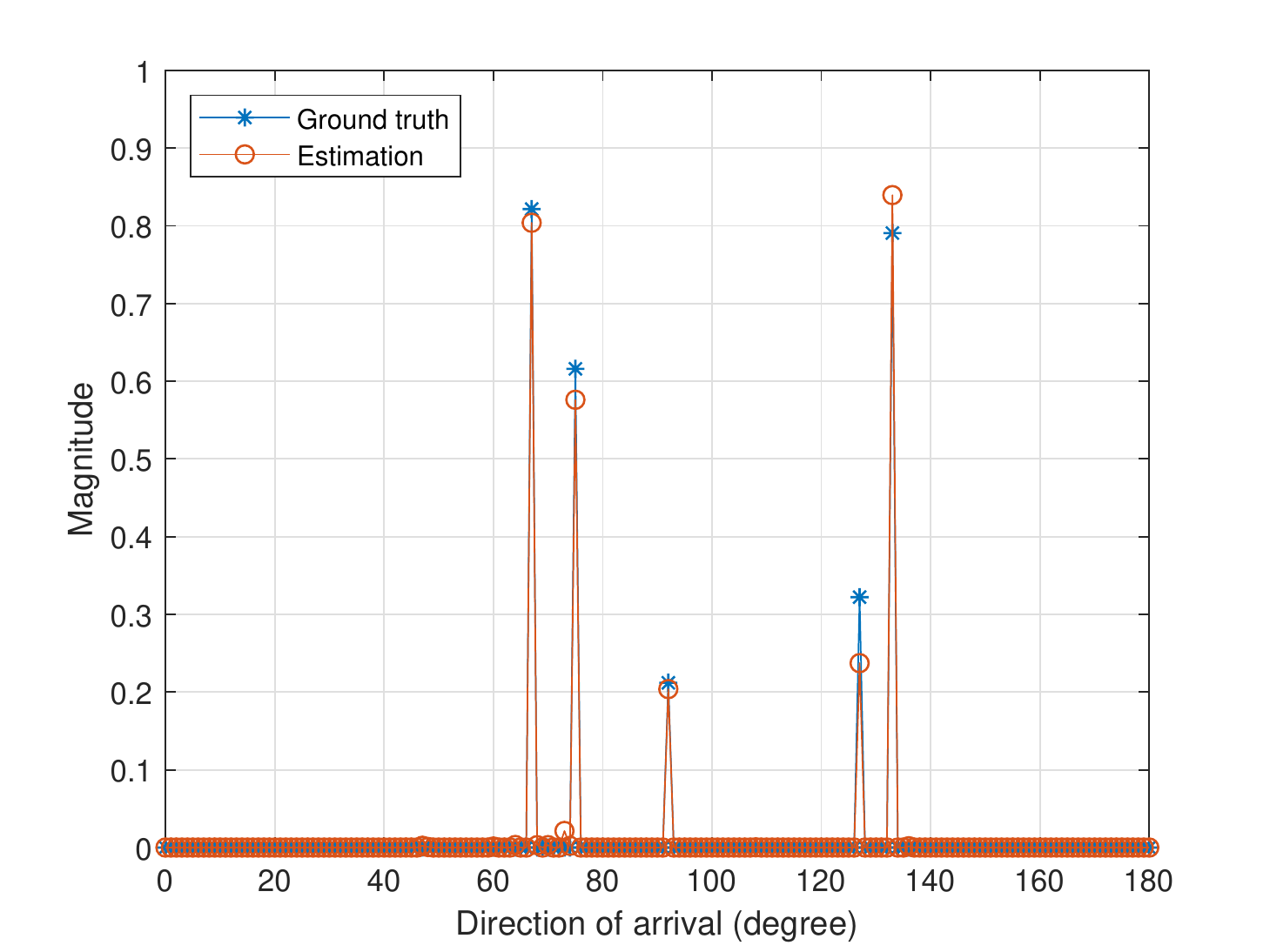}}
\\
      \subfloat[][The Sparselift method using $\ell_1$ minimization \cite{ling2015self}.]{\includegraphics[width=.4\textwidth]{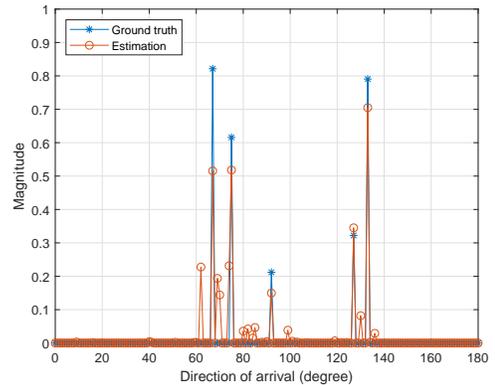}}
   \caption{The direction of arrival (DOA) estimation. (a) The estimated directions of arrival by solving the $\ell_{2,1}$ norm minimization in \eqref{noisy}. (b) The result by applying the Sparselift method using $\ell_1$ minimization proposed in \cite{ling2015self}.  }
   \label{fig:DOA}
\end{figure}

\subsection{Single Molecule Imaging}

Furthermore, we apply the proposed signal model to the single molecule imaging described in Section \ref{smi}. All data comes from the Single-Molecule Localization Microscopy grand challenge organized by ISBI \cite{ISBI2013} which contains 12,000 low-resolution frames. Each low-resolution frame is 64 pixel $\times$ 64 pixel with pixel size 100 nm $\times$ 100 nm, so that $N=64\times64=4096$. A typical, observed frame is shown in Fig. \ref{fig:PSF} (a). Superimposing all the observed frames leads to the low-resolution structure in Fig. \ref{fig:PSF} (b). The target of this experiment is to recover the high resolution image of size 320 pixel $\times$ 320 pixel, which implies that $M=320\times320=102400$, whose pixel is of size 20 nm $\times$ 20 nm. In addition, according to the statistic of the dataset, the number of activated fluorophores in each frame is less or equal to $J=17$ and we use the Gaussian point spread functions to approximate the point spread functions of the microscope. By implementing the SVD on the Gaussian point spread functions with different variances, we obtain a $K=3$ dimension subspace that point spread functions live in. Then by solving an $\ell_{2,1}$ norm regularized least square minimization problem on each low-resolution frame, we get totally 12,000 high resolution images and superimposing all the high resolution images results in the super-resolution output recorded in Fig. \ref{fig:PSF} (c).

\begin{figure}[h]
   \centering
   \subfloat[][An observed frame.]{\includegraphics[width=.21\textwidth]{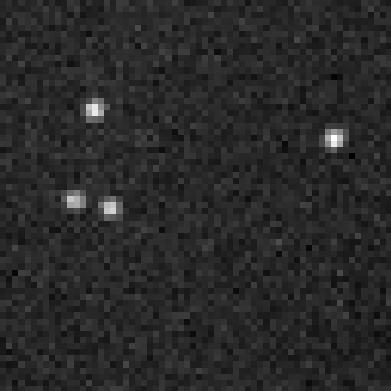}}
\\
      \subfloat[][The low-resolution structure.]{\includegraphics[width=.21\textwidth]{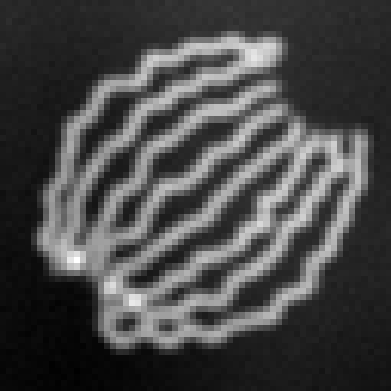}}
   \quad
   \subfloat[][The super-resolution output.]{\includegraphics[width=.21\textwidth]{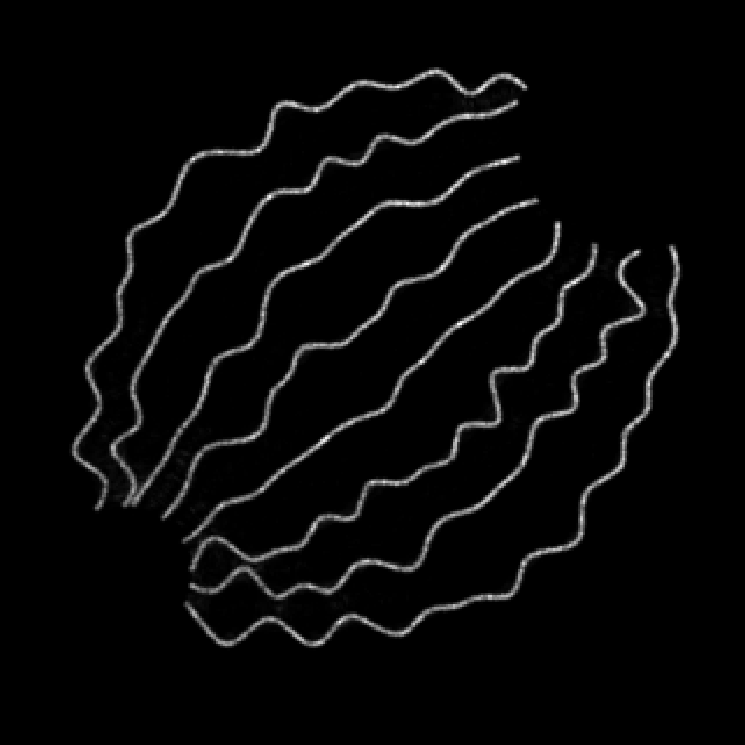}}
   \caption{The single molecule imaging. (a) The size of observed frame is 64 pixel$\times$ 64 pixel and each pixel is of size 100 nm$\times$ 100 nm. (b) Superposition of all observed frames. (d) Superposition of all recovered super-resolution images. The recovered image is of size 320 pixel$\times$ 320 pixel with pixel size 20 nm$\times$ 20 nm.}
   \label{fig:PSF}
\end{figure}

\section{Conclusion}
\label{sec:conclusion}

In this paper, we introduce the generalized sparse recovery and blind demodulation model and achieve sparse recovery and blind demodulation simultaneously.
Under the assumption that the modulating waveforms live in a known common subspace, we employ the lifting technique and recast this problem as the recovery of a column-wise sparse matrix from structured linear measurements. In this framework, we accomplish sparse recovery and blind demodulation simultaneously by minimizing the induced atomic norm, which in this problem corresponds to $\ell_{2,1}$ norm minimization. In the noiseless case, we derive near optimal sampling complexity that is proportional to the number of degrees of freedom, and in the noisy case we bound the recovery error of the structured matrix. Numerical simulations support our theoretical results. In addition to extending the class of dictionaries we have considered, an interesting future direction would be to relax the constraint that each $\D_j$ is diagonal while preserving the low-dimensional subspace assumption.

\section*{Acknowledgment}
This work was supported by NSF grant CCF$-1704204$. 



%
\bibliographystyle{ieeetr}
\bibliography{test}

\begin{thebibliography}{10}

\bibitem{xie2019sparse}
Y.~Xie, M.~B. Wakin, and G.~Tang, ``Sparse recovery and non-stationary blind
  demodulation,'' in {\em ICASSP 2019-2019 IEEE International Conference on
  Acoustics, Speech and Signal Processing (ICASSP)}, pp.~5566--5570, IEEE,
  2019.

\bibitem{lustig2008compressed}
M.~Lustig, D.~L. Donoho, J.~M. Santos, and J.~M. Pauly, ``Compressed sensing
  mri,'' {\em IEEE Signal Processing Magazine}, vol.~25, no.~2, pp.~72--82,
  2008.

\bibitem{herrmann2008non}
F.~J. Herrmann and G.~Hennenfent, ``Non-parametric seismic data recovery with
  curvelet frames,'' {\em Geophysical Journal International}, vol.~173, no.~1,
  pp.~233--248, 2008.

\bibitem{pudlewski2010performance}
S.~Pudlewski and T.~Melodia, ``On the performance of compressive video
  streaming for wireless multimedia sensor networks,'' in {\em 2010 IEEE
  International Conference on Communications (ICC)}, pp.~1--5, IEEE, 2010.

\bibitem{zhang2009spatio}
Y.~Zhang, M.~Roughan, W.~Willinger, and L.~Qiu, ``Spatio-temporal compressive
  sensing and internet traffic matrices,'' in {\em ACM SIGCOMM Computer
  Communication Review}, vol.~39, pp.~267--278, ACM, 2009.

\bibitem{goldsmith2005wireless}
A.~Goldsmith, {\em Wireless communications}.
\newblock Cambridge university press, 2005.

\bibitem{ahmed2014blind}
A.~Ahmed, B.~Recht, and J.~Romberg, ``Blind deconvolution using convex
  programming,'' {\em IEEE Transactions on Information Theory}, vol.~60, no.~3,
  pp.~1711--1732, 2014.

\bibitem{yang2016super}
D.~Yang, G.~Tang, and M.~B. Wakin, ``Super-resolution of complex exponentials
  from modulations with unknown waveforms,'' {\em IEEE Transactions on
  Information Theory}, vol.~62, no.~10, pp.~5809--5830, 2016.

\bibitem{ling2015self}
S.~Ling and T.~Strohmer, ``Self-calibration and biconvex compressive sensing,''
  {\em Inverse Problems}, vol.~31, no.~11, p.~115002, 2015.

\bibitem{chandrasekaran2012convex}
V.~Chandrasekaran, B.~Recht, P.~A. Parrilo, and A.~S. Willsky, ``The convex
  geometry of linear inverse problems,'' {\em Foundations of Computational
  Mathematics}, vol.~12, no.~6, pp.~805--849, 2012.

\bibitem{xie2017radar}
Y.~Xie, S.~Li, G.~Tang, and M.~B. Wakin, ``Radar signal demixing via convex
  optimization,'' in {\em 2017 22nd International Conference on Digital Signal
  Processing (DSP)}, pp.~1--5, IEEE, 2017.

\bibitem{ling2017blind}
S.~Ling and T.~Strohmer, ``Blind deconvolution meets blind demixing: Algorithms
  and performance bounds,'' {\em IEEE Transactions on Information Theory},
  vol.~63, no.~7, pp.~4497--4520, 2017.

\bibitem{chi2016guaranteed}
Y.~Chi, ``Guaranteed blind sparse spikes deconvolution via lifting and convex
  optimization.,'' {\em J. Sel. Topics Signal Processing}, vol.~10, no.~4,
  pp.~782--794, 2016.

\bibitem{malioutov2005sparse}
D.~Malioutov, M.~Cetin, and A.~S. Willsky, ``A sparse signal reconstruction
  perspective for source localization with sensor arrays,'' {\em IEEE
  transactions on signal processing}, vol.~53, no.~8, pp.~3010--3022, 2005.

\bibitem{friedlander2014bilinear}
B.~Friedlander and T.~Strohmer, ``Bilinear compressed sensing for array
  self-calibration,'' in {\em 2014 48th Asilomar Conference on Signals, Systems
  and Computers}, pp.~363--367, IEEE, 2014.

\bibitem{ward1998space}
J.~Ward, ``Space-time adaptive processing for airborne radar,'' 1998.

\bibitem{tang2013sparse}
G.~Tang, B.~N. Bhaskar, and B.~Recht, ``Sparse recovery over continuous
  dictionaries-just discretize,'' in {\em 2013 Asilomar Conference on Signals,
  Systems and Computers}, pp.~1043--1047, IEEE, 2013.

\bibitem{rust2006sub}
M.~J. Rust, M.~Bates, and X.~Zhuang, ``Sub-diffraction-limit imaging by
  stochastic optical reconstruction microscopy (storm),'' {\em Nature methods},
  vol.~3, no.~10, p.~793, 2006.

\bibitem{cai2016robust}
J.-F. Cai, X.~Qu, W.~Xu, and G.-B. Ye, ``Robust recovery of complex exponential
  signals from random gaussian projections via low rank hankel matrix
  reconstruction,'' {\em Applied and computational harmonic analysis}, vol.~41,
  no.~2, pp.~470--490, 2016.

\bibitem{cotter2005sparse}
S.~F. Cotter, B.~D. Rao, K.~Engan, and K.~Kreutz-Delgado, ``Sparse solutions to
  linear inverse problems with multiple measurement vectors,'' {\em IEEE
  Transactions on Signal Processing}, vol.~53, no.~7, pp.~2477--2488, 2005.

\bibitem{chen2005sparse}
J.~Chen and X.~Huo, ``Sparse representations for multiple measurement vectors
  (mmv) in an over-complete dictionary,'' in {\em 2005 IEEE International
  Conference on Acoustics, Speech, and Signal Processing (ICASSP)}, vol.~4,
  pp.~iv--257, IEEE, 2005.

\bibitem{hung2017low}
C.-Y. Hung and M.~Kaveh, ``Low rank matrix recovery for joint array
  self-calibration and sparse model doa estimation,'' {\em arXiv preprint
  arXiv:1712.05890}, 2017.

\bibitem{eldar2017sensor}
Y.~C. Eldar, W.~Liao, and S.~Tang, ``Sensor calibration for off-the-grid
  spectral estimation,'' {\em arXiv preprint arXiv:1707.03378}, 2017.

\bibitem{flinth2018sparse}
A.~Flinth, ``Sparse blind deconvolution and demixing through ℓ 1,
  2-minimization,'' {\em Advances in Computational Mathematics}, vol.~44,
  no.~1, pp.~1--21, 2018.

\bibitem{candes2011probabilistic}
E.~J. Candes and Y.~Plan, ``A probabilistic and ripless theory of compressed
  sensing,'' {\em IEEE Transactions on Information Theory}, vol.~57, no.~11,
  pp.~7235--7254, 2011.

\bibitem{foucart2013mathematical}
S.~Foucart and H.~Rauhut, {\em A mathematical introduction to compressive
  sensing}.
\newblock Birkh{\"a}user Basel, 2013.

\bibitem{vaiter2015model}
S.~Vaiter, M.~Golbabaee, J.~Fadili, and G.~Peyr{\'e}, ``Model selection with
  low complexity priors,'' {\em Information and Inference: A Journal of the
  IMA}, vol.~4, no.~3, pp.~230--287, 2015.

\bibitem{hsu2012tail}
D.~Hsu, S.~Kakade, T.~Zhang, {\em et~al.}, ``A tail inequality for quadratic
  forms of subgaussian random vectors,'' {\em Electronic Communications in
  Probability}, vol.~17, 2012.

\bibitem{gross2011recovering}
D.~Gross, ``Recovering low-rank matrices from few coefficients in any basis,''
  {\em IEEE Transactions on Information Theory}, vol.~57, no.~3,
  pp.~1548--1566, 2011.

\bibitem{tropp2012user}
J.~A. Tropp, ``User-friendly tail bounds for sums of random matrices,'' {\em
  Foundations of Computational Mathematics}, vol.~12, no.~4, pp.~389--434,
  2012.

\bibitem{grant2008cvx}
M.~Grant, S.~Boyd, and Y.~Ye, ``Cvx: Matlab software for disciplined convex
  programming,'' 2008.

\bibitem{ISBI2013}
EPFL-Biomedical-Imaging-Group, ``Single-molecule localization microscopy,''
  {\em http://bigwww.epfl.ch/smlm/}, 2013.

\end{thebibliography}

%




\end{document}